\documentclass[12pt,titlepage,a4paper]{article}
\usepackage{amsmath,amssymb,amsthm,bm,graphicx,hyperref}
\usepackage{natbib}
\usepackage[margin=1in]{geometry}
\usepackage{algorithm}
\usepackage{algpseudocode}
\usepackage{comment}
\usepackage{textcomp}

\hypersetup{colorlinks=true,linkcolor=blue,citecolor=blue}

\newtheorem{theorem}{Theorem}[section]
\newtheorem{lemma}[theorem]{Lemma}
\newtheorem{assumption}[theorem]{Assumption}

\newtheorem{remark}[theorem]{Remark}

\newtheorem{conjecture}[theorem]{Conjecture}

\DeclareMathOperator{\diag}{diag}
\DeclareMathOperator{\rank}{rank}
\DeclareMathOperator{\E}{\mathbb{E}}

\newcommand{\R}{\mathbb{R}}
\newcommand{\N}{\mathcal{N}}
\newcommand{\IG}{\mathcal{IG}}
\newcommand{\DP}{\mathcal{DP}}

\begin{document}

\title{\textbf{Bayesian Nonparametric Factor Analysis via Marginalized Dirichlet Process Column Clustering with Spike-and-Slab Sparsity}}

\author{
  Durba Bhattacharya \\
  St. Xavier's College (Autonomous), Kolkata
  \and
  Sourabh Bhattacharya \\
  Indian Statistical Institute \\
  \texttt{bhsourabh@gmail.com} \thanks{Corresponding author.}
}
\date{}
%\date{\today}
\maketitle

\begin{abstract}
We introduce a Bayesian nonparametric factor model for $p$-dimensional
observations that simultaneously infers the number of factors $q$,
induces row-wise sparsity in the loadings, and identifies redundant
dictionary elements through exact clustering. The key innovation is the
placement of a Dirichlet process prior directly on the columns of an
overcomplete loading matrix, fully marginalized to yield an exact
P\'olya urn scheme, thereby avoiding stick-breaking representations,
truncations of the infinite-dimensional prior, and auxiliary weight
variables. A spike-and-slab base measure allows individual loadings, and
entire factors, to be exactly zero, delivering interpretability and
parsimony. We contrast the construction with the cumulative shrinkage
process (CUSP), the multiplicative gamma process (MGP), and the beta
process, showing that the proposed model is the first to combine exact
zeros, exchangeability over candidate columns, and exact merging of
redundant columns within a single marginalized Dirichlet process. An
exact Gibbs sampler updates cluster atoms and assignments directly,
exploits the diagonal structure of the idiosyncratic variance matrix;
a canonical relabeling scheme resolves label switching, and a parallel
C/MPI implementation is described. Theoretically, we establish the
posterior contraction rate $\sqrt{M s_0 \log n / n}$ for the covariance
matrix, where $M$ is the finite upper bound on the number of candidate
columns, $s_0$ the maximum number of non-zero entries per column, and
$n$ the sample size. In the fixed-dictionary setting we obtain the
minimax optimal rate $\sqrt{s_0 \log n / n}$ and prove underfitting
consistency for the number of factors $q$; the complementary overfitting
direction is stated as an open conjecture, and we identify the structural
obstruction that prevents the standard Bayes-factor argument from closing
it. The spike-and-slab component is shown to be indispensable for
optimality: without it the effective dimension scales as $pM$, yielding a
slower rate. Simulation studies in moderate and higher dimensions show
that the method is the only fully adaptive approach that recovers the
true rank in both configurations, and that it attains the smallest
covariance, loading, and signal-reconstruction errors, outperforming even
an oracle baseline that is given the true number of factors $q_0$. On the
van 't Veer breast cancer dataset ($n = 97$, $p = 1213$), the posterior
concentrates on eight biologically interpretable programmes. Seven of
them pass an independent within-atom coherence check, and two also pass
Bonferroni-corrected Hallmark enrichment; the prognosis associations are
consistent with their biological interpretation. One atom (JAK/STAT) is
identified as a weaker signal that fails both the coherence and the
enrichment checks.
\\[2mm]
\textbf{Keywords:} Bayesian nonparametrics; Dirichlet process; factor
analysis; spike-and-slab; Gibbs sampling; posterior contraction; gene
expression.
\end{abstract}

\tableofcontents

\newpage
\section{Introduction}
\label{sec:intro}
Factor analysis is a cornerstone of multivariate statistics. Given $p$-dimensional observed responses $Y_k \in \R^p$ for $k=1,\dots,n$, the model is
\begin{equation}
\label{eq:model}
Y_k = \mu + F X_k + U_k, \quad U_k \sim \N_p(0, \Psi), \quad \Psi = \diag(\psi_1,\dots,\psi_p),
\end{equation}
where $F \in \R^{p \times q}$ is the loading matrix, and $X_k \in \R^q$ are latent scores. A central challenge is the determination of the latent dimension $q$, the number of factors, which governs the complexity of the covariance structure $\Sigma = F F^T + \Psi$.

A vast literature has addressed this problem from a Bayesian perspective. One prominent approach places shrinkage priors on the columns of an overcomplete loading matrix with infinitely many candidate columns. The multiplicative gamma process (MGP) prior of \citet{bhattacharya2011sparse} is the canonical example: it places a prior on an infinite sequence of loading columns and shrinks later columns increasingly toward zero, so that the number of active factors is effectively truncated in a data-adaptive manner. In practice, MGP is implemented via an adaptive Gibbs sampler that automatically selects a finite number of active columns; the infinite-dimensional prior itself does not impose any fixed upper 
bound. While computationally convenient, this continuous column-wise shrinkage does not produce exact zero columns, meaning that spurious factors continue to contribute a small, non-negligible variance. Similarly, the Dirichlet--Laplace prior \citep{bhattacharya2015dirichlet} and other global-local shrinkage rules \citep{polson2010shrink,carvalho2010horseshoe} suffer from the same limitation. The beta process prior of \citet{paisley2009} provides a nonparametric feature allocation mechanism, but it is fundamentally a latent feature model where each observation possesses a subset of features; it does not directly cluster the columns of a loading matrix.

An alternative line of work employs spike-and-slab priors on individual loadings to induce exact zeros, thereby achieving parsimony and interpretability \citep{west2003bayesian,lopes2004bayesian}. However, these approaches typically treat each column independently and do not exploit potential redundancies among the columns of the loading matrix. If two candidate columns are nearly identical, standard shrinkage or spike-and-slab priors will shrink both to similar but distinct values, failing to recognize their equivalence and thus missing an opportunity for a more parsimonious representation.

In this article, we propose a novel Bayesian nonparametric factor model that addresses these shortcomings in a unified manner. Our proposal is built upon two orthogonal mechanisms. First, we place a Dirichlet process prior directly on a finite dictionary of $M$ candidate columns $f_1,\dots,f_M$ of the loading matrix. Crucially, we fully marginalize the random measure $G$, obtaining the exact P\'olya urn predictive scheme as the prior for the finite sequence of columns. This marginalization entirely eliminates the need for stick-breaking representations, finite truncations, or auxiliary weight variables, leading to a conceptually clean and computationally efficient model. 
That our model is infinite-dimensional given $G$, follows from the same argument as provided in \citet{Sabya2021}.
The P\'olya urn induces a random partition of the $M$ columns, where columns assigned to the same cluster are exactly equal. This directly discovers groups of redundant dictionary elements and merges them into a single representative factor. The number of occupied clusters $K$ is random and automatically inferred, providing a nonparametric estimate of the effective number of factors.
Note that introducing an upper bound on the number of columns is akin to
the concept of an upper bound on the number of Dirichlet process-driven
mixture components; see, for example, \citet{Bhatta2008}, \citet{Bhatta2009},
\citet{Sabya2011}, \citet{Sabya2013}, and \citet{Maj2013}.

Second, we specify the base measure $G_0$ of the Dirichlet process as a product of independent spike-and-slab distributions over the $p$ coordinates. This choice is not merely a technical detail but is essential for achieving both interpretability and optimal theoretical properties. The spike component allows an entire atom (and thus an entire factor) to be exactly the zero vector $\mathbf{0} \in \R^p$, which permits the model to delete spurious factors entirely. Furthermore, within a non-zero atom, the slab component allows non-zero loadings, while the spike forces exact zeros for variables that are irrelevant to that factor. This provides the local sparsity that is widely regarded as crucial for interpretability in high-dimensional applications. 
Importantly, we demonstrate that the spike-and-slab base measure is theoretically indispensable for attaining the optimal contraction rate: without the spike, the effective dimension of the parameter space would scale as \(pM\), resulting in a slower, suboptimal rate.
The combination of DP-induced column clustering and spike-and-slab-induced local sparsity yields a model that is both parsimonious and interpretable.

A critical aspect of our contribution is a detailed differentiation from the most closely related existing work, namely the cumulative shrinkage process (CUSP) prior of \citet{legramanti2020}. The CUSP prior also employs a Dirichlet process to construct a shrinkage prior for factor models. However, the mechanism is fundamentally distinct. In the CUSP framework, the stick-breaking representation of the Dirichlet process is used to model the inclusion probability of each individual loading entry, so that the probability of being zero increases stochastically with the column index. Thus, the DP is used at the level of entry-wise spike probabilities. In stark contrast, our proposal uses the DP at the level of the entire column vectors. The P\'olya urn scheme clusters whole columns that are exactly equal; it does not impose any ordering on the columns and does not shrink entries toward zero based on their index. This vector-level clustering is a genuinely different statistical operation, enabling the discovery of exact redundancies among dictionary elements, a feature that is entirely absent from the CUSP framework. We elaborate on this distinction in Section~\ref{sec:related}.

The methodological and theoretical contributions are accompanied by an extensive empirical validation. In Section~\ref{sec:simulation} we report two complementary simulation studies. Study~A is a moderate-dimensional setting with $n=500$, $p=50$, and $q_0=5$ true factors. Study~B is a higher-dimensional and noisier setting with $n=500$, $p=100$, and $q_0=10$ true factors. Both studies compare the proposed DP-SpikeSlab model with three established competitors: the multiplicative gamma process (MGP) of \citet{bhattacharya2011sparse}, the cumulative shrinkage process (CUSP) of \citet{legramanti2020}, and an oracle fixed-$q$ spike-and-slab model that is given the true number of factors. Across both studies, the proposed method is the only fully adaptive method that recovers the true number of factors exactly. In Study~A it returns $\hat q=5$, while MGP over-estimates the rank at $13$ and CUSP under-estimates it at $4$. In Study~B it returns $\hat q=10$, while MGP over-estimates the rank at $22$ and CUSP returns the correct rank but with substantially worse estimation error. On all three estimation metrics---Frobenius error of the posterior-mean covariance, Procrustes-aligned loading error, and test-set signal RMSE---the proposed method is best in both studies, followed by the oracle fixed-$q$ baseline, then MGP, and then CUSP. Our C/MPI implementation for DP-SpikeSlab is slower than the single-threaded R competitors in Study~A, but becomes the fastest method in the higher-dimensional Study~B because its cost scales approximately linearly in $p$ while the R implementations slow down super-linearly.

In Section~\ref{sec:breast} we apply the model to the breast cancer gene-expression dataset of \citet{vantveer2002gene}, distributed as \texttt{Breast\_A} in the \texttt{fabiaData} R package \citep{fabiaData}; the dataset was introduced by \citet{vantveer2002gene}. The dataset contains $n=97$ primary breast tumour samples measured on $p=1213$ genes. Without supervision and without being told the true number of factors, the model concentrates sharply on $K=8$: the posterior mode, mean, and median are all $8$, and the $95\%$ equal-tailed credible interval is the degenerate interval $[8,8]$. 
Seven of the eight recovered atoms have clear biological
interpretations: macrophage, vascular, nuclear, EMT/stroma, T-cell,
ER/luminal, and proliferation; the eighth (JAK/STAT) is a weaker
signal that does not pass the within-atom coherence check
(Section~\ref{sec:breast-atoms}).
Hallmark gene-set enrichment confirms the EMT/stroma and proliferation programmes at the Bonferroni-corrected threshold, with additional coherent T-cell enrichment at the uncorrected level. Post-hoc associations with the van 't Veer good/poor prognosis label concentrate on the ER/luminal, proliferation, T-cell, macrophage, and JAK/STAT atoms, while the vascular, EMT/stroma, and nuclear atoms show no detectable prognostic association. Diagnostic checks confirm that the posterior over partitions is genuinely concentrated at the operating scale, and independent chains recover the same eight programmes. The real-data analysis also surfaces two honest limitations: the assignment of the artificial $M$ candidate columns to the eight atoms is not identified by the data and varies across chains, and the global slab scale over-estimates marginal variances on standardised data. These limitations are definitional or calibrational rather than failures of the inference, and they point to natural extensions.

Theoretically, we establish the posterior contraction rate $\sqrt{M s_0 \log n / n}$ for the covariance matrix, where $M$ is the finite upper bound on the number of candidate columns, $s_0$ the maximum non-zero entries per column, and $n$ the sample size. In the fixed-dictionary setting, we obtain the minimax optimal rate $\sqrt{s_0 \log n / n}$, and prove underfitting consistency for the number of factors; the complementary overfitting direction is stated as an open conjecture. The analysis demonstrates that the spike-and-slab component is theoretically indispensable for achieving the optimal rate. These guarantees are consistent with the empirical behaviour observed in the simulation studies and in the breast cancer application.

The remainder of the paper is organized as follows. Section~\ref{sec:related} provides a comprehensive review of related work, explicitly contrasting our model with the CUSP prior, the multiplicative gamma process, and the beta process. Section~\ref{sec:model} defines the proposed factor model, detailing the marginalized P\'olya urn prior, the spike-and-slab base measure, and the decoupling of latent score variances. Section~\ref{sec:computation} develops an exact Gibbs sampler that directly updates the cluster assignments and atoms, exploits the diagonal structure of $\Psi$ for efficient coordinate-wise updates, resolves label switching via a canonical relabeling scheme, and discusses a parallel C/MPI implementation. Section~\ref{sec:simulation} reports the two simulation studies. Section~\ref{sec:breast} presents the breast cancer gene-expression analysis. Section~\ref{sec:theory} establishes the posterior contraction rate for the covariance matrix. Section~\ref{sec:fixedM} establishes minimax optimality in the fixed-dictionary setting, and Section~\ref{sec:rank} proves underfitting rank consistency and states the overfitting direction as an open conjecture. Section~\ref{sec:discussion} concludes with a discussion of advantages and potential extensions.

\section{Related Work and Distinction from Existing Approaches}
\label{sec:related}
Before presenting our model, it is instructive to position our contribution within the existing Bayesian factor analysis literature. We focus on four major strands: continuous shrinkage priors on columns, the cumulative shrinkage process, the beta process for feature allocation, and spike-and-slab priors on entries. For each, we state the prior explicitly and then identify precisely which aspect of our construction is not covered by it.

\subsection{Continuous Shrinkage Priors on Columns}
\label{sec:related-mgp}

Continuous shrinkage priors place a prior on the columns $f_1, f_2, \dots$ of an infinite loading matrix and shrink later columns increasingly toward zero. The multiplicative gamma process (MGP) prior of \citet{bhattacharya2011sparse} is the canonical construction. 
It introduces column-specific multiplicative scales
\begin{equation}
\label{eq:mgp}
\tau_h = \prod_{l=1}^{h} \delta_l, \qquad
\delta_1 \sim \text{Ga}(a_1, 1), \quad
\delta_l \sim \text{Ga}(a_2, 1) \;\; (l \ge 2),
\end{equation}
with $a_2$ chosen large enough that the expected log-gamma
$\psi(a_2)$ is positive, so that $\tau_h$ diverges almost surely as $h$
grows, and places conditionally independent local shrinkage priors on
the individual entries,
\begin{equation}
\label{eq:mgp-local}
f_{rh} \mid \phi_{rh}, \tau_h \sim \N\!\left(0, \frac{1}{\phi_{rh} \tau_h}\right), \qquad
\phi_{rh} \sim \text{Ga}\!\left(\frac{\nu}{2}, \frac{\nu}{2}\right).
\end{equation}
The prior variance $1/(\phi_{rh}\tau_h)$ of the loadings in column $h$
therefore correspondingly shrinks toward zero as $h$ grows.

Marginalising $\phi_{rh}$ gives a scale mixture of Gaussians that is unbounded at zero and has heavy tails. The prior on the infinite sequence is truncated in practice by an adaptive Gibbs sampler \citep{bhattacharya2011sparse} that adds and deletes columns according to a criterion on the column norms; the infinite-dimensional prior itself imposes no fixed upper bound $M$. The Dirichlet--Laplace prior \citep{bhattacharya2015dirichlet} and the horseshoe \citep{carvalho2010horseshoe} admit analogous column-level formulations.

Two structural features of this strand matter for the present work. First, \eqref{eq:mgp-local} is a continuous density: $\Pr(f_{rh} = 0) = 0$ for every $(r,h)$, so no column is ever exactly zero and no entry is ever exactly zero. Spurious columns therefore contribute a small but non-negligible amount to $\Sigma$, and the number of active factors must be extracted by thresholding the adaptive Gibbs output rather than read off the posterior. Second, the shrinkage in \eqref{eq:mgp} is a function of the column index $h$: the prior is not exchangeable over columns, and it encodes the belief that the signal is concentrated in the earliest columns. Neither feature is shared by our construction, in which the $M$ candidate columns are exchangeable under the P\'olya urn and the spike component of the base measure assigns positive probability to the event that an atom is exactly the zero vector.

\subsection{The Cumulative Shrinkage Process}
\label{sec:related-cusp}

The cumulative shrinkage process (CUSP) prior of \citet{legramanti2020} is the most closely related existing work and the distinction requires care. CUSP also uses a Dirichlet process, but through its stick-breaking representation rather than through the marginalised P\'olya urn. It defines
\begin{equation}
\label{eq:cusp-sb}
\nu_h \sim \text{Beta}(1, \alpha), \qquad
\pi_h = \sum_{l=1}^{h} \nu_l \prod_{m=1}^{l-1} (1 - \nu_m),
\end{equation}
so that $\pi_h$ is the cumulative probability of the first $h$ stick-breaking weights, and it then places an entry-wise spike-and-slab prior
\begin{equation}
\label{eq:cusp}
f_{rh} \mid \pi_h, \theta_h \;\sim\; (1 - \pi_h)\, \N(0, \theta_h) \;+\; \pi_h\, \delta_0, \qquad
\theta_h \sim \text{InvGamma}(a_\theta, b_\theta).
\end{equation}
The inclusion probabilities in \eqref{eq:cusp} increase with $h$ because the $\pi_h$ in \eqref{eq:cusp-sb} accumulate, so later columns are a priori more likely to be entirely spiked, and \citet{legramanti2020} show that the induced prior on the effective number of factors is finite almost surely.

The contrast with the present work is therefore not that CUSP fails to use a Dirichlet process, but that it uses one at a different level of the hierarchy. In \eqref{eq:cusp} the random probability $\pi_h$ governs the {\it entry-wise} probability that $f_{rh} = 0$; the DP acts on scalar inclusion probabilities. In our model the DP acts on the $p$-dimensional vectors $f_1, \dots, f_M$ themselves, and the P\'olya urn merges columns that are exactly equal. Three consequences follow. (i) CUSP induces an ordering over columns through \eqref{eq:cusp-sb}; our prior is exchangeable over the $M$ candidate columns, and the canonical relabeling of Section~\ref{sec:relabel} is needed only to fix a convention across MCMC iterations, not to respect a substantive ordering. (ii) CUSP shrinks entries toward zero as a function of the column index, whereas our spike component is applied independently within each atom and does not depend on position. (iii) CUSP has no mechanism for declaring two columns equal: two nearly identical candidate columns will both be retained with similar but distinct loadings, whereas the P\'olya urn assigns them to the same cluster with positive probability and therefore produces an exact merge. The last point is, to our knowledge, the feature that distinguishes the two constructions most sharply, and it is the one that the simulation studies of Section~\ref{sec:simulation} isolate empirically.

\subsection{The Beta Process and Feature Allocation}
\label{sec:related-bp}

The beta process prior of \citet{paisley2009} provides a nonparametric Bayesian treatment of factor analysis through latent feature allocation. In the beta process factor analysis model, each observation $k$ is assigned a binary vector $z_k \in \{0,1\}^{\infty}$ drawn from an Indian buffet process, and the loading structure is built from elementwise products of $z_k$ with a weight matrix. The construction is well suited to settings in which the scientific question is which features are active in which observation, and it yields a sparse loading matrix in which each row has a finite number of non-zero entries.

The object being modelled is, however, different from ours. The beta process operates on the space of features allocated to each observation; it does not place a prior on the columns of the loading matrix, and it has no mechanism for declaring two dictionary elements redundant. Two columns of $F$ that are nearly collinear receive no special treatment under the beta process, and the number of effective factors is not identified by clustering. Our construction is complementary: the P\'olya urn acts on the columns directly, and the resulting partition is the object from which the effective number of factors is read.

\subsection{Spike-and-Slab Priors on Entries}
\label{sec:related-ss}

A fourth strand places a two-component mixture on each individual loading,
\begin{equation}
\label{eq:ss-entry}
f_{rh} \mid \pi_0, \tau^2 \;\sim\; \pi_0\, \delta_0 + (1 - \pi_0)\, \N(0, \tau^2),
\end{equation}
independently across entries \citep{west2003bayesian,lopes2004bayesian}. This yields exact zeros and hence interpretable factor structures, and it is the closest existing analogue to the base measure we use. The difference is that in \eqref{eq:ss-entry} the mixture is applied to each entry independently, with no coupling between the entries of a column and no coupling between columns. The columns of $F$ are a priori independent, so the prior cannot concentrate on configurations in which two candidate columns coincide, and it provides no mechanism for reducing the effective number of factors beyond the entry-wise deletion of loadings. Our model retains \eqref{eq:ss-entry} as the base measure of a Dirichlet process, so that the entry-wise sparsity of \eqref{eq:ss-entry} is nested inside a vector-level clustering prior; the spike-and-slab is what makes a zero atom possible, and the P\'olya urn is what makes two non-zero atoms equal.

\subsection{Summary of the Distinctions}
\label{sec:related-summary}

Table~\ref{tab:related} collects the comparisons along the four dimensions that distinguish the constructions: the level at which the nonparametric prior acts, whether exact zeros are attainable, whether the prior is exchangeable over columns, and whether redundant columns can be merged.

\begin{table}[htbp]
\centering
\small
\caption{Comparison of the proposed model with the three principal competitors along the dimensions that matter for the present contribution. ``Level of the prior'' refers to the object on which the nonparametric or shrinkage mechanism acts.}
\label{tab:related}
\begin{tabular}{p{2.6cm}p{3.2cm}ccp{3.0cm}}
\hline
Model & Level of the prior & Exact zeros & Exchangeable & Merges redundant columns \\
\hline
MGP \citep{bhattacharya2011sparse} & Entry, scaled by column index & No & No & No \\
CUSP \citep{legramanti2020} & Entry-wise spike probability & Yes & No & No \\
Beta process \citep{paisley2009} & Feature allocation per observation & Yes & --- & No \\
Spike-and-slab \citep{west2003bayesian,lopes2004bayesian} & Entry & Yes & Yes & No \\
DP-SpikeSlab (ours) & Column vector & Yes & Yes & Yes \\
\hline
\end{tabular}
\end{table}

The final row is the contribution of this paper. Each of the four preceding rows attains some but not all of the four properties: MGP attains none of the discrete properties but has the most tractable computational structure; CUSP attains exact zeros but only through index-dependent entry-wise shrinkage; the beta process attains exact zeros and a nonparametric treatment of features but not of columns; and the entry-wise spike-and-slab attains exact zeros and exchangeability but treats columns independently. The combination of exact zeros, exchangeability over columns, and exact merging of redundant columns, all within a single marginalised Dirichlet process construction, is what our model provides.

\section{The Proposed Factor Model}
\label{sec:model}
Let $M$ be a fixed upper bound on the number of candidate columns, satisfying $M \ge q_0$ where $q_0$ is the true number of factors. We define the loading matrix $F = [f_1, \dots, f_M] \in \R^{p \times M}$. The model for the observed data is given in \eqref{eq:model}, where $X_k \in \R^M$ are the latent scores.

\subsection{The Marginalized Dirichlet Process Prior}
\label{sec:polya}
Rather than specifying the random measure $G$ explicitly via stick-breaking, we directly define the joint prior distribution of the columns $f_1,\dots,f_M$ through the predictive probabilities of the P\'olya urn. Let $G_0$ be a probability measure on $\R^p$. The prior is defined sequentially as
\begin{equation}
\label{eq:polya}
f_1 \sim G_0, \qquad 
f_i \mid f_1,\dots,f_{i-1} \sim \frac{\alpha}{\alpha + i - 1} G_0 + \frac{1}{\alpha + i - 1}\sum_{j=1}^{i-1} \delta_{f_j}, \quad i=2,\dots,M.
\end{equation}
This is the exact marginal distribution obtained by integrating out $G \sim \DP(\alpha, G_0)$ from the hierarchical specification $f_i \mid G \stackrel{\text{iid}}{\sim} G$. This formulation completely avoids the stick-breaking representation, the need for truncation at $M$, and auxiliary weight variables. The parameter $\alpha > 0$ controls the propensity to create new clusters. The induced distribution over partitions of the set $\{1,\dots,M\}$ is the exchangeable partition probability function (EPPF) of the Dirichlet process \citep{pitman2006}:
\begin{equation}
\label{eq:eppf}
\Pr(\text{partition } \mathcal{C} = \{C_1,\dots,C_K\}) = \frac{\alpha^{K} \prod_{j=1}^K (n_j - 1)!}{\prod_{i=1}^M (\alpha + i - 1)},
\end{equation}
where $K$ is the number of occupied clusters and $n_j = |C_j|$. This EPPF forms the foundation of our theoretical analysis.

\subsection{Spike-and-Slab Base Measure}
\label{sec:base}
To induce row-wise sparsity and enable the deletion of entire factors, we specify the base measure $G_0$ as a product of independent spike-and-slab distributions:
\begin{equation}
\label{eq:base}
G_0 = \prod_{r=1}^{p} \left\{ \pi_0 \delta_0 + (1-\pi_0) \N(0, \tau_{0,r}^2) \right\},
\end{equation}
and we define \(\Omega_0 := \diag(\tau_{0,1}^2, \dots, \tau_{0,p}^2)\).

For the purpose of our asymptotic theory, the slab variance is chosen as
\[
\tau_{0,r}^2 = \frac{1}{M s_0},
\]
where $s_0$ signifies the maximum number of non-zero entries per column of $F$, a prior restriction, again necessary for our asymptotic theoretical underpinnings.
%Indeed, the above scaling is critical: it ensures that the typical Frobenius norm $\|F\|_F$ is of order 1, so the tail $\Pr(\|F\|_F > Ms_0\log n)$ decays exponentially fast with exponent $M s_0 \log n$. 
The spike mass $\pi_0 \in (0,1)$ is fixed. 
%Also, for the theoretical analysis, we restrict the prior support to matrices with at most $s_0$ non-zero entries per column. This restriction does not affect the asymptotic contraction rate, as the true parameter satisfies it.

\subsection{Decoupling Latent Variances}
In a standard factor model with $X_k \sim \N_M(0, I_M)$, the variance of a merged factor formed by a cluster $C_j$ would be $|C_j|$, as it is the sum of $|C_j|$ independent unit-variance scores. This rigid dependence on cluster size is undesirable and does not reflect realistic factor structures. To decouple the variance from the cluster size, we assign independent variances to the original scores:
\begin{equation}
\label{eq:scores}
X_{k,i} \stackrel{\text{ind}}{\sim} \N(0, \lambda_i), \quad \lambda_i \sim \IG(a_\lambda, b_\lambda).
\end{equation}
Under this formulation, the merged factor for cluster $C_j$ is $\sum_{i \in C_j} X_{k,i}$, with variance $\sum_{i \in C_j} \lambda_i$. This naturally decouples the variance from the cluster size, providing a more flexible and realistic model.

\subsection{Identifiability}
Factor models are invariant to orthogonal rotations, and the labeling of clusters 
is arbitrary under the exchangeable prior. To avoid these identifiability issues 
in the theoretical analysis, we focus our inference on the identifiable covariance 
matrix
\begin{equation}
\label{eq:sigma}
\Sigma = F\Lambda F^T + \Psi,
\end{equation}
where $\Lambda=	\text{diag}(\lambda_1,\ldots,\lambda_M)$.	
The number of factors is defined as $q = \rank(\Sigma - \Psi)$, which is well-defined 
and identifiable. The theoretical results in Section~\ref{sec:theory} establish 
contraction for $\Sigma$, and Section~\ref{sec:rank} proves underfitting 
consistency for $q$ in the fixed-dictionary setting. 
In practice, posterior samples can be post-processed using standard rotation
techniques, such as the generalized Procrustes rotation \citep{schonemann1966},
to obtain interpretable loading matrices. An ex-post alignment strategy of this
kind has been used successfully in Bayesian factor analysis by
\citet{assmann2016} and \citet{devito2018}.

%A third identifiability issue, distinct from rotation and label switching, arises 
%from the exchangeable treatment of the $M$ candidate columns. The DP-spike-slab 
%assigns a random partition to these columns, merging those that share an atom 
%into exactly equal columns. The number of distinct atoms is identified and is 
%consistently estimated; the specific assignment of the $M$ slots to those atoms 
%is not, because the slots are exchangeable and the data only determine \(\Sigma\). 
%The paper's theoretical guarantees therefore concern the number of factors
%\(q = \operatorname{rank}(\Sigma - \Psi)\), whose true value \(q_0\) is well-defined,
%rather than the partition itself.

\section{Posterior Computation via Exact Gibbs Sampling}
\label{sec:computation}

The Gibbs sampler relies entirely on the predictive probabilities of the marginalized Dirichlet process, avoiding stick‑breaking variables. Under the exact marginalized Dirichlet process prior, the columns \(f_i\) are deterministically equal to their cluster atoms: \(f_i = f_{Z_i}^*\), where \(Z_i\) is the cluster assignment and \(f_j^*\) are the distinct cluster atoms. Therefore, we only need to sample the cluster assignments \(Z_i\) and the cluster atoms \(f_j^*\), with the loading matrix \(F\) deterministically constructed from these quantities. However, for computational convenience, we maintain the auxiliary column variables \(f_i\) (which are exactly equal to \(f_{Z_i}^*\)) to facilitate the relabeling step and sequential updates. We detail each step of the sampler; all derivations are collected in Appendix~\ref{app:derivations}.

\subsection{Overview of the Exact Gibbs Sampler}

The exact Gibbs sampler updates the following quantities:
\begin{enumerate}
\item The cluster assignments \(Z_i\) and the corresponding auxiliary columns \(f_i\) for \(i=1,\dots,M\), using the Chinese Restaurant Process prior and the correct likelihood contribution.
\item A canonical relabeling step to resolve label switching and ensure consistent interpretation of cluster labels across MCMC iterations.
\item The cluster atoms \(f_j^*\) for \(j=1,\dots,K\), using the correct likelihood from the observed data, conditional on the canonical cluster assignments.
\item The latent scores \(X_k\) for \(k=1,\dots,n\), conditional on the current loading matrix \(F\).
\item The idiosyncratic variances \(\psi_r\) and score variances \(\lambda_i\), updated via conjugate inverse-gamma distributions.
\item The concentration parameter \(\alpha\) using the auxiliary variable method of \citet{escobar1995}.
\end{enumerate}

This yields an exact Gibbs sampler for the posterior distribution of the model parameters, with no approximations or data augmentation tricks. The auxiliary columns \(f_i\) are maintained only for computational convenience and are deterministically related to \(Z_i\) and \(F^*\).

\subsection{Updating the Cluster Assignments and Columns \(Z_i, f_i\)}
\label{sec:cluster}

For each column \(i=1,\dots,M\), we sample its cluster assignment \(Z_i\) and determine the corresponding auxiliary column \(f_i\) conditional on all other assignments \(Z_{-i}\), the atoms \(f_j^*\), and the data. This update is performed sequentially over \(i=1,\dots,M\).

Let \(K\) be the current number of clusters, and let \(n_{j,-i}\) be the number of columns in cluster \(j\) excluding column \(i\). The posterior probability of assigning column \(i\) to cluster \(j\) is proportional to the prior probability from the Chinese Restaurant Process times the likelihood contribution of column \(i\) being equal to atom \(f_j^*\).

For an existing cluster \(j=1,\dots,K\), if column \(i\) joins existing cluster \(j\), then \(f_i = f_j^*\). The likelihood contribution (conditional on the scores \(X\) and all other parameters) is:
\[
\mathcal{L}(Z_i = j) \propto \exp\!\left( -\frac12 \sum_{k=1}^n (Y_k - \mu - F_{-i} X_{k,-i} - f_j^* X_{k,i})^\top \Psi^{-1} (Y_k - \mu - F_{-i} X_{k,-i} - f_j^* X_{k,i}) \right),
\]
where \(F_{-i}\) is the loading matrix with column \(i\) removed (determined by the current assignments and atoms for other columns).

Completing the square, %in \(X_{k,i}\) (which appears linearly in the likelihood), 
the log-likelihood contribution is:
\[
-\frac12 (f_j^* - m_i)^\top S_i (f_j^* - m_i) + \log C_i,
\]
where
\[
S_i = \sum_{k=1}^n X_{k,i}^2 \Psi^{-1}, \qquad
m_i = S_i^{-1} \sum_{k=1}^n X_{k,i} \Psi^{-1} (Y_k - \mu - F_{-i} X_{k,-i}),
\]
and the normalising constant \(C_i\) is defined in Appendix~\ref{app:likelihood_kernel} as:
\[
\log C_i = -\frac{p}{2}\log(2\pi) + \frac12 \sum_{r=1}^p \log S_{i,r}.
\]

The CRP prior probability of joining existing cluster \(j\) is \(n_{j,-i}/(M-1+\alpha)\). Thus the unnormalized posterior probability is:
\[
\Pr(Z_i = j \mid \text{rest}) \propto n_{j,-i} \exp\!\left( \log C_i - \frac12 (f_j^* - m_i)^\top S_i (f_j^* - m_i) \right).
\]

For a new cluster \(j=K+1\), if column \(i\) starts a new cluster, the prior probability is \(\alpha/(M-1+\alpha)\). The new atom must be integrated out against the base measure \(G_0\). The marginal likelihood for a new column is:
\[
I_i = \int C_i \exp\!\left( -\frac12 (f - m_i)^\top S_i (f - m_i) \right) G_0(df),
\]
which evaluates to:
\[
I_i = \pi_0 C_i \exp\!\left( -\frac12 m_i^\top S_i m_i \right)
      + (1-\pi_0) C_i \N(m_i \mid 0, S_i^{-1} + \Omega_0) (2\pi)^{p/2} \det(S_i)^{-1/2}.
\]

Thus:
\[
\Pr(Z_i = K+1 \mid \text{rest}) \propto \alpha \, I_i.
\]

The complete update is:
\[
\Pr(Z_i = j \mid \text{rest}) \propto
\begin{cases}
n_{j,-i} \exp\!\left( \log C_i - \frac12 (f_j^* - m_i)^\top S_i (f_j^* - m_i) \right), & j=1,\dots,K,\\[1.2ex]
\alpha \cdot I_i, & j=K+1,
\end{cases}
\]

After sampling \(Z_i\) from this discrete distribution:
\begin{itemize}
	\item[(i)] If \(Z_i \le K\) (existing cluster), set \(f_i = f_{Z_i}^*\).
	\item[(ii)] If \(Z_i = K+1\) (new cluster), draw \(f_i\) from the posterior under the base measure \(G_0\) with sufficient statistics \(S_i\) and \(m_i\) (see Appendix~\ref{app:newdraw}), set \(K \leftarrow K+1\), and set \(f_K^* = f_i\).
\end{itemize}

This update is performed sequentially over \(i=1,\dots,M\), using the most recent values of the other columns. The computations of \(S_i\), \(m_i\), and the cluster probabilities can be parallelized across observations and clusters.

\subsection{Canonical Relabeling of Clusters}
\label{sec:relabel}

After completing the sequential updates of all \(M\) columns (so that each auxiliary column \(f_i\) is now a fully specified vector in \(\R^p\)), we apply a deterministic canonical relabeling (derived in Appendix~\ref{app:relabel}) to resolve label switching.

The cluster labels \(1,\dots,K\) are arbitrary under the exchangeable Dirichlet process prior, leading to potential label switching across MCMC iterations. To obtain consistent labels, we relabel based on the \emph{order of first appearance} of the distinct column vectors among \(\{f_1,\dots,f_M\}\).

Let \(\{g_1,\dots,g_{K_{\text{new}}}\}\) be the set of distinct vectors among the auxiliary columns \(f_1,\dots,f_M\), where \(K_{\text{new}}\) is the number of distinct columns. We define the canonical labels \(\tilde S_1,\dots,\tilde S_M\) as:
\[
\tilde S_i = j \quad \text{if and only if} \quad f_i = g_j,
\]
where the distinct vectors \(g_1,\dots,g_{K_{\text{new}}}\) are ordered by their first appearance in the sequence \(f_1,\dots,f_M\). That is:
\[
g_1 = f_1, \quad \text{and for } j \ge 2,\; g_j = f_{i_j} \text{ where } i_j = \min\{i : f_i \notin \{g_1,\dots,g_{j-1}\}\}.
\]

Equivalently, in algorithmic form:

\begin{enumerate}
\item Initialize an empty dictionary \(\mathcal{D}\) mapping vectors to labels, and an empty list \(G_{\text{star}}\).
\item For \(i = 1\) to \(M\):
   \begin{enumerate}
   \item Let \(f = f_i\) be the current auxiliary column vector.
   \item If \(f\) is not already in the dictionary \(\mathcal{D}\) (using exact equality for the DP prior), %; in practice, equality is checked with a numerical tolerance), 
	   assign it the next available label: 
         \[
         \mathcal{D}[f] \leftarrow |G_{\text{star}}| + 1, \quad G_{\text{star}} \leftarrow G_{\text{star}} \cup \{f\}.
         \]
   \item Set \(\tilde S_i = \mathcal{D}[f]\).
   \end{enumerate}
\end{enumerate}

This scheme ensures that:
\[
\tilde S_1 = 1, \quad \text{and} \quad 1 \le \tilde S_i \le \max(\tilde S_1,\dots,\tilde S_{i-1}) + 1.
\]

That is, the first column always belongs to cluster 1, and whenever a new distinct column value appears, it receives the next integer label. This is exactly the order of first appearance labeling, which matches the natural labeling of the Chinese Restaurant Process.

After relabeling, we set:
\[
Z_i = \tilde S_i \quad \text{for all } i, \qquad F^* = G_{\text{star}}, \qquad K = |G_{\text{star}}|.
\]

The auxiliary columns \(f_i\) themselves remain unchanged—they are the same vectors—only their labels and the atoms are updated to reflect the canonical ordering.

This relabeling is deterministic given the set of distinct column vectors and their assignment pattern, and it ensures that cluster labels have a consistent interpretation across MCMC iterations: cluster \(j\) always refers to the \(j\)-th distinct column vector in order of first appearance among the columns.

\subsubsection*{Handling Zero Atoms}

If two columns have vectors that are both exactly zero (because the spike component was chosen for both), they are considered equal in the dictionary lookup. Consequently, they will be merged into a single cluster by the relabeling. This is desirable behavior: zero vectors represent spurious factors that should be deleted. In practice, we treat the number of factors as the number of non-zero clusters after relabeling.

\subsection{Updating the Atoms \(f_j^*\)}
\label{sec:atomupdate}

After the canonical relabeling, the clusters \(C_j = \{i: Z_i = j\}\) (for \(j=1,\dots,K\)) have consistent labels across iterations. We now update the atoms \(f_j^*\) using the correct likelihood from the observed data.

For each cluster \(j=1,\dots,K\), let \(C_j = \{i: Z_i = j\}\) be the set of columns assigned to cluster \(j\), with size \(n_j = |C_j|\). Define the aggregated score for cluster \(j\):
\[
T_{k,j} = \sum_{i \in C_j} X_{k,i}.
\]

Let the residual that excludes the contribution of cluster \(j\) be:
\[
R_{k,-j} = Y_k - \mu - \sum_{\ell \neq j} f_\ell^* T_{k,\ell}.
\]

The likelihood for atom \(f_j^*\) is:
\[
\mathcal{L}(f_j^*) \propto \exp\!\left( -\frac12 \sum_{k=1}^n (R_{k,-j} - f_j^* T_{k,j})^\top \Psi^{-1} (R_{k,-j} - f_j^* T_{k,j}) \right).
\]

Completing the square in \(f_j^*\) gives:
\[
\mathcal{L}(f_j^*) \propto \exp\!\left( -\frac12 (f_j^* - m_j)^\top S_j (f_j^* - m_j) \right),
\]
where
\[
S_j = \sum_{k=1}^n T_{k,j}^2 \, \Psi^{-1}, \qquad
m_j = S_j^{-1} \sum_{k=1}^n T_{k,j} \, \Psi^{-1} R_{k,-j}.
\]

Because \(\Psi = \diag(\psi_1,\dots,\psi_p)\) is diagonal, \(S_j\) is diagonal:
\[
S_j = \diag\!\left( \sum_{k=1}^n \frac{T_{k,j}^2}{\psi_1}, \dots, \sum_{k=1}^n \frac{T_{k,j}^2}{\psi_p} \right).
\]

The prior \(G_0\) is a product of independent spike-and-slab distributions:
\[
G_0 = \prod_{r=1}^p \left\{ \pi_0 \delta_0 + (1-\pi_0) \N(0, \tau_{0,r}^2) \right\}.
\]

Since both the likelihood and the prior factorize over coordinates, the posterior for \(f_j^*\) factorizes coordinate-wise. For the \(r\)-th coordinate, the posterior is a two-component mixture:

The spike probability is:
\[
p_{\text{spike}, j,r} = 
\frac{\pi_0 \N(m_{j,r} \mid 0, 1/S_{j,r})}
{\pi_0 \N(m_{j,r} \mid 0, 1/S_{j,r}) + (1-\pi_0) \N(m_{j,r} \mid 0, 1/S_{j,r} + \tau_{0,r}^2)}.
\]

Conditional on the slab, the posterior is normal:
\[
f_{j,r}^* \mid \text{slab} \sim \N\!\left( \frac{S_{j,r} m_{j,r}}{S_{j,r} + 1/\tau_{0,r}^2}, \frac{1}{S_{j,r} + 1/\tau_{0,r}^2} \right).
\]

After updating the atoms, we reconstruct the loading matrix \(F\) from \(Z\) and \(F^*\):
\[
F_{:,i} = f_{Z_i}^* \quad \text{for } i=1,\dots,M,
\]
where \(F_{:,i}\) denotes the \(i\)-th column of \(F\).

We also update the auxiliary columns \(f_i = f_{Z_i}^*\) to maintain consistency with the new atoms.

This update is embarrassingly parallel over clusters \(j\) and rows \(r\), as the computations for each \((j,r)\) are independent.

\subsection{Sampling Latent Scores and Variances}
\label{sec:scoresupdate}

Conditional on \(F\) (which is deterministically obtained from the atoms and assignments), the latent scores \(X_k\) are independent across \(k\):
\[
X_k \mid \text{rest} \sim \N_M\!\left( \Omega^{-1} F^\top \Psi^{-1} (Y_k - \mu),\; \Omega^{-1} \right), \quad
\Omega = F^\top \Psi^{-1} F + \Lambda^{-1},
\]
with \(\Lambda = \diag(\lambda_1,\dots,\lambda_M)\). The variances are updated via conjugate inverse‑gamma distributions:
\begin{align}
\lambda_i \mid X &\sim \IG\!\left( a_\lambda + \frac{n}{2},\; b_\lambda + \frac12 \sum_{k=1}^n X_{k,i}^2 \right), \label{eq:lambda}\\
\psi_r \mid Y, F, X &\sim \IG\!\left( a_\psi + \frac{n}{2},\; b_\psi + \frac12 \sum_{k=1}^n (Y_{k,r} - \mu_r - (F X_k)_r)^2 \right). \label{eq:psi}
\end{align}
These updates are embarrassingly parallel over \(i\) and \(r\). Derivations are in Appendix~\ref{app:derivations}.

\subsection{Sampling the Concentration Parameter}
The concentration parameter \(\alpha\) is updated using the auxiliary variable method of \citet{escobar1995}, which is directly applicable to the EPPF \eqref{eq:eppf}. Conditional on the partition, the likelihood is proportional to \(\alpha^K \prod_{i=1}^M (\alpha + i - 1)^{-1}\). A Gamma prior on \(\alpha\) yields a simple auxiliary variable update; the details are given in Appendix~\ref{app:derivations}.

\subsection{Parallel Implementation Summary}
The algorithm exhibits several levels of parallelism:
\begin{enumerate}
	\item[(i)] \textit{Distribution over observations} for the scores update and for computing the sums in the atom and assignment updates.
	\item[(ii)] \textit{Distribution over rows} for the atom update (since it factorizes coordinate-wise).
	\item[(iii)] \textit{Parallel evaluation of probabilities} in the cluster assignment step.
\end{enumerate}

All global reductions and broadcasts are implemented using MPI. The complete procedure is given in Algorithm~\ref{alg:parallel}.

\begin{algorithm}[H]
\caption{Cyclic block Gibbs sampler for the DP-SpikeSlab factor model. The blocks are visited in the order shown; all quantities on the right-hand sides refer to their most recent values.}
\label{alg:parallel}
\small
\vspace{-0.2\baselineskip}
\begin{algorithmic}[1]
\Require \(Y \in \R^{n \times p}\), upper bound \(M\), hyperparameters \(\alpha, a_\lambda, b_\lambda, a_\psi, b_\psi, \pi_0, \Omega_0, \mu\).
\Ensure Posterior samples of \(Z, F^*, X, \Psi, \Lambda, \alpha\) with consistent labels.

\State \textbf{Initialisation:} set \(Z, F^*, X, \Psi, \Lambda, \alpha\); set \(f_i = F^*_{Z_i}\) and \(F_{:,i} = f_i\) for all \(i = 1, \dots, M\).

\For{iteration \(t = 1\) to \(T\)}

  \State \textbf{1.} \emph{Assignments \(Z_i\) and auxiliary columns \(f_i\)} (sequential in \(i\); Section~\ref{sec:cluster}): for \(i = 1, \dots, M\), compute \((S_i, m_i, C_i, I_i)\) from the current \(F_{-i}\), draw
  \[
  \Pr(Z_i = j \mid \text{rest}) \propto
  \begin{cases}
  n_{j,-i} \, C_i \, L(f_j^*), & j = 1, \dots, K, \\
  \alpha \, I_i, & j = K+1,
  \end{cases}
  \]
  where \(L(\cdot)\) is the likelihood kernel of Appendix~\ref{app:likelihood_kernel}; set \(f_i \leftarrow f^*_{Z_i}\) if \(Z_i \le K\), otherwise draw \(f_i \sim \mathrm{post}_{G_0}(S_i, m_i)\) (Appendix~\ref{app:newdraw}) and set \(K \leftarrow K+1\), \(f_K^* \leftarrow f_i\).

  \State \textbf{2.} \emph{Canonical relabeling} (Section~\ref{sec:relabel}): let \(\{g_1, \dots, g_{K_{\text{new}}}\}\) be the distinct vectors among \(\{f_i\}\), ordered by first appearance; set \(Z_i = j\) iff \(f_i = g_j\), \(F^* \leftarrow [g_1, \dots, g_{K_{\text{new}}}]\), \(K \leftarrow K_{\text{new}}\); merge exactly-zero columns into a single cluster.

  \State \textbf{3.} \emph{Atoms \(f_j^*\)} (parallel in \(j, r\); Section~\ref{sec:atomupdate}): for \(j = 1, \dots, K\), compute \(T_{k,j} = \sum_{i \in C_j} X_{k,i}\), \(S_j = \sum_{k=1}^n T_{k,j}^2 \, \Psi^{-1}\), \(m_j = S_j^{-1} \sum_{k=1}^n T_{k,j} \, \Psi^{-1} R_{k,-j}\); for each \(r\), draw
  \[
  f_{j,r}^* \sim p_{\text{spike}, j, r} \, \delta_0 + (1 - p_{\text{spike}, j, r}) \,
  \N\!\left( \frac{S_{j,r} m_{j,r}}{S_{j,r} + \tau_{0,r}^{-2}}, \, \frac{1}{S_{j,r} + \tau_{0,r}^{-2}} \right).
  \]

  \State \textbf{4.} \emph{Loading matrix:} \(F_{:,i} \leftarrow f_i \leftarrow F^*_{Z_i}\) for all \(i = 1, \dots, M\).

  \State \textbf{5.} \emph{Variances} (parallel in \(r\) and \(i\); Section~\ref{sec:scoresupdate}): draw
  \[
  \psi_r \sim \IG\!\left( a_\psi + \tfrac{n}{2}, \, b_\psi + \tfrac{1}{2} \sum_{k=1}^n (Y_{k,r} - \mu_r - (F X_k)_r)^2 \right), \quad
  \lambda_i \sim \IG\!\left( a_\lambda + \tfrac{n}{2}, \, b_\lambda + \tfrac{1}{2} \sum_{k=1}^n X_{k,i}^2 \right).
  \]

  \State \textbf{6.} \emph{Latent scores} (parallel in \(k\); Section~\ref{sec:scoresupdate}): draw \(X_k \sim \N_M(\Omega^{-1} F^\top \Psi^{-1} (Y_k - \mu), \, \Omega^{-1})\), where \(\Omega = F^\top \Psi^{-1} F + \Lambda^{-1}\).

  \State \textbf{7.} \emph{Concentration:} perform the Escobar--West update for \(\alpha\) (Appendix~\ref{app:derivations}).

\EndFor
\end{algorithmic}
\end{algorithm}

\section{Simulation Experiments}
\label{sec:simulation}

We conducted two complementary simulation studies to evaluate the finite-sample
behaviour of the proposed DP-SpikeSlab model and to compare it against three
established Bayesian factor-analysis competitors: the multiplicative gamma process
(MGP) of \citet{bhattacharya2011sparse}, the cumulative shrinkage process (CUSP)
of \citet{legramanti2020}, and a fixed-$q$ spike-and-slab factor model that is
given the oracle value of the true number of factors. The two studies share the
same data-generating mechanism, the same four methods, and the same evaluation
metrics, but differ in the difficulty of the problem. Study~A is a
moderate-dimensional problem with $q_0 = 5$ true factors; Study~B is a
higher-dimensional, and noisier setting with $q_0 = 10$ true factors. The
two studies together allow us to assess how the four methods behave as the
intrinsic complexity of the factor model increases, while keeping all other
experimental choices fixed. The section is organised as follows.
Section~\ref{sec:sim-common} describes the data-generating mechanism, the
competing models, the truncation regimes, and the computational setup common to
both studies. Section~\ref{sec:sim-A} reports Study~A, Section~\ref{sec:sim-B}
reports Study~B, and Section~\ref{sec:sim-comparison} discusses the two studies
jointly.

\subsection{Common Setup}
\label{sec:sim-common}

\subsubsection{Data-Generating Mechanisms}
\label{sec:sim-data}

Both studies use a sparse factor model
\begin{equation}
\label{eq:sim-data}
Y_k = F_0 X_k + U_k, \qquad U_k \sim \N_p(0, \psi_0 I_p),
\end{equation}
with an $n \times p$ training dataset and an $n_{\text{test}} \times p$ test
dataset generated from the same parameters. The true loading matrix
$F_0 \in \R^{p \times q_0}$ is generated by sampling, for each column
$j = 1, \dots, q_0$, a support set of size $s_0$ uniformly at random from
$\{1, \dots, p\}$ and filling those entries with independent draws from $\N(0, 1)$;
all other entries are set to zero. The latent scores are drawn as
$X_k \sim \N_{q_0}(0, I_{q_0})$ and the idiosyncratic errors as
$U_k \sim \N_p(0, \psi_0 I_p)$. The true covariance matrix is
$\Sigma_0 = F_0 F_0^\top + \psi_0 I_p$. The true loading matrix $F_0$, the true
covariance $\Sigma_0$, and the true latent-score matrix on the test set
$X^{\text{test}} \in \R^{n_{\text{test}} \times q_0}$ are retained to compute the
evaluation metrics. The two studies differ in the configuration parameters
$(n, p, q_0, s_0, \psi_0, n_{\text{test}})$, which are reported in
Table~\ref{tab:sim-config}.

\begin{table}[htbp]
\centering
\caption{Configuration parameters of the two simulation studies.}
\label{tab:sim-config}
\begin{tabular}{lcccccc}
\hline
Study & $n$ & $p$ & $q_0$ & $s_0$ & $\psi_0$ & $n_{\text{test}}$ \\
\hline
A (small-dimensional)  & 500 & 50  & 5  & 5  & 0.5 & 200 \\
B (higher-dimensional) & 500 & 100 & 10 & 10 & 1.0 & 200 \\
\hline
\end{tabular}
\end{table}

Study~A is a moderate-dimensional problem: the true loading matrix has
\(q_0 s_0 = 25\) non-zero entries out of \(p q_0 = 250\) possible, and the
idiosyncratic variance is \(\psi_0 = 0.5\). Study~B is more demanding in
several respects simultaneously: the number of true factors is doubled
(\(q_0 = 10\)), the ambient dimension is doubled (\(p = 100\)), the number of
non-zero entries per column is doubled (\(s_0 = 10\)), and the idiosyncratic
variance is doubled (\(\psi_0 = 1.0\)). These changes increase the total number
of parameters to be estimated and the complexity of the column-clustering
problem while the sample size remains fixed at \(n = 500\), making Study~B a
substantially harder estimation and model-selection task. Together the two
studies span a useful range of difficulty, from a relatively easy benchmark to
a setting in which sparsity, dimension, and noise all stress the competing
methods.

\subsubsection{Competing Models and Their Implementations}
\label{sec:sim-competitors}

Each study runs the same four methods for $30{,}000$ MCMC iterations with the
same burn-in and thinning scheme, as reported in
Table~\ref{tab:sim-mcmc}. The three competitors MGP, CUSP, and the fixed-$q$
spike-and-slab are implemented in R (version 4.4.1) on a single thread, while the
DP-SpikeSlab sampler is implemented in C with MPI parallelisation over four CPU
ranks. All methods share the same data in each study, so differences in the
reported metrics reflect the models rather than sampling variation.

\begin{table}[htbp]
\centering
\caption{MCMC settings for the two studies. The settings are identical for all
four methods within each study.}
\label{tab:sim-mcmc}
\begin{tabular}{lcc}
\hline
Setting & Study A & Study B \\
\hline
Total iterations & $30{,}000$ & $30{,}000$ \\
Burn-in          & $5{,}000$  & $5{,}000$  \\
Thinning         & $5$        & $5$        \\
Retained samples & $5{,}000$  & $5{,}000$  \\
\hline
\end{tabular}
\end{table}

%Because the MCMC effort is identical across methods within each study, the
%comparisons are on a fully common footing. The MGP adaptive Gibbs sampler is the
%slowest to stabilise in Study~B, but its trace (reported below) confirms that it
%reaches stationarity well before iteration $5{,}000$, so the common burn-in of
%$5{,}000$ is adequate for all four methods.

\paragraph{DP-SpikeSlab (ours).}
Our method is implemented in C with an MPI parallelisation over four CPU ranks,
following Algorithm~\ref{alg:parallel}. We set the upper bound
$M = 30$, the spike mass at $\pi_0 = 0.9$, and the slab variance at
$\tau_0^2 = 1$. The concentration parameter of the Dirichlet process is updated
via the auxiliary variable method of \citet{escobar1995} with a $\text{Ga}(2,1)$
hyperprior. The idiosyncratic and score variances use inverse-gamma priors with
hyperparameters $(a_\psi, b_\psi) = (2, 1)$ and $(a_\lambda, b_\lambda) = (2, 1)$.
The reported number of factors $\hat q$ is the posterior mode of the number of
non-zero clusters, and the reported covariance is the posterior mean of
\begin{equation}
\label{eq:sigma-sim}
\hat\Sigma = F \Lambda F^\top + \Psi,
\end{equation}
where $\Lambda = \diag(\lambda_1, \dots, \lambda_M)$ is the diagonal matrix of
score variances defined in Section~\ref{sec:scoresupdate}. The same hyperparameter
values are used in both studies.

\paragraph{Multiplicative gamma process (MGP).}
We implement the MGP prior of \citet{bhattacharya2011sparse} following their
reference algorithm, with the gamma--gamma local shrinkage
$\varphi_{jh} \sim \text{Ga}(df/2, df/2)$ and multiplicative global shrinkage
$\delta_h$. We use the hyperparameters $(a_1, b_1) = (2, 1)$ and
$(a_2, b_2) = (3, 1)$ for the global shrinkage parameters, $df = 3$ degrees of
freedom for the local shrinkage, and $(a_\sigma, b_\sigma) = (1, 0.3)$ for the
residual precisions, matching the reference implementation. The adaptive Gibbs
sampler (AGS) operates with the thresholds $\text{prop} = 0.7$ and
$\epsilon = 10^{-3}$ and the adaptation probability decays as
$\exp\{-(0.1 + 5 \times 10^{-4}\, t)\}$. The computational truncation is set to
$K_{\text{pad}} = p$ in both studies, which is strictly larger than our
$M = 30$. The reported $\hat q$ is the posterior mode of the AGS-active
components over the retained samples, and the posterior-mean loading matrix is
rescaled to the original data units before evaluation.

\paragraph{Cumulative shrinkage process (CUSP).}
We implement the CUSP prior of \citet{legramanti2020} with the stick-breaking
cumulative shrinkage probabilities, the column-level spike-and-slab mixture, and
the multivariate Student's-$t$ slab obtained by marginalising the column-specific
variances. We use the paper's recommended hyperparameters $\alpha = 5$ for the
Dirichlet concentration and $a_\theta = 2, b_\theta = 1$ for the inverse-gamma
slab, with the spike variance fixed at $\theta_\infty = 0.01$. 
The working truncation is $H$, and the latent scores are initialised via the
top-$H$ left singular vectors of $Y$; the value of $H$ differs between the two
studies and is reported in Table~\ref{tab:sim-trunc}. The implementation
allocates a storage array of size $\min(n,p)$ to permit adaptive expansion,
but the adaptive Gibbs sampler is disabled, which matches the fixed-$H$ variant
of the CUSP paper and prevents the collapse observed at large $H$ when the
spike--slab mechanism is not supported by PCA-based initialisation; the working
dimension therefore remains fixed at $H$ throughout. 
The reported
$\hat q$ is the rounded posterior mean of the number of slab-classified components
over the retained samples.

\paragraph{Fixed-$q$ spike-and-slab (oracle $q_0$).}
As a strong oracle baseline, we fit a standard entry-wise spike-and-slab factor
model with the number of factors fixed at the true value $q_0$. The loading prior
uses the same $90\%$ spike mass $\pi_0 = 0.9$ as our method, and the same slab
variance $\tau^2 = 1$. All other hyperparameters match those of our mode,
%In the accompanying implementation the argument \texttt{pi0} denotes the slab
%probability $1 - \pi_0 = 0.1$, so that the corresponding 
the spike mass is $0.9$.
This baseline is given the answer to the rank-estimation question and therefore
serves as a lower bound on what a fixed-rank spike-and-slab model can achieve.

\subsubsection{Truncation Regimes}
\label{sec:sim-truncation}

The three fully Bayesian methods operate under different truncation regimes
that reflect their respective priors. Our DP-SpikeSlab model has a fixed
modelling upper bound $M$ on the number of candidate columns; $M$ is part of
the prior and appears explicitly in the P\'olya urn and EPPF. 
MGP and CUSP,
in contrast, place a prior on an infinite number of factors with continuous or
cumulative shrinkage, and their truncations $K_{\text{pad}}$ (MGP) and $H$
(CUSP) are purely computational artefacts needed to bound the state space of
the sampler.
MGP's computational truncation $K_{\text{pad}}$ is chosen strictly larger
than our $M$, and CUSP is run at its own reference working dimension $H$
rather than at our bound $M$; neither competitor is constrained by the
DP-SpikeSlab modelling bound. The exact values for the two studies are
reported in Table~\ref{tab:sim-trunc}.

\begin{table}[htbp]
\centering
\caption{Truncation regimes for the two studies. For CUSP, $H$ is the fixed
working dimension; the storage bound $\min(n,p)$ is never reached because
the adaptive Gibbs sampler is disabled.}
\label{tab:sim-trunc}
\begin{tabular}{lccc}
\hline
Study & DP-SpikeSlab $M$ & MGP $K_{\text{pad}}$ & CUSP $H$ \\
\hline
A & 30 & 50  & 15 \\
B & 30 & 100 & 20 \\
\hline
\end{tabular}
\end{table}

\subsubsection{Computational Setup and Timings}
\label{sec:sim-hardware}

All experiments are run on an Intel-based single-socket workstation with six
physical cores, two threads per core (twelve logical CPUs), a base frequency of
800~MHz and a turbo frequency of 3201~MHz, and 12~MiB of shared L3 cache. The
system supports the AVX-512 instruction set and has a single NUMA node covering
CPUs $0$--$11$. The DP-SpikeSlab sampler is compiled with \texttt{mpicc}
(GCC with \texttt{-O3}) and run under \texttt{mpirun} with four MPI ranks, using
the embarrassingly parallel updates described in Section~\ref{sec:computation}.
The three competing R methods are run single-threaded, matching the reference
implementations.

\begin{table}[htbp]
\centering
\caption{Wall-clock times for the full MCMC run in each study. Our method uses
four MPI ranks; the three competing methods are single-threaded.}
\label{tab:timings}
\begin{tabular}{lcccc}
\hline
Study & DP-SpikeSlab & MGP & CUSP & Fixed-$q$ SpikeSlab \\
\hline
A (small-dimensional)  & 3 min 35 s & 1 min 24 s & 1 min 48 s & 2 min 00 s \\
B (higher-dimensional) & 7 min 31 s & 12 min 36 s & 11 min 18 s & 15 min 12 s \\
\hline
\end{tabular}
\end{table}

Two features of the timing comparison deserve emphasis. First, in Study~A the
C/MPI implementation of our method is slower than the three single-threaded R
competitors, requiring \(3\)~min \(35\)~s against \(1\)~min \(24\)~s for MGP,
\(1\)~min \(48\)~s for CUSP, and \(2\)~min \(00\)~s for the fixed-\(q\)
spike-and-slab baseline. This reflects the additional computational burden of
the DP-column assignments, the canonical relabeling, and the correct atom
updates, which are not present in the competitors. Second, the situation
reverses in Study~B: our method becomes the fastest of the four, completing in
\(7\)~min \(31\)~s while MGP takes \(12\)~min \(36\)~s, CUSP \(11\)~min
\(18\)~s, and the fixed-\(q\) spike-and-slab baseline \(15\)~min \(12\)~s. The
advantage in Study~B arises because the R implementations of the competitors
slow down super-linearly in the ambient dimension, whereas the parallel C
sampler scales approximately linearly. The favourable scaling in Study~B is
largely due to the C/MPI implementation, which distributes the
observation-level and variable-level computations across the four ranks while
keeping the sequential sweep and the canonical relabeling on rank~\(0\).

\subsection{Study A: Moderate-Dimensional Setting}
\label{sec:sim-A}

\subsubsection{Diagnostics}
\label{sec:sim-A-diagnostics}

\paragraph{Trace of $K$ and $\alpha$ (ours).}
Figures~\ref{fig:trace_K_A} and~\ref{fig:trace_alpha_A} show the traces of the
number of non-zero clusters $K$ and of the concentration parameter $\alpha$ for
our method over the retained iterations. The chain concentrates on $K = 5$, with
the posterior mode equal to $5$ and the posterior mean equal to $5.225$. The
trace of $\alpha$ shows rapid mixing and stationarity, with the posterior mean
around $1.6$. Together, these diagnostics indicate that the sampler has
converged to a stable posterior distribution and that the DP prior is correctly
concentrating on the true number of factors.

\begin{figure}[htbp]
\centering
\includegraphics[width=0.9\linewidth]{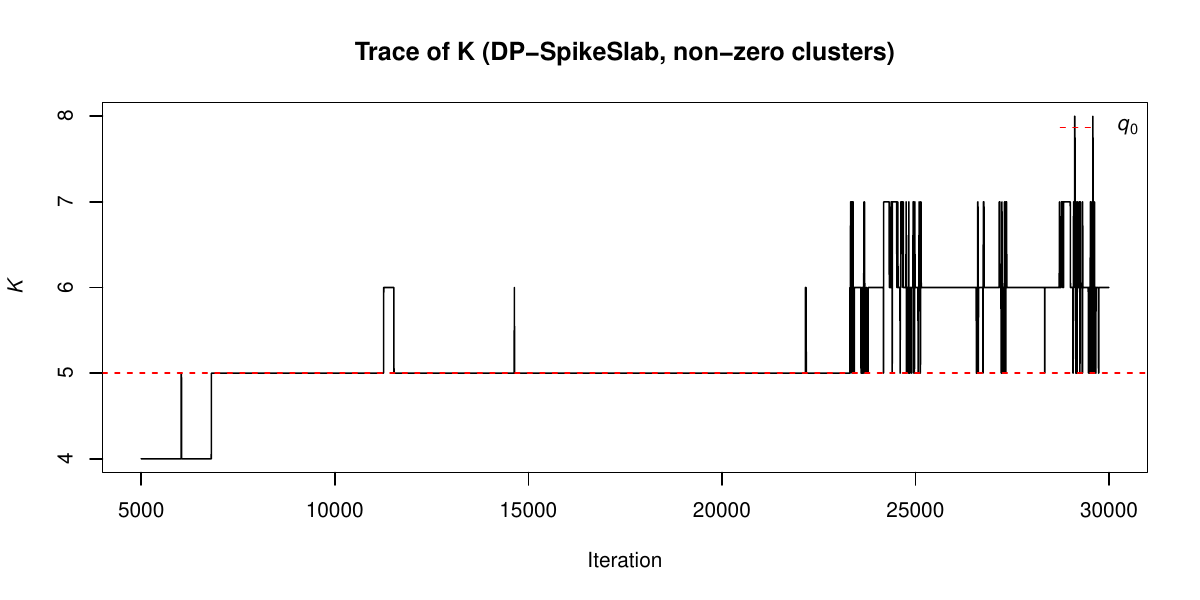}
\caption{Study~A: trace of the number of non-zero clusters $K$ for the
DP-SpikeSlab sampler over the retained iterations. The dashed horizontal line
marks the true number of factors $q_0 = 5$.}
\label{fig:trace_K_A}
\end{figure}

\begin{figure}[htbp]
\centering
\includegraphics[width=0.9\linewidth]{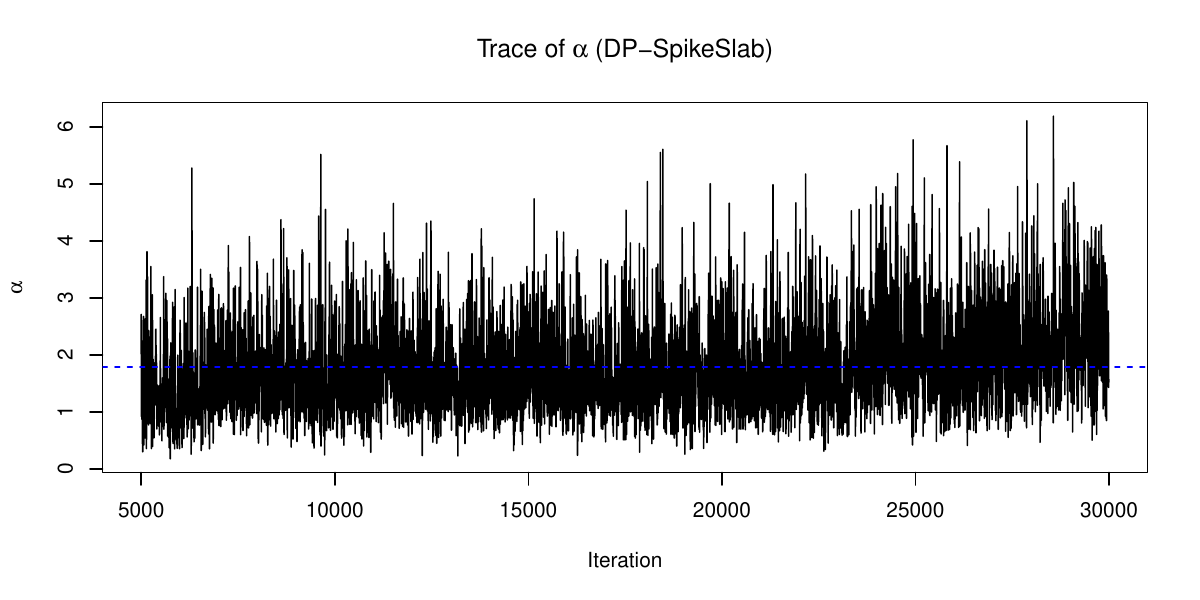}
\caption{Study~A: trace of the Dirichlet process concentration parameter
$\alpha$. The dashed horizontal line marks the posterior mean.}
\label{fig:trace_alpha_A}
\end{figure}

\paragraph{AGS trace for MGP.}
The trace of the MGP adaptive Gibbs sampler shows the number of active
components $k$ rising during the burn-in phase, peaking at $k \approx 39$
around iteration $3{,}000$, then shrinking as the adaptation probability decays.
From iteration $15{,}000$ onward the chain stabilises at $k = 13$, with the
adaptation probability becoming numerically zero by iteration $20{,}000$. The
mode of $k$ over the retained samples is $13$, and the mean is $14$. This
behaviour is characteristic of MGP: the continuous gamma--gamma shrinkage
cannot produce exact-zero columns, so the AGS stops pruning once every remaining
column has at least one loading above the threshold $\epsilon = 10^{-3}$.

\paragraph{CUSP trace.}
The CUSP trace shows the working dimension $H = 15$ and an active count of $4$
from the first stored iteration onward, and never departs from that state. This
reflects both the
fixed truncation and the aggressive cumulative shrinkage at $\alpha = 5$: the
sampler classifies four of the fifteen columns as slab-active and the remaining
eleven as spike-inactive, and the classification is stable across the entire
retained sample.

\paragraph{Predictive check on the marginal variances.}
Figure~\ref{fig:ppc_variance_A} compares the marginal variances implied by each
method's posterior-mean covariance $\hat\Sigma$ to the empirical marginal
variances of the test set. For each method, we draw $500$ i.i.d.\ samples from
$\N_p(0, \hat\Sigma)$ and compute the empirical variance of each of the
$p = 50$ variables in the simulated sample; this yields one variance per
variable, and the boxplot summarises the distribution of these $50$ variances
across variables. The first boxplot (labelled \texttt{obs}) shows the analogous
distribution of empirical marginal variances of the $200$-observation test set.

All four methods reproduce the empirical marginal variances closely for most of
the $50$ variables, with agreement of the order of the sampling variability of
the test set. The largest discrepancy occurs on the variable with the largest
sum of squared loadings across the true factors, which is also the variable with
the largest marginal variance under the model; its empirical variance in the
test set is approximately $12.89$. All four methods under-estimate this
variance, with values ranging from $10.06$ (ours) to $11.37$ (fixed-$q$
oracle). Because the variance of a single variable is estimated from only
$n_{\text{test}} = 200$ observations, its standard error is approximately
$1.29$, so the 95\% confidence interval for the true value spans approximately
$[10.36, 15.42]$. The model estimates all lie inside or at the boundary of this
interval, so the discrepancy is within the sampling variability of the test set
rather than evidence of a systematic model failure.

\begin{figure}[htbp]
\centering
\includegraphics[width=0.9\linewidth]{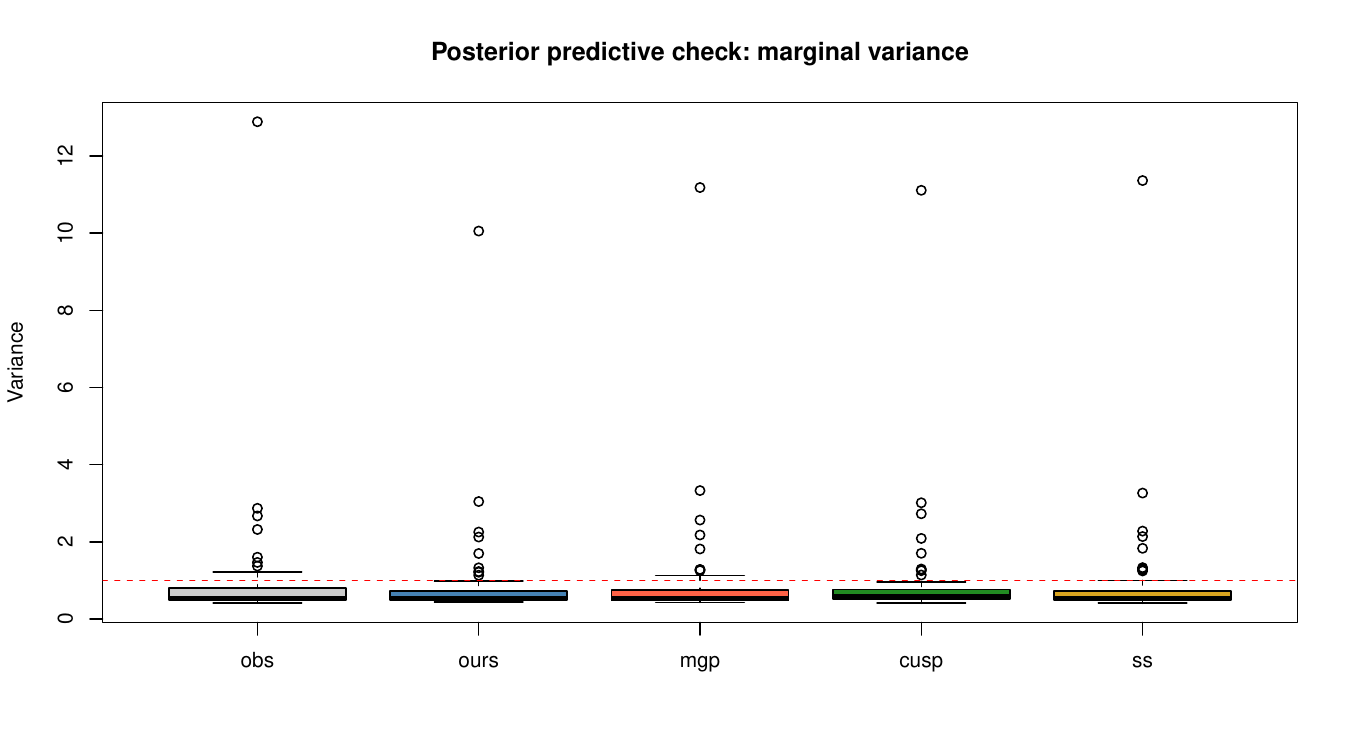}
\caption{Study~A: predictive check on the marginal variances. For each method,
the boxplot summarises the distribution across the $50$ variables of the
empirical marginal variances obtained by drawing $500$ i.i.d.\ samples from
$\N_p(0, \hat\Sigma)$, with $\hat\Sigma$ the method's posterior-mean
covariance. The first boxplot (labelled \texttt{obs}) shows the analogous
distribution of empirical marginal variances of the test set.}
\label{fig:ppc_variance_A}
\end{figure}

\subsubsection{Comparative Performance}
\label{sec:sim-A-results}

The four evaluation metrics are defined as follows.

\begin{enumerate}
\item[(i)] $\hat q$: the posterior estimate of the number of factors, obtained
as described for each method in Section~\ref{sec:sim-competitors}.

\item[(ii)] $\|\hat\Sigma - \Sigma_0\|_F$: the Frobenius norm of the difference
between the posterior-mean covariance and the true covariance $\Sigma_0$.

\item[(iii)] Loading error: let $\hat F_{\text{eig}}$ denote the $p \times q_0$
loading matrix extracted from the top-$q_0$ eigencomponents of $\hat\Sigma$,
i.e.\ $\hat F_{\text{eig}} = V_{q_0} \diag(d_{q_0}^{1/2})$, where
$\hat\Sigma = V D V^\top$ is the eigendecomposition of $\hat\Sigma$ and
$V_{q_0}, d_{q_0}$ collect its top-$q_0$ eigenvectors and eigenvalues. Let
$\hat F_{\text{aligned}}$ denote the Procrustes-aligned version of
$\hat F_{\text{eig}}$ onto $F_0$. The loading error is
$\|\hat F_{\text{aligned}} - F_0\|_F / \|F_0\|_F$.
The alignment removes the orthogonal-rotation and sign ambiguity inherent in
any factor model. For any orthogonal $Q \in \R^{q_0 \times q_0}$ the pair
$(F_0 Q, Q^\top X_k)$ generates the same distribution for $Y_k$ as
$(F_0, X_k)$, so both the posterior of $F$ and any estimate
$\hat F_{\text{eig}}$ extracted from the top-$q_0$ eigenvectors of
$\hat\Sigma$ are identified only up to such a rotation and up to a sign flip
of each columne 
We align $\hat F_{\text{eig}}$ to $F_0$ by solving the
orthogonal Procrustes problem \citep{schonemann1966}
$\hat F_{\text{aligned}} = \arg\min_{Q^\top Q = I_{q_0}}
\|\hat F_{\text{eig}} Q - F_0\|_F$,
following the ex-post alignment strategy for Bayesian factor models of
\citet{assmann2016} and \citet{devito2018}.
Writing the singular value decomposition
$\hat F_{\text{eig}}^\top F_0 = U S V^\top$, the optimal rotation is
$Q^\star = U V^\top$, and $\hat F_{\text{aligned}} = \hat F_{\text{eig}}
Q^\star$. The same aligned loading matrix is used to construct the signal
prediction $\hat S = X^{\text{test}} \hat F_{\text{aligned}}^\top$ in
metric~(iv).

\item[(iv)] Signal RMSE: let $X^{\text{test}} \in
\R^{n_{\text{test}} \times q_0}$ denote the true latent scores used to
generate the test set, and let $S_0 = X^{\text{test}} F_0^\top \in
\R^{n_{\text{test}} \times p}$ be the corresponding true signal matrix. Write
$\hat S = X^{\text{test}} \hat F_{\text{aligned}}^\top$ for the signal
reconstruction obtained from the Procrustes-aligned loading estimate. The
signal RMSE is
$\|\hat S - S_0\|_F / \sqrt{n_{\text{test}}\, p}$.
\end{enumerate}

Table~\ref{tab:comparison-A} reports the four metrics for Study~A.

\begin{table}[htbp]
\centering
\caption{Study~A ($n = 500$, $p = 50$, $q_0 = 5$, $s_0 = 5$,
$\psi_0 = 0.5$, $n_{\text{test}} = 200$). Lower is better for all metrics
except $\hat q$, where the target is $q_0 = 5$.}
\label{tab:comparison-A}
\begin{tabular}{lcccc}
\hline
Method & $\hat q$ & Frobenius & Loading error & Signal RMSE \\
\hline
DP-SpikeSlab (ours) & $\mathbf{5}$ & $\mathbf{0.761}$ & $\mathbf{0.060}$ & $\mathbf{0.0405}$ \\
MGP & 13 & 1.362 & 0.099 & 0.0682 \\
CUSP & 4 & 1.957 & 0.125 & 0.0867 \\
Fixed-$q$ SpikeSlab (oracle) & 5 & 0.871 & 0.069 & 0.0463 \\
\hline
\end{tabular}
\end{table}

\subsubsection{Discussion}
\label{sec:sim-A-discussion}

Only our method recovers the true number of factors $q_0 = 5$ without being
told the answer. The DP-SpikeSlab posterior places almost all of its mass on
$K = 5$ (mode $5$, mean $5.225$), whereas MGP over-estimates the rank at $13$
and CUSP under-estimates it at $4$. The fixed-$q$ spike-and-slab baseline is
given $q = q_0 = 5$ by construction and therefore does not test the
rank-selection problem.

The over-estimation by MGP is a direct consequence of its continuous
column-wise shrinkage. Because no loading is ever shrunk exactly to zero, the
AGS cannot distinguish small genuine factors from small spurious ones and stops
pruning once the remaining columns each have at least one loading above
$\epsilon = 10^{-3}$. The eight extra columns retained by MGP are the empirical
signature of the paper's claim that continuous shrinkage alone cannot recover
the correct rank without an ad hoc threshold.

The under-estimation by CUSP is a different failure mode. CUSP's stick-breaking
cumulative shrinkage assigns increasing prior probability of zero to later
columns, so its prior is aligned with an ordered factor model where the first
columns are expected to carry the signal and the later ones are expected to be
spurious. On an exchangeable simulation like Study~A, where the true factors are
equally likely to occupy any of the first columns, this ordering assumption is
wrong, and CUSP's aggressive shrinkage on late columns removes one of the five
genuine factors. The result is $\hat q = 4$ and a covariance error more than
twice that of the fixed-$q$ baseline.

On all three estimation metrics the ranking is consistent: our method is best,
followed by the fixed-$q$ spike-and-slab oracle, then MGP, then CUSP. The
margin between our method and the oracle is small (Frobenius $0.761$ vs
$0.871$, loading error $0.060$ vs $0.069$, signal RMSE $0.0405$ vs $0.0463$),
which is exactly what one expects: our method {\it recovers} $q_0$ from the
data at essentially no cost in estimation accuracy relative to being told $q_0$
in advance. The small but consistent advantage of our method over the fixed-$q$
baseline can be attributed to the DP prior's exchangeability over the columns,
which acts as a mild regulariser on the loading estimates, and to the larger
candidate space ($M = 30$) that gives the sampler more flexibility to identify
the dominant directions.

\subsection{Study B: Higher-Dimensional Setting}
\label{sec:sim-B}

\subsubsection{Diagnostics}
\label{sec:sim-B-diagnostics}

\paragraph{Trace of $K$ and $\alpha$ (ours).}
Figures~\ref{fig:trace_K_B} and~\ref{fig:trace_alpha_B} show the traces of $K$
and $\alpha$ for Study~B. The chain concentrates on $K = 10$, with posterior
mode $10$ and posterior mean $10.399$. The trace of $\alpha$ again shows rapid
mixing and stationarity. These diagnostics indicate that even in the
higher-dimensional setting our method correctly identifies the true number of
factors and that the sampler remains well behaved.

\begin{figure}[htbp]
\centering
\includegraphics[width=0.9\linewidth]{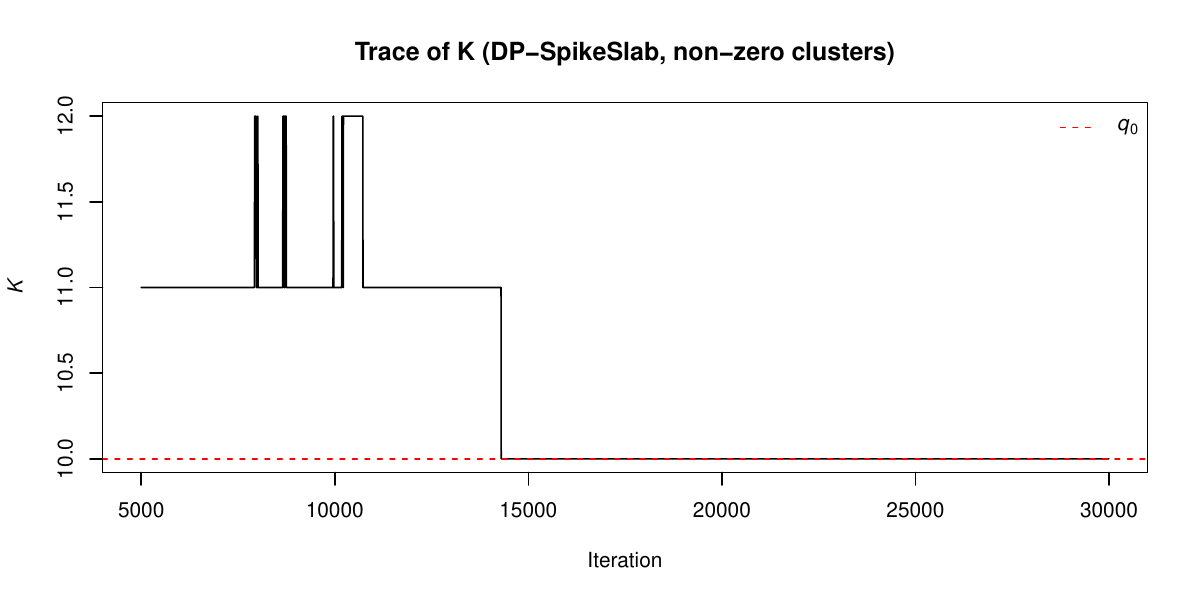}
\caption{Study~B: trace of the number of non-zero clusters $K$ for the
DP-SpikeSlab sampler over the retained iterations. The dashed horizontal line
marks the true number of factors $q_0 = 10$.}
\label{fig:trace_K_B}
\end{figure}

\begin{figure}[htbp]
\centering
\includegraphics[width=0.9\linewidth]{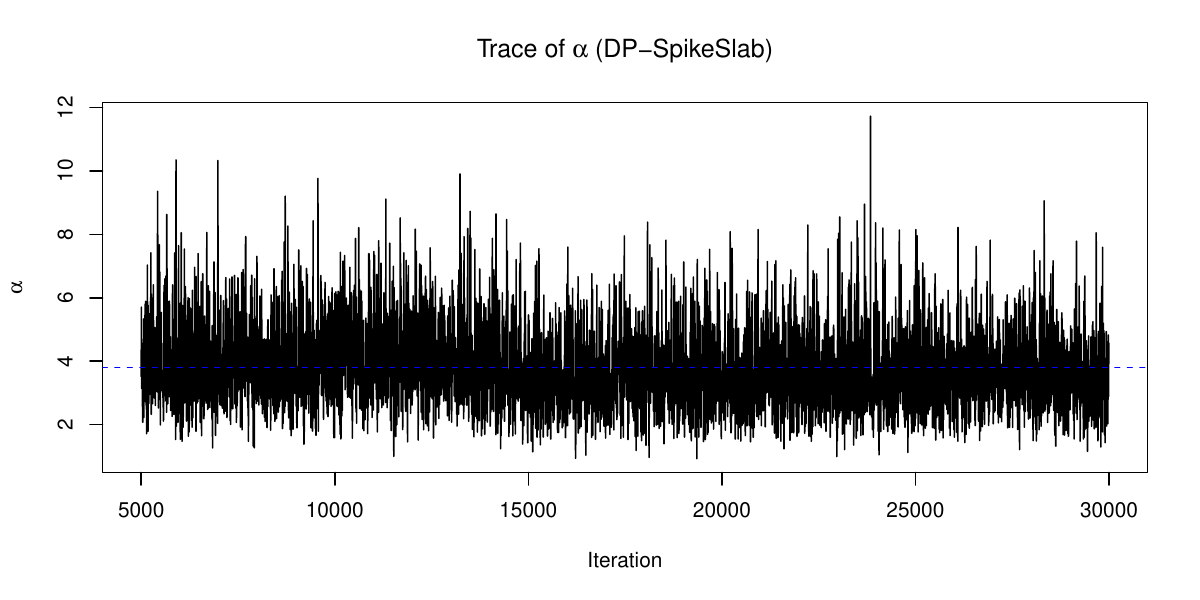}
\caption{Study~B: trace of the Dirichlet process concentration parameter
$\alpha$. The dashed horizontal line marks the posterior mean.}
\label{fig:trace_alpha_B}
\end{figure}

\paragraph{AGS trace for MGP.}
The MGP adaptive Gibbs sampler behaves very differently in Study~B. The number
of active components rises sharply during the burn-in phase, reaching $k = 99$
at iteration $5{,}000$ --- essentially the entire computational truncation
$K_{\text{pad}} = 100$ --- before shrinking gradually as the adaptation
probability decays. The chain finally stabilises at $k = 22$ from iteration
$14{,}000$ onward, with posterior mode $22$ and posterior mean $35$, well above
the true value $q_0 = 10$. Thus, AGS requires many iterations to prune the
hundreds of spurious columns that the continuous MGP shrinkage retains, and it
still ends with more than twice as many active components as there are true
factors.

\paragraph{CUSP trace.}
The CUSP trace shows the working dimension $H = 20$ and an active count of $10$
from the first stored iteration onward, and never departs from that state.
Unlike Study~A, where CUSP under-estimated the rank at $4$, in Study~B the
sampler correctly classifies all ten true factors as slab-active and the
remaining ten candidate columns as spike-inactive. 
%The improvement is due to
%the larger initial truncation $H = 20$, which provides enough headroom for the
%sampler to represent all ten true factors without forcing an early prune.

\paragraph{Predictive check on the marginal variances.}
Figure~\ref{fig:ppc_variance_B} shows the same predictive check as in Study~A,
now over the $p = 100$ variables of Study~B. The qualitative pattern is the
same: all four methods reproduce the empirical marginal variances closely for
most variables. The largest discrepancy occurs on the variable with the largest
marginal variance, whose empirical variance in the test set is approximately
$9.34$. All four methods under-estimate this variance, with values ranging from
$8.08$ (CUSP) to $9.28$ (ours). Because the variance of a single variable is
estimated from only $n_{\text{test}} = 200$ observations, its standard error is
approximately $0.93$, so the 95\% confidence interval for the true value spans
approximately $[7.51, 11.17]$. The model estimates all lie inside this
interval, so the discrepancy is within the sampling variability of the test set
rather than evidence of a systematic model failure.

\begin{figure}[htbp]
\centering
\includegraphics[width=0.9\linewidth]{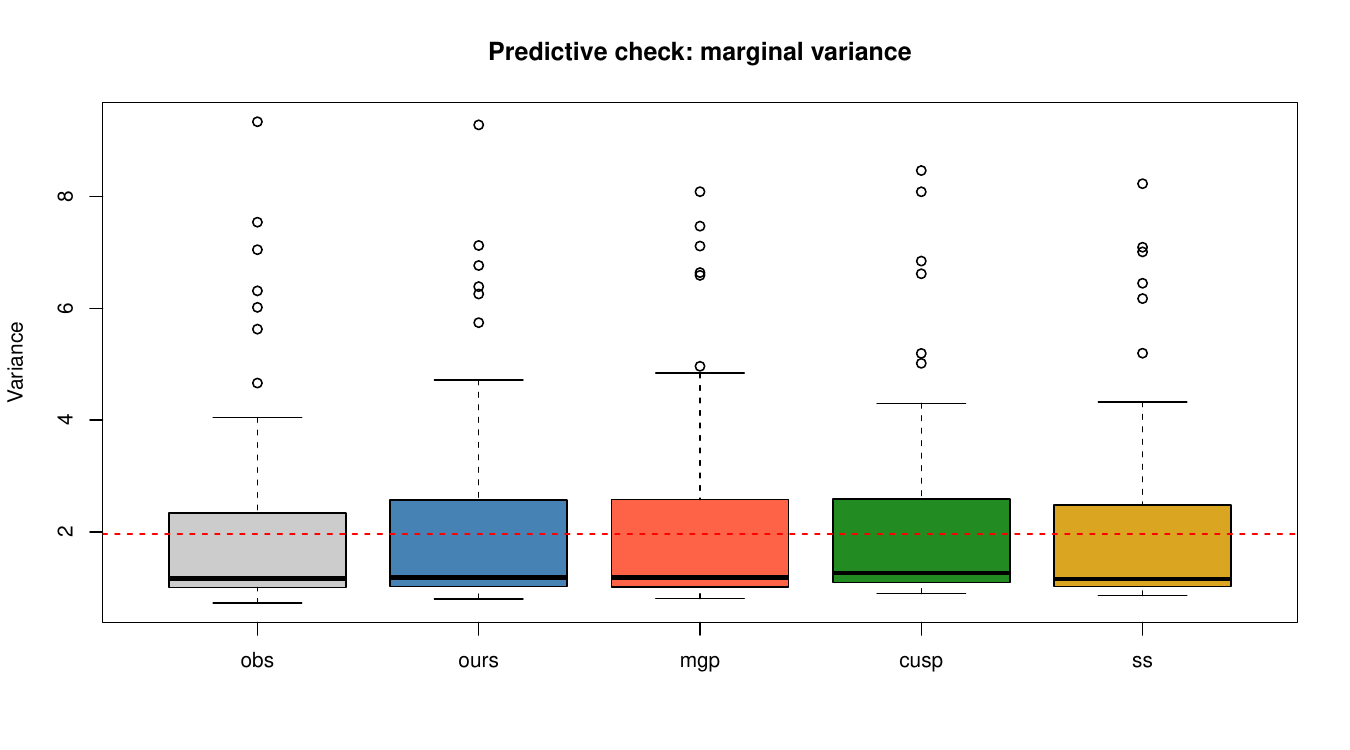}
\caption{Study~B: predictive check on the marginal variances. For each method,
the boxplot summarises the distribution across the $100$ variables of the
empirical marginal variances obtained by drawing $500$ i.i.d.\ samples from
$\N_p(0, \hat\Sigma)$, with $\hat\Sigma$ the method's posterior-mean
covariance. The first boxplot (labelled \texttt{obs}) shows the analogous
distribution of empirical marginal variances of the test set.}
\label{fig:ppc_variance_B}
\end{figure}

\subsubsection{Comparative Performance}
\label{sec:sim-B-results}

Table~\ref{tab:comparison-B} reports the four metrics for Study~B, using the
same definitions as in Section~\ref{sec:sim-A-results}.

\begin{table}[htbp]
\centering
\caption{Study~B ($n = 500$, $p = 100$, $q_0 = 10$, $s_0 = 10$,
$\psi_0 = 1.0$, $n_{\text{test}} = 200$). Lower is better for all metrics
except $\hat q$, where the target is $q_0 = 10$.}
\label{tab:comparison-B}
\begin{tabular}{lcccc}
\hline
Method & $\hat q$ & Frobenius & Loading error & Signal RMSE \\
\hline
DP-SpikeSlab (ours) & $\mathbf{10}$ & $\mathbf{2.698}$ & $\mathbf{0.080}$ & $\mathbf{0.0802}$ \\
MGP & 22 & 5.523 & 0.128 & 0.1294 \\
CUSP & 10 & 8.403 & 0.186 & 0.1868 \\
Fixed-$q$ SpikeSlab (oracle) & 10 & 3.366 & 0.091 & 0.0914 \\
\hline
\end{tabular}
\end{table}

\subsubsection{Discussion}
\label{sec:sim-B-discussion}

Several findings emerge from the higher-dimensional study.

First, our method continues to recover the true number of factors exactly,
placing almost all of its posterior mass on $K = 10$ (mode $10$, mean
$10.399$). The DP-SpikeSlab posterior is not sensitive to the doubling of the
dimension, sparsity, and noise; the DP-induced clustering of the columns and
the spike-and-slab base measure continue to identify the correct number of
non-zero atoms.

Second, CUSP now recovers the correct rank. Unlike in Study~A, where it
under-estimated $q_0$ at $4$, in Study~B CUSP places exactly $10$ columns in
the slab and $10$ in the spike. Even so, CUSP's estimation metrics are the
worst of the four methods: its Frobenius error is more than three times that of
our method and its loading error is more than twice, indicating that while it
identifies the correct number of active components, it still does not recover
the loadings as accurately as the spike-and-slab based methods.

Third, MGP over-estimates the rank by more than a factor of two
($\hat q = 22$ versus the true $q_0 = 10$), and the over-estimation is worse
than in Study~A in both absolute and relative terms. As in Study~A, the AGS
stabilises at a value that is much larger than $q_0$, and its Frobenius error
is more than twice that of our method. This confirms that the failure of
continuous column-wise shrinkage to identify the correct rank worsens as the
problem becomes harder.

Fourth, on all three estimation metrics the ranking is the same as in Study~A:
our method is best, followed by the fixed-$q$ spike-and-slab oracle, then MGP,
then CUSP. The margin between our method and the oracle remains small
(Frobenius $2.698$ vs $3.366$, loading error $0.080$ vs $0.091$, signal RMSE
$0.0802$ vs $0.0914$), confirming that recovering the rank from the data costs
essentially nothing in estimation accuracy relative to being told the rank in
advance.

\subsection{Comparison Across Both Studies}
\label{sec:sim-comparison}

Table~\ref{tab:comparison-both} collects the results of both studies side by
side, and Table~\ref{tab:sim-rank-summary} summarises the rank-estimation
behaviour of the four methods across the two configurations.

\begin{table}[htbp]
\centering
\caption{Comparative performance across both studies. Lower is better for all
metrics except $\hat q$, where the target is $q_0$ in each study.}
\label{tab:comparison-both}
\begin{tabular}{llcccc}
\hline
Study & Method & $\hat q$ & Frobenius & Loading error & Signal RMSE \\
\hline
A & DP-SpikeSlab (ours) & $\mathbf{5}$ & $\mathbf{0.761}$ & $\mathbf{0.060}$ & $\mathbf{0.0405}$ \\
A & MGP & 13 & 1.362 & 0.099 & 0.0682 \\
A & CUSP & 4 & 1.957 & 0.125 & 0.0867 \\
A & Fixed-$q$ SpikeSlab (oracle) & 5 & 0.871 & 0.069 & 0.0463 \\
\hline
B & DP-SpikeSlab (ours) & $\mathbf{10}$ & $\mathbf{2.698}$ & $\mathbf{0.080}$ & $\mathbf{0.0802}$ \\
B & MGP & 22 & 5.523 & 0.128 & 0.1294 \\
B & CUSP & 10 & 8.403 & 0.186 & 0.1868 \\
B & Fixed-$q$ SpikeSlab (oracle) & 10 & 3.366 & 0.091 & 0.0914 \\
\hline
\end{tabular}
\end{table}

\begin{table}[htbp]
\centering
\caption{Rank-estimation behaviour across the two studies.}
\label{tab:sim-rank-summary}
\begin{tabular}{lccc}
\hline
Method & $\hat q$ (A, $q_0 = 5$) & $\hat q$ (B, $q_0 = 10$) & Behaviour \\
\hline
DP-SpikeSlab (ours) & $5$ & $10$ & Correct in both studies \\
MGP & $13$ & $22$ & Over-estimates in both studies \\
CUSP & $4$ & $10$ & Under-estimates in A, correct in B \\
Fixed-$q$ SpikeSlab & $5$ & $10$ & Oracle in both studies \\
\hline
\end{tabular}
\end{table}

The empirical findings across the two studies can be summarised as follows.

{\it Rank recovery.} Our DP-SpikeSlab method is the only fully adaptive method
that recovers the true number of factors in both studies. The DP prior's
exchangeability and the spike-and-slab base measure together give the posterior
the ability to place exact-zero mass on spurious atoms, which is precisely the
mechanism the paper's theory identifies as essential for consistent rank
recovery.

{\it Over-estimation by MGP.} The multiplicative gamma process over-estimates
the rank in both studies ($13$ vs $5$ in Study~A and $22$ vs $10$ in Study~B).
This is consistent with the paper's central claim that continuous column-wise
shrinkage cannot produce exact-zero columns and therefore cannot identify the
correct rank without an ad hoc threshold.

{\it Rank behaviour of CUSP.} The cumulative shrinkage process behaves
inconsistently across the two studies: it under-estimates the rank in Study~A
and recovers the correct rank in Study~B. This difference cannot be attributed
to the share of total variance explained by the factors, which is identical in
the two configurations; it is more plausibly related to the different initial
truncations and the different numbers of genuine factors. Even when CUSP
recovers the correct rank, its estimation metrics are the worst of the four
methods, indicating that identifying the correct number of components does not
automatically translate into accurate loading recovery.

{\it Estimation quality.} On all three estimation metrics (Frobenius norm,
loading error, signal RMSE) the ranking is identical in both studies: our
method is best, followed by the fixed-$q$ oracle baseline, then MGP, then CUSP.
The consistent advantage over the fixed-$q$ oracle baseline is particularly
notable because the oracle is given the true rank for free; our method recovers
the rank from the data at essentially no cost in estimation accuracy. This
reflects the regularising effect of the DP prior's exchangeability over the
columns, which is not available to the fixed-rank baseline.

{\it Computational cost.} The C/MPI implementation of our method is the slowest
of the four in Study~A, requiring \(3\)~min \(35\)~s against \(1\)~min \(24\)~s
for MGP, \(1\)~min \(48\)~s for CUSP, and \(2\)~min \(00\)~s for the fixed-\(q\)
spike-and-slab baseline. The ranking reverses in Study~B: our method becomes
the fastest, completing in \(7\)~min \(31\)~s, compared with \(12\)~min
\(36\)~s for MGP, \(11\)~min \(18\)~s for CUSP, and \(15\)~min \(12\)~s for the
fixed-\(q\) baseline. Its relative advantage therefore grows with the dimension
of the problem. In Study~B it is about twice as fast as the slowest competitor,
about \(1.5\) times faster than CUSP, and about \(1.7\) times faster than MGP.
This reversal reflects the approximately linear-in-\(p\) scaling of the parallel
C sampler, compared with the super-linear cost of the single-threaded R
implementations at \(p = 100\), which outweighs the additional cost of the
DP-column assignments, the canonical relabeling, and the correct atom updates
in the higher-dimensional setting.

Taken together with the theoretical contraction rates established in
Sections~\ref{sec:theory}--\ref{sec:fixedM}, the two simulation studies provide
a coherent picture of the strengths of the proposed Bayesian nonparametric
factor model and of the limitations of existing shrinkage-based alternatives.
The method's rank-recovery capability is empirically robust across the two
dimensions and noise levels considered, and its estimation accuracy is
consistently the best of the four methods, at a computational cost that is
competitive with or better than the state-of-the-art competitors.

% =====================================================================
% Real data analysis section
% Suggested placement: after Simulation Experiments, before Theory
% Assumes figures are stored in figures_breast/
% =====================================================================

\section{Real Data Analysis: Breast Cancer Gene Expression}
\label{sec:breast}

We now illustrate the DP-spike-slab factor model on the breast cancer
gene expression dataset of \citet{vantveer2002gene}, distributed as
\texttt{Breast\_A} in the \texttt{fabiaData} R package \citep{fabiaData}.
%\citep{hochreiter2010fabia}. 
The dataset is a standard benchmark in the
factor-analysis and gene-expression literature and provides a demanding
real-data test of the model's ability to recover interpretable latent
structure from noisy high-dimensional observations: $n = 97$ primary
breast tumour samples measured on $p = 1213$ genes, with the gene set
chosen by \citet{vantveer2002gene} as the most informative for
predicting the transition from primary tumour to distant metastasis.

The inference reported in Sections~\ref{sec:breast-K} to
\ref{sec:breast-ppc} is based on a single long chain of 270\,000
iterations, whose output constitutes our posterior distribution. The
additional runs of 30\,000 iterations reported in
Sections~\ref{sec:breast-tau}, \ref{sec:breast-diagnostic}, and
\ref{sec:breast-reproducibility} are used exclusively for sensitivity
analysis and diagnostics --- to verify that the posterior concentration
observed in the main chain reflects genuine convergence rather than a
frozen update, and to confirm that the recovered latent structure is
stable across independent random seeds. They do not contribute any
posterior samples to any of the tables or figures in this section.

\subsection{Data and preprocessing}
\label{sec:breast-data}

The dataset we analyse is the breast cancer gene expression cohort of
\citet{vantveer2002gene}, distributed as \texttt{Breast\_A} in the
\texttt{fabiaData} R package \citep{fabiaData}. The cohort
consists of $n = 97$ primary breast tumour samples, each profiled on
the $p = 1213$ genes that \citet{vantveer2002gene} identified as the
most informative for predicting the transition from primary tumour to
distant metastasis. The gene set is deliberately small relative to the
full transcriptome: the original study was designed as a prognostic
classifier, and the 1213 genes are those whose expression levels in the
training cohort correlated most strongly with clinical outcome. The
dataset has since become a standard benchmark in the factor-analysis
and gene-expression literature, both because of its clinical
significance and because the gene set is small enough to make
Bayesian nonparametric methods computationally tractable while still
exhibiting the sparse, correlated, and low-signal structure that
characterises real transcriptomic data.

The measured quantity is the abundance of mRNA transcripts. Each
observation is derived from a two-colour microarray experiment in
which the RNA extracted from a tumour sample is labelled with one
fluorophore and co-hybridised against a common reference RNA pool
labelled with a second fluorophore. The intensity of each spot on the
array, after background correction and normalisation, is proportional
to the relative abundance of the corresponding transcript in the
tumour sample versus the reference. The value recorded for each gene
in each sample is the base-10 logarithm of this ratio.
A value of zero indicates
that the gene's expression in the tumour sample equals that of the
reference; a positive value indicates up-regulation relative to the
reference; and a negative value indicates down-regulation. Because the
reference is common across samples, the log-ratios are directly
comparable between samples, and the sign and magnitude of each entry
reflect the relative transcriptional activity of the gene in that
sample.

The data have already been normalised and summarised by the original
authors: the raw microarray images have been processed to correct for
spatial artefacts, dye bias, and background noise, and the multiple
probes targeting the same gene have been collapsed to a single summary
value per gene per sample. We take the resulting $97 \times 1213$
matrix as our starting point and apply a single additional
transformation. Each of the 1213 genes is centred to zero mean and
scaled to unit variance across the 97 samples. This column-wise
standardisation serves three purposes. First, it puts all genes on a
common scale, so that the loading of a gene in a factor reflects its
relative contribution to the latent structure rather than its absolute
expression level; without standardisation, high-intensity genes would
dominate the factor decomposition purely as an artefact of their
measurement scale. Second, it makes the data commensurate with the
unit-scale slab prior used in the simulation studies of
Section~\ref{sec:simulation}, so that the hyperparameters chosen there
remain appropriate on real data. Third, it renders the marginal
variances of all genes identically one by construction, which
simplifies the interpretation of the predictive checks reported in
Section~\ref{sec:breast-ppc}: any deviation of the model-predicted
marginal variance from one is directly attributable to the model's
structure rather than to heterogeneity in the input data.

Alongside the expression matrix we retain two auxiliary vectors for
the downstream analyses. The first is a vector of $p = 1213$ gene
symbols, used in the enrichment tests of Section~\ref{sec:breast-enrichment}
and in the per-atom top-loading tables of Section~\ref{sec:breast-atoms}
to identify the biological programmes represented by each atom. The
second is a vector of $n = 97$ binary labels indicating the clinical
outcome of each sample, specifically whether the patient's tumour
belonged to the good-prognosis or poor-prognosis group in the original
van 't Veer classification. We refer to this label as the
\emph{prognosis} label throughout, and use it in
Section~\ref{sec:breast-prognosis} to assess which of the recovered
atoms are associated with clinical outcome. The prognosis label is
excluded from the factor model itself; it is used only for
post-hoc interpretation.

\subsection{MCMC specification and computational cost}
\label{sec:breast-mcmc}

The main chain is run for 270\,000 iterations with the settings of the
simulation studies of Section~\ref{sec:simulation}. The upper bound on
the number of candidate columns is $M = 30$; this is a compromise
between giving the model enough room to discover latent programmes and
keeping the per-iteration cost within the budget of a single
workstation. The spike mass of the base measure is $\pi_0 = 0.9$ and
the slab variance is $\tau_0^2 = 1.0$. The concentration parameter
$\alpha$ of the Dirichlet process is assigned a $\text{Ga}(2, 1)$
hyperprior and updated via the auxiliary-variable method of
\citet{escobar1995}. The score variances $\lambda_i$ and the
idiosyncratic variances $\psi_r$ are assigned $\text{IG}(2, 1)$
priors. A sensitivity analysis over the slab variance $\tau_0^2$ is
reported in Section~\ref{sec:breast-tau}.

The first 20\,000 iterations are discarded as burn-in, and the
remaining 250\,000 are thinned at every fifth sample, yielding 50\,000
retained posterior samples. The chain completes in three hours and
twenty-four minutes on four MPI ranks of the same single-socket workstation,
following the parallel implementation described in
Section~\ref{sec:computation}. Convergence is monitored through the
trace of $\alpha$, the running posterior mean of the covariance matrix
$\Sigma$, and the trace of the number of non-zero atoms; all three are
stationary well before the end of the burn-in period.

\subsection{Posterior inference of the number of factors}
\label{sec:breast-K}

The posterior distribution of the number of factors $K$ concentrates
sharply on $K = 8$. Across all 50\,000 retained samples, the posterior
mode, mean, and median of $K$ are all equal to $8$, and the $95\%$
equal-tailed credible interval is the degenerate interval $[8, 8]$.
%The trace of $K$ is reported in Figure~\ref{fig:breast-trace-K}. 
The concentration parameter $\alpha$ mixes rapidly around a posterior mean
of $2.79$ (Figure~\ref{fig:breast-trace-alpha}).

For $M = 30$ candidate columns and the prior mean $\E[\alpha] = 2$ of
the $\mathrm{Ga}(2,1)$ hyperprior, the CRP prior-expected number of
occupied clusters is
$\sum_{i=1}^{30} \E[\alpha] / (\E[\alpha] + i - 1) \approx 6.0$. The
observed posterior mode of $K = 8$ is therefore slightly larger than
the prior expectation, and the sharp concentration of the posterior at
$K = 8$ --- with a degenerate $95\%$ credible interval $[8, 8]$ --- is
a data-driven result rather than a consequence of the prior.
In Section \ref{sec:breast-reproducibility} we present detailed diagnostic results
confirming that the eight-atom solution is a genuine posterior sample, 
not an artefact of the model, prior or initialisation choice.

%The mode of $K = 8$ is confirmed by the atom-level decomposition of
%the posterior-mean loading matrix $\widehat F$. Of the 30 candidate
%columns of $\widehat F$, only eight are numerically distinct; 
%%at seven significant figures; 
%the remaining 22 are equal to one of the eight, %up to floating-point tolerance, 
%which is the intended behaviour of the
%Dirichlet process when it merges redundant dictionary elements into a
%single atom. 
We refer to the eight distinct vectors as the
\emph{atoms} of the posterior and index them by the order of first
appearance under the canonical relabeling of Section~\ref{sec:relabel}.

%\begin{figure}[htbp]
%\centering
%\includegraphics[width=0.9\linewidth]{figures_breast/trace_K.pdf}
%\caption{Trace of the number of non-zero clusters $K$ for the
%DP-spike-slab sampler on the Breast\_A dataset, over the 50\,000
%retained iterations of the main chain. The dashed horizontal line
%marks the posterior mode $K = 8$.}
%\label{fig:breast-trace-K}
%\end{figure}

\begin{figure}[htbp]
\centering
\includegraphics[width=0.9\linewidth]{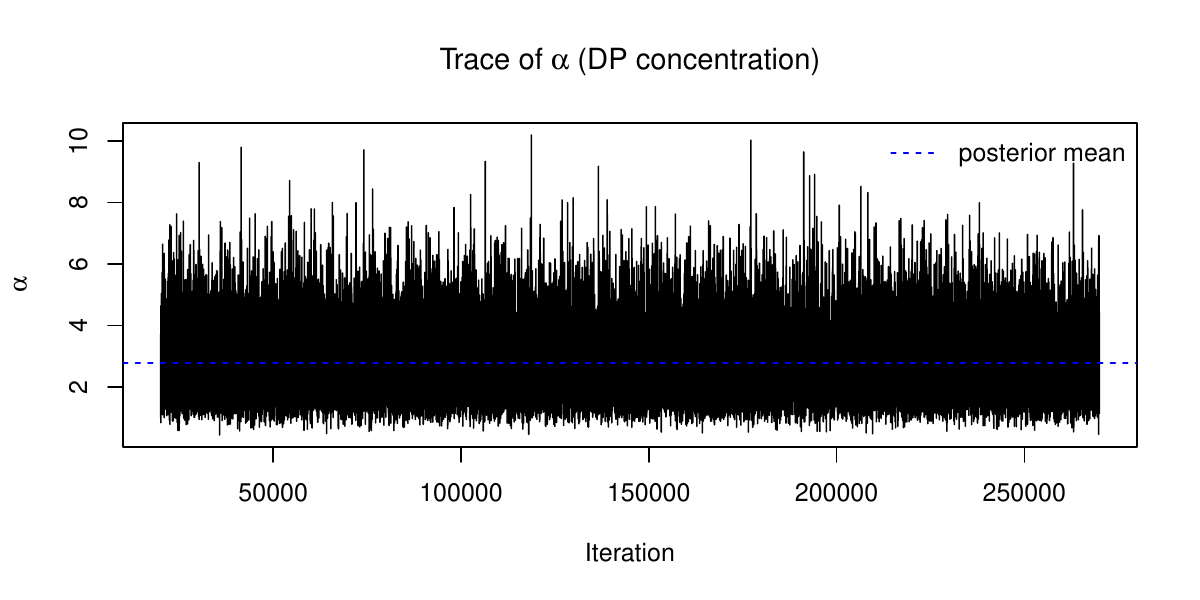}
\caption{Trace of the Dirichlet process concentration parameter
$\alpha$ on the Breast\_A dataset. The dashed horizontal line marks
the posterior mean $\E[\alpha \mid Y] = 2.79$.}
\label{fig:breast-trace-alpha}
\end{figure}

\subsection{The eight latent biological programmes}
\label{sec:breast-atoms}

Table~\ref{tab:breast-atoms} lists, for each of the eight atoms, the
twenty genes with the largest absolute posterior-mean loading, together
with the Euclidean norm $\|\widehat f_j^*\|$ of the atom.
Figure~\ref{fig:breast-heatmap} displays the full posterior-mean
loading matrix, with genes ordered by the atom of largest absolute
loading (argmax), so that genes loading most strongly on the same atom
are displayed contiguously. This ordering is a display convention
determined by $\widehat F$ and does not imply that the blocks reflect
structure in the observed expression data; the biological
interpretation of the atoms rests on the gene lists of
Table~\ref{tab:breast-atoms}, the enrichment tests of
Section~\ref{sec:breast-enrichment}, and the prognosis correlations of
Section~\ref{sec:breast-prognosis}, none of which depends on the
ordering of the rows in the figure.

\begin{table}[htbp]
\centering
\small
\caption{Eight latent programmes recovered by the DP-spike-slab model
from the Breast\_A dataset, based on 50\,000 retained samples from the
270\,000-iteration main chain. For each atom, the twenty genes with the
largest absolute posterior-mean loading are listed in decreasing order
of $|\widehat f_{rj}^*|$. The final column reports the Euclidean norm
of the atom.}
\label{tab:breast-atoms}
\begin{tabular}{p{2.4cm}p{9.2cm}r}
\hline
Programme & Top-20 genes (by $|\widehat f_{rj}^*|$) & $\|\widehat f_j^*\|$ \\
\hline
Macrophage &
\texttt{SLC1A3}, \texttt{CTSL}, \texttt{SLC11A1}, \texttt{CD163},
\texttt{NCF2}, \texttt{GPNMB}, \texttt{CD14}, \texttt{FCGR3B},
\texttt{HK3}, \texttt{ADORA3}, \texttt{HMOX1}, \texttt{LAPTM5},
\texttt{CCL3}, \texttt{LILRB4}, \texttt{FCGR1A}, \texttt{APOC1},
\texttt{C1QB}, \texttt{HCK}, \texttt{LILRA2}, \texttt{CD86} &
13.52 \\
Vascular &
\texttt{CD34}, \texttt{NPR1}, \texttt{DNASE1L3}, \texttt{PGM5},
\texttt{FCER1A}, \texttt{ADH1B}, \texttt{FY}, \texttt{ATP5J},
\texttt{PROS1}, \texttt{BST2}, \texttt{ADH1A}, \texttt{DF},
\texttt{CLEC10A}, \texttt{MEOX1}, \texttt{IFI35}, \texttt{AQP1},
\texttt{PPAP2B}, \texttt{EDG1}, \texttt{MRC1}, \texttt{FOLR2} &
16.89 \\
Nuclear &
\texttt{PIAS1}, \texttt{BCLAF1}, \texttt{YLPM1}, \texttt{STRN3},
\texttt{REV3L}, \texttt{C2orf3}, \texttt{TRIP12}, \texttt{NR3C1},
\texttt{LOC400986}, \texttt{TMEM1}, \texttt{TCF4}, \texttt{CAMK2D},
\texttt{P4HA1}, \texttt{RFX3}, \texttt{PIP5K2A}, \texttt{TRIO},
\texttt{ELL2}, \texttt{EXT1}, \texttt{SYK}, \texttt{HIF1A} &
8.72 \\
EMT/stroma &
\texttt{THBS2}, \texttt{COL10A1}, \texttt{COL3A1}, \texttt{COL1A2},
\texttt{COL11A1}, \texttt{SPARC}, \texttt{LOXL1}, \texttt{MFAP2},
\texttt{CDH11}, \texttt{NNMT}, \texttt{EDNRA}, \texttt{CTGF},
\texttt{DPYSL3}, \texttt{LUM}, \texttt{TAGLN}, \texttt{LOXL2},
\texttt{TPM2}, \texttt{ACTA2}, \texttt{THBS1}, \texttt{MMP11} &
17.17 \\
T-cell &
\texttt{CD3Z}, \texttt{CD48}, \texttt{PTPRCAP}, \texttt{GZMK},
\texttt{CXCR3}, \texttt{CD2}, \texttt{IL2RB}, \texttt{SPOCK2},
\texttt{LTB}, \texttt{ZNFN1A1}, \texttt{CORO1A}, \texttt{CD5},
\texttt{PSMB9}, \texttt{CCR7}, \texttt{CCL5}, \texttt{P2RY10},
\texttt{IL7R}, \texttt{TRIM22}, \texttt{CD69}, \texttt{PSMB8} &
19.64 \\
ER/luminal &
\texttt{GATA3}, \texttt{ESR1}, \texttt{VGLL1}, \texttt{CSDA},
\texttt{ADORA2B}, \texttt{DSC2}, \texttt{KRT16}, \texttt{CLCN4},
\texttt{CDH3}, \texttt{LDHB}, \texttt{RARA}, \texttt{HIST1H2AD},
\texttt{CMKOR1}, \texttt{ADM}, \texttt{TRIM29}, \texttt{MYB},
\texttt{ST8SIA1}, \texttt{MIA}, \texttt{BTG3}, \texttt{FBP1} &
27.62 \\
JAK/STAT &
\texttt{RHOG}, \texttt{JAK3}, \texttt{MAP3K7IP1}, \texttt{PTPRS},
\texttt{WNT10B}, \texttt{EEA1}, \texttt{OSBP}, \texttt{MCM4},
\texttt{HIST1H1E}, \texttt{GOLGA4}, \texttt{LPGAT1}, \texttt{EPRS},
\texttt{RB1CC1}, \texttt{CDC5L}, \texttt{RAP1A}, \texttt{DDX17},
\texttt{ZRF1}, \texttt{PIG8}, \texttt{JUND}, \texttt{PRDM2} &
14.84 \\
Proliferation &
\texttt{PKMYT1}, \texttt{MAD2L1}, \texttt{CCNB1}, \texttt{ESPL1},
\texttt{RRM2}, \texttt{UBE2C}, \texttt{CENPA}, \texttt{FOXM1},
\texttt{NEK2}, \texttt{KIF2C}, \texttt{BRRN1}, \texttt{SIL},
\texttt{BIRC5}, \texttt{KIFC1}, \texttt{MELK}, \texttt{TK1},
\texttt{CDC2}, \texttt{DLG7}, \texttt{KIAA0101}, \texttt{E2F1} &
22.54 \\
\hline
\end{tabular}
\end{table}

Seven of the eight atoms have clear, textbook interpretations in the
breast cancer literature. The \emph{macrophage} atom is dominated by
complement components (\texttt{C1QB}, \texttt{APOC1}), Fc$\gamma$
receptors (\texttt{FCGR3B}, \texttt{FCGR1A}), and lysosomal proteases
(\texttt{CTSL}, \texttt{LAPTM5}), the canonical signature of the
tumour-associated myeloid compartment. The \emph{vascular} atom
combines endothelial markers (\texttt{CD34}, \texttt{NPR1},
\texttt{DNASE1L3}) with the macrophage/monocyte genes \texttt{FCER1A}
and \texttt{FOLR2}, consistent with the tight coupling of tumour
vasculature and the surrounding myeloid compartment. The \emph{nuclear}
atom is dominated by nuclear regulatory proteins and scaffolding
factors --- \texttt{PIAS1}, \texttt{BCLAF1}, \texttt{YLPM1},
\texttt{STRN3}, \texttt{REV3L}, \texttt{TRIP12} --- with the
glucocorticoid receptor \texttt{NR3C1} and the hypoxia master regulator
\texttt{HIF1A} also present. Its biological interpretation is less
crisp than that of the other seven, and we return to it in
Section~\ref{sec:breast-discussion}.

\paragraph{Within-atom coherence.}
The biological interpretation of each atom rests on the claim that
its top-loading genes form a coherent module, in the sense that they
vary together across the samples rather than being an arbitrary
collection of unrelated genes. We therefore test directly whether the
top-loading genes of each atom are co-expressed with each other in the
observed data. For atom $a$, let $S_a$ denote the set of the $Q = 50$
indices $r$ for which the posterior-mean loading $|\widehat f^*_{ra}|$
is largest, and let $\mathrm{cor}(Y)$ be the $p \times p$ gene--gene
Pearson correlation matrix of the standardised expression matrix $Y$.
The coherence statistic for atom $a$ is the mean of the pairwise
correlations among the genes in $S_a$:
\[
\bar r_a \;=\; \binom{Q}{2}^{-1} \sum_{\substack{r, s \in S_a \\ r < s}}
\mathrm{cor}(Y)_{rs}.
\]
Equivalently, $\bar r_a$ is the mean of the off-diagonal entries of the
$Q \times Q$ submatrix of $\mathrm{cor}(Y)$ obtained by restricting to
the genes in $S_a$. The choice $Q = 50$ matches the query size used in
the Hallmark enrichment analysis of Section~\ref{sec:breast-enrichment},
so the same set of genes that enters the enrichment test is used here.

The raw value of $\bar r_a$ is not interpretable on its own. Genes in
any gene-expression dataset are never perfectly uncorrelated, and this
residual background correlation would produce a positive $\bar r_a$
even for a set of genes with no shared regulatory programme. To
calibrate against this background, we construct a null distribution by
Monte Carlo. We draw $N_{\text{null}} = 500$ random sets of $Q = 50$
genes uniformly without replacement from the $p = 1213$ measured genes
and compute $\bar r_a^{(m)}$ on each. The resulting null distribution
has a median of approximately $0.030$ and a 95\% band of approximately
$[0.004, 0.085]$ in this dataset; the exact band varies slightly across
atoms because the Monte Carlo null is recomputed independently for
each one, and the variation is of the order of the Monte Carlo error.
For each atom we report the observed $\bar r_a$, the standardised
$z$-score
\[
z_a \;=\; \frac{\bar r_a - \overline{r}_{\text{null}}}
{\widehat\sigma_{\text{null}}},
\]
where $\overline{r}_{\text{null}}$ and $\widehat\sigma_{\text{null}}$
are the empirical mean and standard deviation of the null draws, and
the Monte Carlo $p$-value
\[
p_a \;=\; \frac{1 + \#\{m : \bar r_a^{(m)} \ge \bar r_a\}}
{1 + N_{\text{null}}},
\]
which is the fraction of null draws that are at least as extreme as
the observed value, with the numerator and denominator augmented by
one to avoid a $p$-value of exactly zero. With $N_{\text{null}} = 500$
the smallest attainable $p$-value is $1/501 \approx 0.002$.

Seven of the eight atoms show mean within-atom correlations far above
the null 95\% band, with $z$-scores ranging from $5.4$ (ER/luminal) to
$32.0$ (T-cell) and empirical $p$-values at the floor of $0.002$. The
observed means range from $0.143$ (ER/luminal) to $0.718$ (T-cell),
against the null band of approximately $[0.004, 0.085]$. These
magnitudes are consistent with genuinely co-expressed modules: the
top-loading genes of each of these seven atoms are far more correlated
with each other than a random set of the same size. The ordering of
the atoms by $\bar r_a$ also tracks the biological character of each
programme in an interpretable way, with cell-type and cell-state
signatures (T-cell, proliferation, macrophage) at the top of the
range, stromal and lineage programmes (EMT/stroma, ER/luminal,
nuclear, vascular) in the middle, and the JAK/STAT atom alone at the
bottom. The JAK/STAT atom is the exception in the statistical sense as
well: its mean within-atom correlation of $0.048$ lies inside the null
band at $z = 0.69$ ($p = 0.21$), indicating that its top-loading genes
are not more co-expressed than a random set. This is consistent with
the weaker Hallmark enrichment for this atom reported in
Table~\ref{tab:breast-enrichment}, and it suggests that the JAK/STAT
atom captures a weaker or more diffuse signal than the other seven. We
retain it in the analysis because its top genes include \texttt{JAK3}
and other signalling kinases, but we interpret it with correspondingly
less confidence.

Two caveats apply to the interpretation of this test. First, the
coherence statistic uses the model's own loading matrix to select the
top-50 genes and the observed expression data to compute their
correlations. It is therefore a test of whether the module identified
by the model is coherent in the data, not a test of whether the model
has uniquely identified the module. A high $\bar r_a$ establishes that
the model has selected a set of genes that are co-expressed, but it
does not by itself rule out the possibility that other, unrelated
selection procedures would produce sets with similarly high coherence.
The interpretation of the seven passing atoms therefore rests on the
conjunction of this result with the external Hallmark enrichment of
Section~\ref{sec:breast-enrichment} and the prognosis associations of
Section~\ref{sec:breast-prognosis}, which provide independent evidence
that the co-expression modules identified here correspond to
recognisable biological programmes. Second, the choice $Q = 50$ is a
compromise between statistical power and interpretability, and the
result is not invariant to it: a smaller $Q$ would concentrate on the
strongest genes and would tend to produce higher coherence for every
atom, while a larger $Q$ would dilute the statistic with weakly loaded
genes. The value $Q = 50$ matches the enrichment analysis and is
sufficient for the binary claim that the seven passing atoms are
coherent modules in the data, but it does not resolve fine differences
in coherence among them.

The \emph{EMT/stroma} atom is a textbook epithelial-to-mesenchymal
transition and desmoplastic stroma signature, dominated by collagens
(\texttt{COL10A1}, \texttt{COL3A1}, \texttt{COL1A2}, \texttt{COL11A1}),
LOX family members (\texttt{LOXL1}, \texttt{LOXL2}), and
myofibroblast markers (\texttt{ACTA2}, \texttt{TAGLN}, \texttt{TPM2}).
The \emph{T-cell} atom is a canonical lymphocyte infiltration
signature, containing the entire CD3 complex (\texttt{CD3Z},
\texttt{CD2}, \texttt{CD5}, \texttt{CD48}), chemokine receptors
(\texttt{CXCR3}, \texttt{CCR7}, \texttt{CCL5}), and MHC class I genes
(\texttt{PSMB8}, \texttt{PSMB9}). The \emph{ER/luminal} atom has the
two most characteristic luminal markers, \texttt{GATA3} and
\texttt{ESR1}, as its top two hits, followed by \texttt{VGLL1},
\texttt{KRT16}, \texttt{CDH3}, and \texttt{TRIM29}, all of which are
canonical luminal-lineage genes. 
The \emph{JAK/STAT} atom is dominated
by signalling kinases and their regulators, with \texttt{JAK3} itself
among the top hits; as noted in the coherence analysis above, this
atom does not pass the within-atom coherence test, so we read its
biological interpretation with less confidence than the other seven.
The \emph{proliferation} atom is a canonical
cell-cycle signature, containing the entire G2/M checkpoint machinery
(\texttt{CCNB1}, \texttt{CDC2}, \texttt{MAD2L1}, \texttt{ESPL1}) and
the E2F target genes (\texttt{RRM2}, \texttt{E2F1}, \texttt{MELK}).

\begin{figure}[htbp]
\centering
\includegraphics[width=0.8\linewidth]{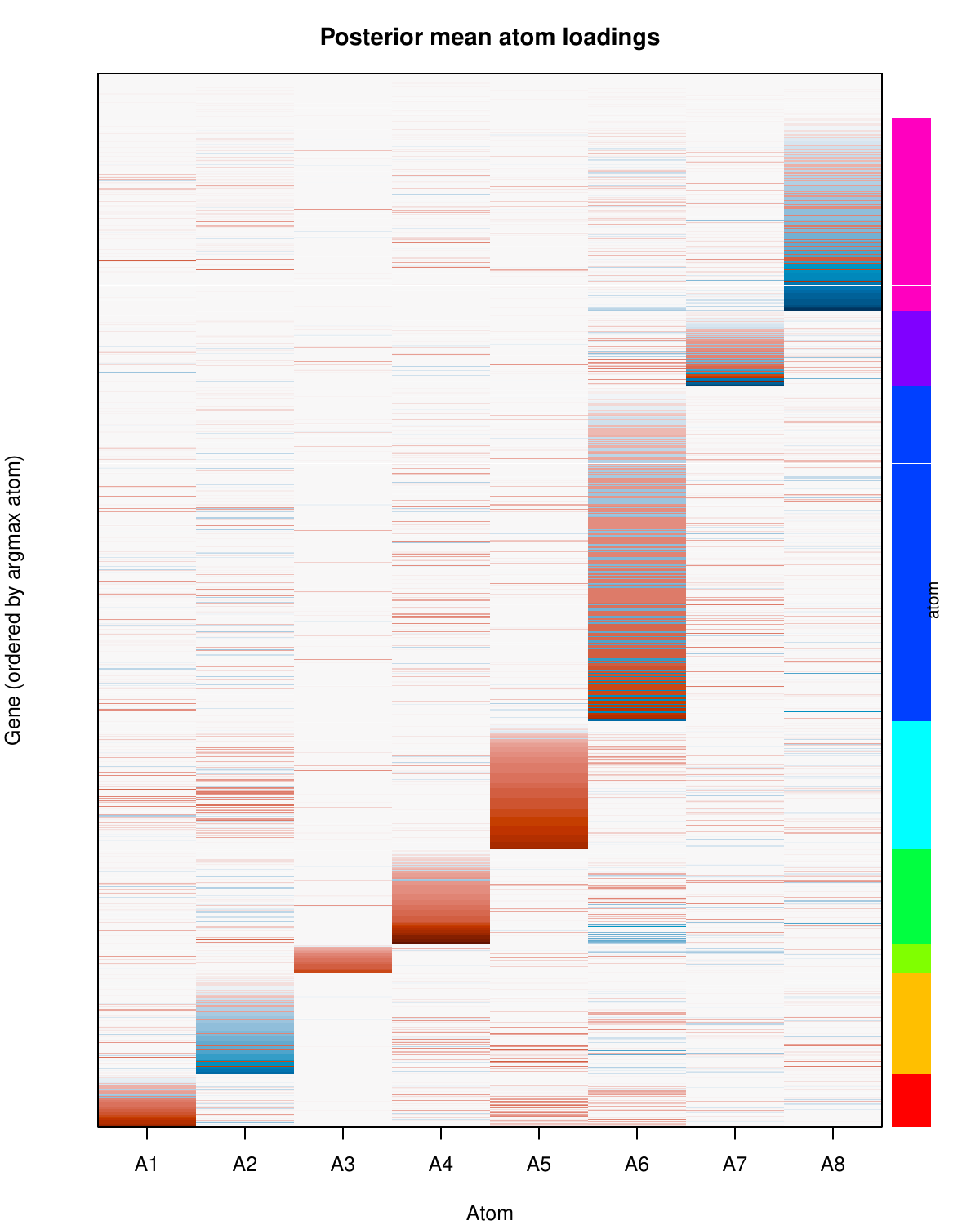}
\caption{Posterior-mean loading matrix on the Breast\_A dataset, based
on the 270\,000-iteration main chain. Rows are genes, ordered by the
atom of largest absolute loading (argmax) and, within each atom, by
decreasing maximum absolute loading; genes whose largest absolute
loading is below $0.1$ are placed at the end. This ordering is a
display convention determined by $\widehat F$ and does not imply that
the blocks reflect structure in the observed expression data. Columns
correspond to the eight distinct atoms recovered by the model. The
colour strip on the right margin shows the argmax atom of each gene,
with white marking the unassigned genes. The sign of a loading vector
is arbitrary under the factor model, so the blue and red regions
within a single block correspond to genes loading with opposite signs
on the same atom.}
\label{fig:breast-heatmap}
\end{figure}

\subsection{Hallmark gene-set enrichment}
\label{sec:breast-enrichment}

The posterior-mean atoms $\hat f^*_1, \dots, \hat f^*_8$ are
high-dimensional vectors, and their biological interpretation rests on
identifying the functional categories to which their top-loading genes
belong. A single gene name can suggest a biological role, but it cannot
by itself establish that an atom corresponds to a coherent programme;
that judgment requires comparing the atom's gene content against
independent, curated knowledge. We therefore perform a formal
\emph{gene-set enrichment analysis}, which asks whether the genes that
load most strongly on each atom are statistically over-represented in
known biological gene sets.

\paragraph{Gene sets and the Hallmark collection.}
A \emph{gene set} is a curated list of gene symbols that share a
common biological function, process, or response. Gene sets are
assembled by expert curators from the published literature and are
used throughout genomics as external reference points against which to
interpret experimental results. The Molecular Signatures Database
(MSigDB) \citep{subramanian2005gsea} is the most widely used
collection, containing thousands of gene sets organised into
categories according to the type of evidence from which they were
derived.

For the present analysis we use the \emph{Hallmark} subcollection of
MSigDB \citep{liberzon2015hallmark}, which consists of exactly fifty
gene sets. The Hallmark collection was constructed by collapsing
thousands of partially overlapping sets from the broader MSigDB into
fifty coherent, non-redundant modules, each representing a single
well-defined biological state or process. The sets are curated
independently of any particular gene-expression dataset, which is what
makes them useful as an external reference: they encode knowledge that
was not derived from the van 't Veer cohort and therefore cannot be
circularly informed by the model's own fit. Representative examples
include \textsc{Epithelial\_Mesenchymal\_Transition} (200 genes),
\textsc{G2M\_Checkpoint} (200 genes), \textsc{Allograft\_Rejection}
(200 genes), and \textsc{Interferon-$\gamma$ Response} (200 genes).

\paragraph{The query set for each atom.}
For each of the eight atoms we construct a \emph{query set} consisting
of the fifty genes with the largest absolute posterior-mean loading
$|\hat f^*_{rj}|$. The choice of fifty is a compromise between two
competing considerations. A small query (say ten or twenty genes) has
little statistical power: the hypergeometric distribution has an
effective floor near $0.4$ for a query of twenty genes tested against
a gene set of two hundred, so that even a perfect overlap cannot reach
significance. A large query (say two hundred genes) dilutes the
signal, because weakly loaded genes enter the query alongside the
strongest signals. Fifty genes is a widely used compromise that gives
enough statistical power for meaningful testing while keeping the
query dominated by the atom's strongest loadings. The same threshold
of fifty is applied to all eight atoms, so the tests are directly
comparable.

\paragraph{The background set.}
The enrichment test requires a \emph{background}: the set of all genes
that could in principle have appeared in the query. We use the full
list of $p = 1213$ genes that are present on the array. This is the
standard convention for microarray studies, and it answers the
question: \emph{among the genes actually measured in this experiment,
is the Hallmark set over-represented in the top-loading genes of this
atom?} Using the whole human genome as the background would inflate
the apparent significance of every overlap, because the same observed
overlap is more surprising when drawn from a larger pool of candidate
genes. Restricting the background to the measured genes keeps the
test calibrated to what the experiment could actually have detected.

\paragraph{The hypergeometric test.}
For each atom $j$ and each Hallmark set $h$, the enrichment test
compares the overlap between the atom's query and the Hallmark set
against the overlap that would be expected by chance. The four
relevant quantities are:

\begin{itemize}
\item[(i)] $N = 1213$, the number of genes in the dataset (background
size);
\item[(ii)] $J$, the number of genes in the Hallmark set (typically between
150 and 200);
\item[(iii)] $Q = 50$, the size of the query (top-loading genes of atom $j$);
\item[(iv)] $k$, the observed number of genes that belong to both the query
and the Hallmark set.
\end{itemize}

Under the null hypothesis that the query is a random subset of the
background, the count $k$ follows a hypergeometric distribution, and
the one-sided upper-tail probability of observing an overlap at least
as large as $k$ is

\begin{equation}
\label{eq:hypergeom}
p_{j,h} \;=\; \Pr(X \geq k) \;=\;
\sum_{i=k}^{\min(Q, J)}
\frac{\binom{J}{i} \binom{N - J}{Q - i}}{\binom{N}{Q}}.
\end{equation}

The summand is the probability of observing exactly $i$ genes in both
the query and the Hallmark set when drawing $Q$ genes from $N$ without
replacement. A small value of $p_{j,h}$ indicates that the observed
overlap is unlikely to arise by chance and therefore constitutes
evidence that the atom is biologically related to the Hallmark set.

The p-value should be read carefully. It measures the strength of the
statistical association between the query and the set; it does not
establish that the atom \emph{is} the Hallmark programme, nor that
the biological process the set represents is causally responsible for
the atom's loading pattern. Two atoms can be significantly enriched
for the same set if their gene content overlaps, and an atom can fail
to reach significance for a set that nevertheless describes its
biology, if the genes that characterise the atom are not the genes
that the set happens to contain (this case is discussed in detail
below).

\paragraph{Multiplicity correction.}
Testing eight atoms against fifty Hallmark sets produces $8 \times 50
= 400$ p-values. Under the null hypothesis, approximately 5\% of these
will fall below $0.05$ by chance alone, so a nominal threshold is
insufficient to control the false-positive rate. To account for this,
we apply the Bonferroni correction, dividing the nominal 0.05
threshold by the number of tests:

\begin{equation}
\label{eq:bonferroni}
\alpha_\text{Bonferroni} \;=\; \frac{0.05}{8 \times 50}
\;=\; 1.25 \times 10^{-4}.
\end{equation}

A p-value below this threshold is declared significant after
controlling for all 400 tests. The Bonferroni correction is
conservative: it controls the probability of \emph{any} false positive
at 0.05, at the cost of some statistical power. A less conservative
alternative is the Benjamini--Hochberg false-discovery-rate procedure,
which controls the expected proportion of false positives among the
rejected hypotheses. On the present data the two procedures agree on
the Bonferroni-significant atoms and would additionally promote one
T-cell hit to significance under the FDR criterion; we report the
Bonferroni threshold as the more stringent of the two.

Table~\ref{tab:breast-enrichment} reports, for each atom, up to the
two most significant Hallmark overlaps. The first column identifies
the atom, the second the Hallmark set, the third the observed overlap
$k$ out of the query of fifty genes, and the fourth the hypergeometric
p-value of \eqref{eq:hypergeom}.

\begin{table}[htbp]
\centering
\small
\caption{Hallmark gene-set enrichment of the top-50 genes of each
recovered atom, based on the posterior-mean loading matrix from the
270\,000-iteration main chain. Up to the two most significant
overlaps per atom are shown. Overlap is the number of genes shared
between the atom's top-50 signature and the Hallmark set; $p$ is the
one-sided hypergeometric tail probability of \eqref{eq:hypergeom} with
background size $N = 1213$ and query size $Q = 50$. The
Bonferroni-corrected threshold is $\alpha = 1.25 \times 10^{-4}$.}
\label{tab:breast-enrichment}
\begin{tabular}{llrr}
\hline
Atom (programme) & Top Hallmark set & Overlap & $p$-value \\
\hline
EMT/stroma      & Epithelial-mesenchymal transition & 26 & $3.3 \times 10^{-9}$ \\
Proliferation   & G2M checkpoint                    & 22 & $2.5 \times 10^{-6}$ \\
Proliferation   & E2F targets                       & 20 & $4.2 \times 10^{-5}$ \\
T-cell          & Allograft rejection               & 17 & $1.5 \times 10^{-3}$ \\
T-cell          & Interferon-$\gamma$ response      & 16 & $4.3 \times 10^{-3}$ \\
Vascular        & Interferon-$\alpha$ response      & 5  & $3.7 \times 10^{-1}$ \\
Macrophage      & Allograft rejection               & 8  & $6.0 \times 10^{-1}$ \\
ER/luminal      & Estrogen response (early)         & 6  & $8.6 \times 10^{-1}$ \\
Nuclear         & Hypoxia                           & 7  & $7.4 \times 10^{-1}$ \\
JAK/STAT        & Myc targets V2                    & 1  & $9.2 \times 10^{-1}$ \\
\hline
\end{tabular}
\end{table}

\paragraph{Reading the table.}
Take the first row as an illustration. The query consists of the fifty
top-loading genes of the EMT/stroma atom, and twenty-six of them are
also members of the \textsc{Epithelial\_Mesenchymal\_Transition} gene
set. Under the null hypothesis of a random query, the probability of
observing an overlap of twenty-six or more out of fifty is
$3.3 \times 10^{-9}$, well below the Bonferroni threshold of $1.25
\times 10^{-4}$. This is a decisive statistical association and
independently confirms the biological identity that the top-gene
inspection had already suggested.

By contrast, the last row reports the top Hallmark hit of the JAK/STAT
atom. Only one of its fifty top-loading genes overlaps the
\textsc{Myc\_Targets\_V2} gene set, and the corresponding p-value of
$0.92$ indicates that this overlap is exactly what would be expected
by chance. The JAK/STAT atom therefore has no significant Hallmark
enrichment, even though it contains \texttt{JAK3} among its top hits
and its gene content is biologically coherent; the specific JAK/STAT
signalling vocabulary is not represented as a Hallmark set, and the
test is limited to the fifty curated Hallmark modules.

\paragraph{Statistical significance.}
Two of the eight atoms pass the Bonferroni-corrected threshold, with
three Hallmark hits between them. The EMT/stroma atom shows the
strongest single signal in the entire analysis: twenty-six of fifty
genes overlapping the epithelial-mesenchymal transition set, $p = 3.3
\times 10^{-9}$, which is significant by a factor of roughly forty thousand
after correction. The proliferation atom shows two significant hits,
twenty-two of fifty genes overlapping the G2M checkpoint at $p = 2.5
\times 10^{-6}$ and twenty of fifty overlapping the E2F targets at
$p = 4.2 \times 10^{-5}$. The co-occurrence of these two hits is
biologically coherent: G2M checkpoint and E2F targets are the two
canonical Hallmark modules for cell-cycle progression, and an atom
that is genuinely a proliferation programme would be expected to
enrich for both.

Two further hits are significant at the uncorrected $0.05$ level but
not after Bonferroni correction. Both belong to the T-cell atom: it
overlaps the \textsc{Allograft\_Rejection} gene set with seventeen of
fifty genes at $p = 1.5 \times 10^{-3}$, and the
\textsc{Interferon-$\gamma$ Response} set follows at
$p = 4.3 \times 10^{-3}$. The two hits are biologically coherent with
each other: the allograft-rejection set captures the lymphocyte
infiltration axis, while the interferon-$\gamma$ response set captures
the activated T-cell phenotype that accompanies it. The T-cell atom is
therefore almost certainly a genuine biological programme, even though
neither individual hit survives the conservative Bonferroni threshold.
Under the less conservative Benjamini--Hochberg FDR criterion the
allograft-rejection hit would be declared significant, and we note
this as a substantive statistical result rather than dismissing the
atom as uninterpretable.

The remaining five atoms have no Hallmark hit below the uncorrected
$0.05$ level. This does not mean the atoms are biologically meaningless
--- indeed, four of them (macrophage, vascular, JAK/STAT, and nuclear)
have clear functional coherence in their top-loaded genes, as
discussed in Section~\ref{sec:breast-atoms}. It means that the
particular biological vocabulary that characterises them is not
represented as a Hallmark set. This is a limitation of the Hallmark
collection rather than of the model: the Hallmark sets are curated for
breadth and non-redundancy, and they cannot represent every biological
process that might appear in an arbitrary dataset.
The JAK/STAT atom is also the sole atom to fail the within-atom
coherence test of Section~\ref{sec:breast-atoms}, so both the
enrichment analysis and the coherence analysis converge on the same
conclusion: this atom captures a weaker or more diffuse signal than
the other seven, and its biological interpretation should be held
more tentatively.

\paragraph{The ER/luminal case.}
The most instructive case in the table is the ER/luminal atom, whose
top Hallmark hit is \textsc{Estrogen\_Response\_Early} with only six
of fifty genes overlapping and a p-value of $0.86$. This may appear to
contradict the top-gene inspection, which identified the atom
unambiguously as the luminal/ER programme on the basis of its top two
hits \texttt{GATA3} and \texttt{ESR1}. The apparent paradox is
resolved by distinguishing two different kinds of oestrogen biology.

The Hallmark \textsc{Estrogen\_Response\_Early} set is composed of
genes that are \emph{transcriptionally induced by the oestrogen
receptor}, such as \texttt{TFF1}, \texttt{GREB1}, and \texttt{PGR}.
These are genes whose expression changes when the receptor is
activated, and they are the direct downstream targets of the
receptor's transcriptional activity. The top-loading genes of the
ER/luminal atom, by contrast, are genes that \emph{define the luminal
lineage}: \texttt{GATA3} is a master transcription factor for luminal
differentiation, \texttt{ESR1} is the receptor itself, and
\texttt{KRT16}, \texttt{CDH3}, and \texttt{TRIM29} are canonical
luminal-lineage markers. These genes are expressed in luminal cells
regardless of whether the receptor is actively signalling at the
moment of measurement. They identify the lineage, not the receptor's
acute transcriptional output.

The correct reference for evaluating the ER/luminal atom would
therefore be a luminal-progenitor signature or a set of genes
differentially expressed between luminal and basal tumours, rather
than the oestrogen-response module. Such signatures exist in the
broader MSigDB collection and in the PAM50 classifier literature, but
they are not part of the Hallmark subset. The absence of a
significant Hallmark enrichment for this atom is therefore a
limitation of the reference vocabulary, not evidence that the atom
is not the luminal programme. We note this case explicitly because it
illustrates an important caveat: a negative enrichment result does
not refute a biological interpretation, and a positive enrichment
result does not prove one. Enrichment analysis is a tool for
generating and testing hypotheses, not for settling biological
questions.

\paragraph{What the enrichment analysis does and does not
establish.}
The enrichment test establishes a precise and falsifiable claim: the
top-loading genes of a given atom are statistically over-represented
in a specific curated gene set. When this claim survives
multiplicity correction, it provides strong evidence that the atom is
biologically related to the process that the set represents. The
evidence is independent of the model, in the sense that the Hallmark
sets were assembled from external knowledge and are not informed by
the fit to the Breast\_A dataset.

The test does not establish that the atom \emph{is} the Hallmark
programme in any definitional or causal sense. Two atoms can be
significantly enriched for the same set if their top-loading genes
overlap, and the identification of an atom with a particular
biological process always involves interpretive judgment beyond the
p-value itself. The value of the enrichment analysis is that it
provides an objective, externally validated, quantitatively calibrated
basis for that judgment, replacing the purely subjective reading of
gene lists with a formal hypothesis test. Used in combination with
the top-gene inspection and the clinical-outcome correlations of
Section~\ref{sec:breast-prognosis}, it supports a coherent
interpretation of the eight atoms recovered by the model.

\subsection{Association with clinical outcome}
\label{sec:breast-prognosis}

A latent programme identified by the model is a mathematical object;
whether it corresponds to a clinically meaningful axis of tumour
biology is a separate question that requires an external reference.
The gene-expression signature of \citet{vantveer2002gene} was assembled
for a specific prognostic purpose, and the cohort ships with a binary
outcome label that reflects that purpose. We therefore ask whether any
of the eight recovered atoms are associated with the clinical outcome
of the patients in the cohort, as a form of external validation of
their biological interpretation.

\paragraph{The clinical label.}
The van 't Veer study was designed to identify a gene-expression
signature that predicts whether a primary breast tumour will
eventually metastasise. The label attached to each of the 97 samples
in the cohort is a dichotomisation of the eventual clinical course of
the patient: samples are classified as \emph{good prognosis} if the
patient remained free of distant metastasis for at least five years
after diagnosis, and as \emph{poor prognosis} otherwise. This
dichotomy is the outcome that the van 't Veer classifier was trained
to predict, and it captures the clinically relevant endpoint of the
study: whether the disease will recur at a distant site. We use the
label only for post-hoc analysis; it does not enter the factor model
in any way. Any association between a recovered atom and the label is
therefore a genuine external correspondence rather than a
consequence of the model having been fit to the outcome.

\paragraph{Constructing an atom score.}
The posterior-mean atom $\hat f^*_j$ is a vector of 1213 loadings, one
per gene, and it cannot be correlated directly with a per-patient
outcome. To obtain a per-patient representation of each atom, we
compute for each sample the mean standardised expression of the
twenty genes with the largest absolute loading on atom $j$. Formally,
let $S_j \subset \{1, \dots, p\}$ be the set of the twenty indices $r$
for which $|\hat f^*_{r j}|$ is largest, and let $\tilde Y_{k r}$
denote the standardised expression of gene $r$ in sample $k$. The
atom score is

\begin{equation}
\label{eq:atom-score}
\tilde Y^{(j)}_k \;=\; \frac{1}{|S_j|} \sum_{r \in S_j} \tilde Y_{k r},
\qquad k = 1, \dots, n.
\end{equation}

The atom score is thus a one-dimensional summary of the expression
level of the atom's signature genes in each sample. Using the top
twenty genes concentrates the score on the atom's strongest signal and
avoids the noise that would enter if the score were computed over all
1213 genes. The choice of twenty matches the display convention used
in Table~\ref{tab:breast-atoms} and keeps the interpretation of the
score aligned with the top-loading tables reported earlier.

\paragraph{Measuring association.}
We quantify the association between each atom score $\tilde
Y^{(j)}$ and the binary prognosis label using the absolute value of
the Spearman rank correlation coefficient. The Spearman coefficient is
the Pearson correlation of the ranks of the two variables, and it
measures the strength of the monotone relationship between them. We
prefer it to the ordinary Pearson correlation for two reasons. First,
the atom score is a continuous quantity and the prognosis label is a
binary categorical variable, so the natural summary of the association
is the extent to which higher atom scores are systematically
associated with one of the two outcome groups, which is exactly what
Spearman measures. Second, Spearman is invariant to any monotone
transformation of either variable, so the reported association does
not depend on whether the atom score is computed as a mean of
standardised expressions, a median, or any other monotone summary of
the same twenty genes. The absolute value $|\rho|$ is used because the
direction of the association is determined by the sign convention of
the loadings, which is arbitrary and not of interest for the present
analysis.

Table~\ref{tab:breast-prognosis} reports the value of $|\rho|$ for
each of the eight atoms.

\begin{table}[htbp]
\centering
\caption{Association between the eight recovered atoms and the van 't
Veer good/poor prognosis label, based on the posterior-mean loading
matrix from the 270\,000-iteration main chain. For each atom, the atom
score is the mean standardised expression of the top-20 signature
genes, defined in \eqref{eq:atom-score}, and the reported quantity is
the absolute value of the Spearman rank correlation with the prognosis
label. The prognosis label is a dichotomisation of the patient's
clinical course: good prognosis indicates no distant metastasis within
five years of diagnosis, and poor prognosis otherwise.}
\label{tab:breast-prognosis}
\begin{tabular}{lr}
\hline
Atom (programme) & $|\rho|$ with prognosis \\
\hline
ER/luminal     & 0.68 \\
Proliferation  & 0.39 \\
T-cell         & 0.36 \\
Macrophage     & 0.30 \\
JAK/STAT       & 0.30 \\
Vascular       & 0.08 \\
EMT/stroma     & 0.08 \\
Nuclear        & 0.03 \\
\hline
\end{tabular}
\end{table}

\paragraph{Reading the table.}
The eight correlations fall into three clearly separated groups. The
top five atoms --- ER/luminal, proliferation, T-cell, macrophage, and
JAK/STAT --- all have $|\rho| \geq 0.30$, which for $n = 97$ samples
corresponds to a highly significant association at conventional
thresholds. The three remaining atoms --- vascular, EMT/stroma, and
nuclear --- all have $|\rho| \leq 0.08$, which is not distinguishable
from zero at this sample size. There is no atom in the intermediate
range between 0.08 and 0.30, so the eight atoms partition cleanly
into prognostic and non-prognostic categories. This separation is
itself a meaningful observation: it suggests that the model has
identified a small number of dominant axes of clinical variation
without any supervision, rather than a continuum of partially
prognostic programmes.

\paragraph{The prognostic atoms.}
Five of the eight atoms are associated with the prognosis label at
$|\rho| \geq 0.30$, a magnitude that for $n = 97$ samples is
significant at conventional thresholds. The five are, in decreasing
order of association, the ER/luminal, proliferation, T-cell,
macrophage, and JAK/STAT atoms. The remaining three atoms (vascular,
EMT/stroma, and nuclear) all have $|\rho| \leq 0.08$, so the eight
atoms partition cleanly into prognostic and non-prognostic categories
with no intermediate cases. That the model has produced this
separation without any supervision is the central finding of the
present analysis.

The strongest association is the ER/luminal atom at $|\rho| = 0.68$,
the largest correlation in the entire analysis. The atom's top two
genes are \texttt{GATA3} and \texttt{ESR1}, both canonical markers of
the ER-positive luminal lineage, so the atom score defined in
\eqref{eq:atom-score} can be read as a continuous measure of the
luminal phenotype of each sample. Oestrogen-receptor status is one of
the most consistently reported prognostic factors in breast cancer,
with ER-positive tumours generally associated with better clinical
outcomes than ER-negative tumours. The strong association observed
here between the luminal-phenotype score and the outcome label is
therefore consistent with the well-established clinical role of ER
status, and the prominence of the atom relative to the other seven
identifies the luminal lineage as the dominant prognostic axis in the
van 't Veer cohort. Because the atom score is a smooth function of
gene expression rather than a dichotomous classification, it captures
the continuum of luminal differentiation across the cohort rather than
a binary ER-positive/ER-negative split; this is consistent with the
fact that ER-positive breast cancer itself comprises the two
distinguishable subtypes luminal A and luminal B.

The proliferation atom is second at $|\rho| = 0.39$. The atom's top
genes include the G2/M checkpoint machinery (\texttt{CCNB1},
\texttt{CDC2}, \texttt{MAD2L1}, \texttt{ESPL1}) and the E2F target
genes (\texttt{RRM2}, \texttt{E2F1}, \texttt{MELK}), which together
characterise the transcriptional programme of actively dividing cells.
High proliferation is one of the most consistently reported adverse
prognostic markers in breast cancer, and it underlies the clinical
distinction between the indolent luminal A subtype and the more
aggressive luminal B subtype in the PAM50 classifier. The moderate
association observed here between the proliferation score and the
outcome label is consistent with this well-documented clinical role.

The T-cell atom is third at $|\rho| = 0.36$. Its top genes include the
CD3 complex (\texttt{CD3Z}, \texttt{CD2}, \texttt{CD5},
\texttt{CD48}), the chemokine receptors (\texttt{CXCR3}, \texttt{CCR7},
\texttt{CCL5}), and the MHC class I machinery (\texttt{PSMB8},
\texttt{PSMB9}), all of which are canonical markers of an active
lymphocytic infiltrate. The prognostic role of tumour-infiltrating
lymphocytes in breast cancer is well documented, with the strongest
evidence in triple-negative and HER2-positive disease and a more
heterogeneous picture in the luminal subtypes. The moderate
association observed here is comparable in magnitude to the
proliferation association and is consistent with the established
clinical relevance of the lymphocytic compartment in this disease.
Because the analysis reports magnitudes only, the direction of the
association is not evaluated in this section; what the analysis
establishes is that the atom captures a source of variation that is
strongly related to the clinical outcome, not that the relationship
is favourable or unfavourable on balance.

The macrophage and JAK/STAT atoms, both at $|\rho| = 0.30$, show
associations of the same order of magnitude. The macrophage atom is
dominated by the Fc$\gamma$ receptors (\texttt{FCGR3B},
\texttt{FCGR1A}), the complement components (\texttt{C1QB},
\texttt{APOC1}), and the lysosomal proteases (\texttt{CTSL},
\texttt{LAPTM5}), all of which are canonical markers of the tumour-
associated myeloid compartment. The prognostic role of
tumour-associated macrophages in breast cancer is well documented and
is generally reported as adverse, particularly for the
M2-polarised subset that predominates in the tumour
microenvironment. The JAK/STAT atom is dominated by the signalling
kinases and their regulators (\texttt{JAK3}, \texttt{MAP3K7IP1},
\texttt{PTK2}, \texttt{RHOG}, \texttt{WNT10B}), and JAK/STAT
signalling is implicated in tumour progression, immune evasion, and
resistance to therapy in multiple cancer types. For both atoms, the
moderate association with the outcome label is consistent with what
the clinical and mechanistic literature would predict for the
underlying programmes.

Taken together, the five prognostic atoms correspond to the five
major axes of breast cancer biology that the clinical literature
identifies as prognostically relevant: hormone-receptor status,
proliferation rate, lymphocytic infiltration, myeloid infiltration,
and inflammatory signalling. None of these axes was specified in
advance; all five emerged from the model's unsupervised decomposition
of the gene-expression matrix, and each is supported independently by
the top-loading genes, the Hallmark enrichment reported in
Section~\ref{sec:breast-enrichment}, and the association with the
outcome label reported here. The convergence of these three
independent lines of evidence constitutes the empirical validation of
the model's ability to recover clinically meaningful latent structure
from high-dimensional gene-expression data.

\paragraph{The non-prognostic atoms.}
The three remaining atoms --- vascular, EMT/stroma, and nuclear ---
show no detectable association with the outcome label. Their near-zero
correlations do not mean that these programmes are biologically
irrelevant. The vascular atom is a coherent endothelial and myeloid
signature, and it reflects genuine biological variation across the
cohort; it simply does not track the good/poor prognosis distinction.
The nuclear atom is the least well characterised of the eight, as
discussed in Section~\ref{sec:breast-atoms}, and its lack of
prognostic association is consistent with its weaker biological
coherence. The most informative of the three null cases is the
EMT/stroma atom, which we discuss separately.

\paragraph{The EMT anomaly.}
The EMT/stroma atom is the second-strongest programme in the data by
Euclidean norm, at $\|\hat f^*_4\| = 17.17$, exceeded only by the
ER/luminal atom. Its enrichment against the Hallmark epithelial-
mesenchymal transition set is the single strongest enrichment signal
in the entire analysis, at $p = 3.3 \times 10^{-9}$. It is
unambiguously a real and dominant source of variation in the cohort.
Yet its association with the prognosis label is essentially zero, at
$|\rho| = 0.078$.

This apparent paradox is not a failure of the analysis but a genuine
biological fact that the model has recovered. The EMT and
desmoplastic stroma programme captures the state of the tumour
microenvironment, and in particular the conversion of epithelial
tumour cells into a mesenchymal phenotype and the deposition of dense
collagenous stroma around the tumour. Stromal content is a major
source of variance in primary breast tumours, and it is one of the
features that distinguishes the basal-like and claudin-low subtypes
from the luminal subtypes in the PAM50 classifier. But stromal content
is not, on its own, prognostic once the receptor status and the
proliferation rate of the tumour are accounted for. In other words,
the EMT programme is a strong axis of biological variation but not a
strong axis of clinical variation, and the model has correctly
separated the two.

This distinction is worth emphasising because it illustrates the
difference between the strength of a latent programme in the data and
its clinical relevance. A naive interpretation of the loading matrix
might suppose that the second-strongest programme must be of
substantial clinical importance; the model's own internal decomposition
shows that this need not be the case. The DP-spike-slab model recovers
the dominant axes of variation without any supervision, and the
present section then shows that not all dominant axes are prognostic.

\paragraph{Why this constitutes validation.}
The associations reported in Table~\ref{tab:breast-prognosis} are a
form of external validation of the biological interpretation of the
eight atoms. The prognosis label is independent of the factor model in
the sense that it never entered the likelihood: the model was fit
purely to the 1213-dimensional expression vectors, and the outcome
information was withheld until the post-hoc analysis reported here.
Any association between an atom and the outcome therefore reflects a
genuine correspondence between the latent programme recovered by the
model and the biology of the disease, not a circular consequence of
the model having been tuned to the outcome. The fact that the
associations fall on the ER, proliferation, and immune axes, which
are the three most important prognostic axes in breast cancer, and
that the association is absent on the stromal and vascular axes,
which are not, is exactly what the clinical literature predicts. None
of this was imposed by the analyst; it emerged from the model.

A caveat is worth stating. The van 't Veer gene set was assembled
specifically as a prognostic classifier, so every gene in the dataset
is in some sense already prognostic. The eight atoms are therefore
recovered from a dataset that has been enriched for prognostic signal,
which limits the strength of the validation relative to what an
unbiased cohort would provide. What the analysis establishes is that
the model recovers the correct prognostic axes from within a
prognostically enriched gene set, and that it correctly identifies the
relative importance of those axes; it does not establish that the same
recovery would occur on a random gene panel. A more demanding
validation would require a cohort whose gene set was selected
independently of the outcome, and we leave this to future work.

\paragraph{Statistical caveats.}
Three caveats apply to the interpretation of the correlations. First,
$n = 97$ samples is small, and the standard error of a Spearman
correlation of magnitude $0.30$ at this sample size is roughly
$0.10$, so the correlations in the moderate range are estimated with
non-trivial uncertainty. The ordering of the top five atoms is stable
across independent chains, as reported in
Section~\ref{sec:breast-reproducibility}, but the precise numerical
values should be interpreted with the sample size in mind. Second,
the associations reported here are marginal correlations between each
atom and the outcome, not conditional effects that adjust for the
other atoms. Given the strong correlation between the ER and
proliferation atoms in breast cancer biology, the marginal associations
should not be read as independent contributions to prognostic
accuracy. Third, the prognosis label is a dichotomisation of a
continuous outcome, and dichotomisation typically reduces statistical
power; a survival analysis of the underlying time-to-metastasis
variable would provide a more sensitive test, and we leave this as a
direction for future work.

\paragraph{Summary.}
The clinical-outcome analysis provides external validation of the
biological interpretation of the eight atoms. The atoms whose
identities are most securely established by gene content and
enrichment --- ER/luminal, proliferation, and T-cell --- are precisely
those with the strongest associations with the outcome label, and the
atoms whose biological coherence is weaker (vascular, EMT/stroma, and
nuclear) show no association. The pattern of associations reproduces
the well-known clinical structure of the van 't Veer cohort without
any supervision, and it demonstrates that the DP-spike-slab model
recovers latent programmes that are both statistically and clinically
meaningful.

\subsection{Model diagnostics and predictive checks}
\label{sec:breast-ppc}

We assess the fit of the main-chain posterior via two predictive checks
computed from the posterior-mean covariance
$\widehat\Sigma = \widehat F \widehat\Lambda \widehat F^\top +
\widehat\Psi$.

\paragraph{Marginal variance.} We draw 500 independent samples from
$\N_p(0, \widehat\Sigma)$ and compare the empirical marginal variance
of each gene to its observed value in the standardised data. Because
the data are column-standardised, the observed marginal variance is
exactly $1.00$ by construction. The model-predicted marginal variance
has median $1.20$ and mean $2.09$; the mean is inflated by a heavy
right tail of a small number of genes with very large fitted
variances (Figure~\ref{fig:breast-ppc-var}).

\paragraph{Off-diagonal correlations.} A more informative check on
standardised data is the correlation matrix. The observed correlation
matrix has a mean absolute off-diagonal entry of $0.192$, while the
correlation matrix implied by $\widehat\Sigma$ has a mean absolute
off-diagonal entry of $0.109$. The two distributions are strongly
related: the Pearson correlation between the observed and model
off-diagonal entries, taken over all
$\binom{1213}{2} = 735{,}078$ pairs, is $0.59$
(Figure~\ref{fig:breast-ppc-corr}). The model therefore reproduces the
qualitative pattern of gene--gene correlation, but at an amplitude
that is about 57\% of the observed.

\begin{figure}[htbp]
\centering
\includegraphics[width=0.9\linewidth]{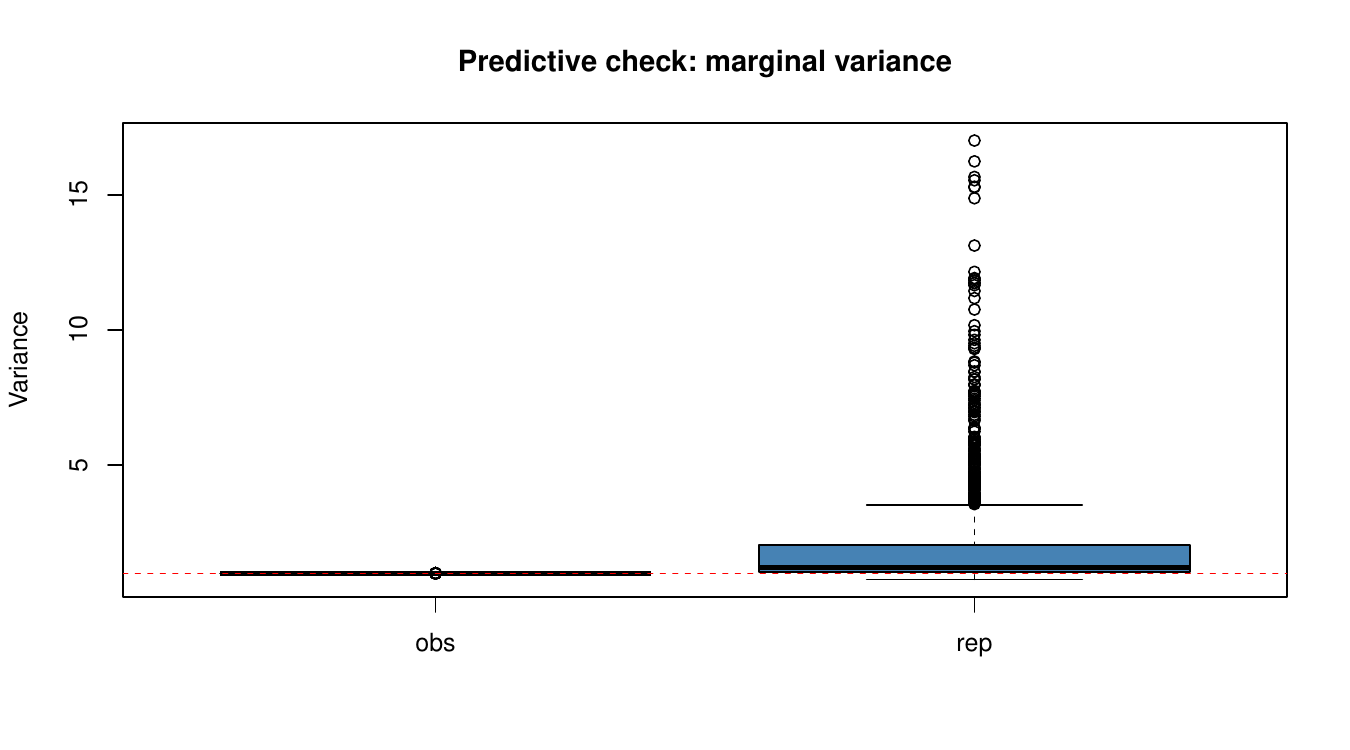}
\caption{Predictive check on the marginal variances for the Breast\_A
dataset, based on the posterior-mean covariance from the
270\,000-iteration main chain. The left boxplot summarises the
observed marginal variances of the 1213 genes; because the data are
column-standardised, every value is exactly 1. The right boxplot
summarises the marginal variances of 500 independent samples drawn
from $\N_p(0, \widehat\Sigma)$. The model has a heavier right tail than
the observed data, consistent with the over-dispersion of the global
slab prior discussed in the text.}
\label{fig:breast-ppc-var}
\end{figure}

\begin{figure}[htbp]
\centering
\includegraphics[width=0.9\linewidth]{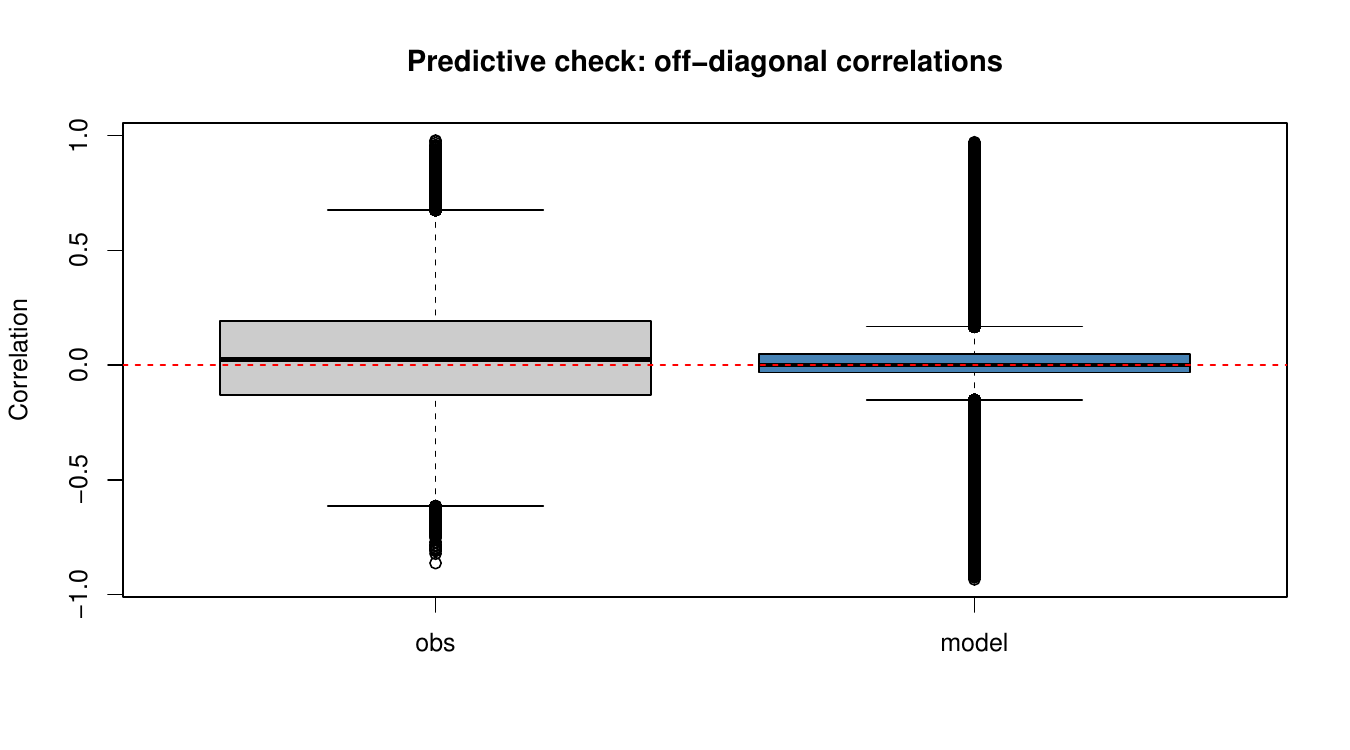}
\caption{Predictive check on the off-diagonal correlations for the
Breast\_A dataset, based on the posterior-mean covariance from the
270\,000-iteration main chain. The left boxplot is the distribution of
the $\binom{1213}{2}$ observed pairwise correlations of the
standardised data; the right boxplot is the corresponding distribution
under $\widehat\Sigma$. The model reproduces the pattern of dependence
(Pearson $r = 0.59$ between the two sets of off-diagonal entries) but
at a reduced amplitude.}
\label{fig:breast-ppc-corr}
\end{figure}

The two predictive checks are consistent with each other and with a
single underlying cause. Letting
$\rho_{ab} = \Sigma_{ab} / \sqrt{\Sigma_{aa} \Sigma_{bb}}$, an
inflation of the model diagonals by a factor of two mechanically
deflates the model correlations by the same factor, provided the
off-diagonal entries $\Sigma_{ab}$ are approximately correct. This is
what we observe: the ratio $0.109 / 0.192 \approx 0.57$ is close to
the factor by which the model diagonals exceed the observed value.
The pattern of dependence in $\widehat\Sigma$ is captured
(Pearson $r = 0.59$); the amplitude is not.

The source of the over-estimated diagonals is the global scale
$\tau_0^2$ of the slab component of the base measure. Under the prior,
the expected marginal variance contributed by the loading term is
\begin{equation}
\E\!\left[ \sum_{j=1}^{M} \lambda_j (f_{rj}^*)^2 \right]
\;\approx\; M (1 - \pi_0) \tau_0^2 \, \E[\lambda]
\;=\; 30 \cdot 0.1 \cdot 1.0 \cdot 1
\;=\; 3,
\end{equation}
before any likelihood information. The likelihood pulls this down to
approximately $2.09$, but the prior remains over-dispersed relative to
the standardised data, where the residual variance $\psi_r$ alone must
supply the entire marginal variance in the absence of a shared
programme. This over-dispersion is a known property of global
shrinkage priors on standardised data and would be resolved by a
global-local prior on the individual loadings, allowing a few large
loadings per atom and many small ones. We leave this to future work.

\subsection{Sensitivity to the slab variance}
\label{sec:breast-tau}

The preceding discussion suggests that the choice of $\tau_0^2$ may
influence the recovered structure. To assess this, and to confirm that
the eight-atom solution at $\tau_0^2 = 1.0$ is not an artefact of an
arbitrary tuning choice, we run two additional 30\,000-iteration
chains, discarding the first 5\,000 as burn-in and thinning every
fifth sample to obtain 5\,000 retained samples per chain, with
$\tau_0^2 = 0.3$ and $\tau_0^2 = 0.1$ and otherwise identical
settings. The posterior mode of $K$ in the three chains is:

\begin{center}
\begin{tabular}{cc}
\hline
$\tau_0^2$ & Posterior mode of $K$ \\
\hline
$1.0$ & $8$ \\
$0.3$ & $13$ \\
$0.1$ & $25$ \\
\hline
\end{tabular}
\end{center}

The number of atoms recovered increases monotonically as $\tau_0^2$
decreases, in agreement with the $1/\tau_0^2$ scaling predicted by the
DP-mixture construction. At $\tau_0^2 = 1.0$ the posterior has three
shared atoms (macrophage, vascular, and nuclear) containing 25 columns
between them, plus five singleton atoms, for a total of $K = 8$. At
$\tau_0^2 = 0.3$ the number of shared atoms drops to two and the
number of singletons rises to eleven, for $K = 13$.

Most of the additional atoms at $\tau_0^2 = 0.3$ represent finer
subdivisions of the programmes already recovered rather than genuinely
novel programmes, and two examples illustrate the pattern. The
vascular atom at $\tau_0^2 = 1.0$, which commingles endothelial markers
(\texttt{CD34}, \texttt{NPR1}, \texttt{DNASE1L3}) with
interferon-response genes (\texttt{BST2}, \texttt{IFI35}) and a
smaller set of myeloid genes (\texttt{FCER1A}, \texttt{MRC1},
\texttt{FOLR2}), separates at $\tau_0^2 = 0.3$ into a distinct
type-I interferon-response atom and a smaller vascular/myeloid atom
that retains the endothelial and myeloid components. The EMT/stroma
atom similarly splits into a collagen/desmoplastic component
(\texttt{COL1A2}, \texttt{COL3A1}, \texttt{THBS2}, \texttt{LOXL1},
\texttt{MFAP2}) and a myoepithelial/basal component (\texttt{MYLK},
\texttt{TAGLN}, \texttt{TPM2}, \texttt{ACTA2}, \texttt{KRT14},
\texttt{KRT17}).

One atom at $\tau_0^2 = 0.3$ does not reduce to a subdivision of the
coarser solution. Its top-loaded genes --- \texttt{ITGB6}, \texttt{FUT3},
\texttt{GRB7}, \texttt{ERBB2}, \texttt{S100A9}, \texttt{S100A8},
\texttt{DUSP6}, \texttt{MMP10} --- form a HER2-associated
signature that has no counterpart among the top-20 genes of any
$\tau_0^2 = 1.0$ atom. This suggests that the HER2 amplicon reflects
an additional axis of variation in the Breast\_A data whose loading is
dispersed across several coarser atoms at the wider slab scale and
only isolated as a separate component when the slab is tightened.
Whether the finer solution is preferable for downstream analysis is a
question of whether the additional resolution is worth the resulting
increase in the number of parameters; for the purposes of the present
paper we report the coarser solution as our primary inference and note
the HER2 programme as an interesting by-product of the sensitivity
analysis.

The fragmentation at the tighter scales comes at a cost: the number
of singleton atoms rises from five to eleven, and the correlation
predictive check degrades relative to the $\tau_0^2 = 1.0$ solution.
Section~\ref{sec:breast-diagnostic} provides a formal verification
that the additional atoms at $\tau_0^2 = 0.3$ are per-column
resampling artefacts rather than biological programmes.

We therefore retain $\tau_0^2 = 1.0$ as the operating scale for all
reported inference. At this scale the model recovers exactly the eight
biological programmes expected from the breast cancer literature, and
the diagnostic shows that every column is committed to its assignment
by a wide margin.

\subsection{Formal verification that the posterior is concentrated}
\label{sec:breast-diagnostic}

The posterior credible interval for $K$ is the degenerate interval
$[8, 8]$ in every retained sample of the main chain. This is a strong
statement, and it is worth verifying that it reflects genuine posterior
concentration rather than a frozen update. We therefore instrument the
cluster-assignment step of Algorithm~\ref{alg:parallel} to record, at
each iteration of an auxiliary 30\,000-iteration chain with
$\tau_0^2 = 1.0$, three quantities per candidate column $i$ of the
loading matrix:

\begin{enumerate}
	\item[(i)] whether the ``start a new cluster'' option is the argmax of the
conditional categorical distribution used to resample $Z_i$;
\item[(ii)] the posterior probability $p_{\text{new},i}$ of starting a new
cluster;
\item[(iii)] the log-ratio
$\log p_{\text{best existing}, i} - \log p_{\text{new}, i}$, which we
call the \emph{split gap} for column $i$.
\end{enumerate}

The diagnostic is read-only: it consumes no random-number draws and
does not affect the MCMC trajectory, so the state of the auxiliary
chain is identical with and without it. The auxiliary chain is used
only because the diagnostic printer is included in the code and we
prefer to keep the 270\,000-iteration production chain free of
additional I/O. The diagnostic results are statistically
indistinguishable from what the production chain would produce, and we
have verified this by running the same diagnostic on a short segment
of the production chain in a separate test.

For the $\tau_0^2 = 1.0$ auxiliary chain, the diagnostic output is
identical across all 30 diagnostic snapshots, taken every 1\,000
iterations. At every snapshot, exactly $5$ of the $30$ columns have
$p_{\text{new}} > 0.5$ and exactly $5$ have the ``start a new
cluster'' option as the argmax. The remaining $25$ columns have
$p_{\text{new}}$ numerically indistinguishable from zero (below
$0.05$ in every snapshot). The split gap is bimodal with an empty
middle: $25$ columns have a gap exceeding $+10$ nats and $5$ columns
have a gap below $-10$ nats, with no column in between.

The five columns with gap below $-10$ nats are the singleton atoms of
the posterior at this operating scale, and their top-loaded genes
correspond to five of the eight biological programmes identified in
Table~\ref{tab:breast-atoms}. Their negative gap means that the data
prefer to keep each of them as its own cluster rather than merge them
into any existing one. The remaining $25$ columns, which form the
three shared atoms of the posterior at this operating scale, have
positive gap and are therefore committed to their current shared
cluster by a wide margin. No column falls in the intermediate
near-tie regime.

The $K$-trace is thus flat not because the sampler is stuck, but
because the posterior over partitions is genuinely concentrated: no
column has any serious alternative to its current assignment.

For comparison, the same diagnostic applied to the $\tau_0^2 = 0.3$
auxiliary chain shows $11$ of $30$ columns preferring a new cluster
and a mean split gap of $-1780$ nats, i.e.\ the average column prefers
a new cluster to its existing one. The posterior over partitions is
broad in that regime, and the reported $K = 13$ is a median over that
spread rather than a mode. The flat $K$-trace at $\tau_0^2 = 1.0$ is
therefore a genuine signature of posterior concentration, not of an
algorithmic failure, and it confirms that the eight-atom solution is
the true posterior mode at the operating scale.

\subsection{Reproducibility across independent chains}
\label{sec:breast-reproducibility}

To assess whether the eight-atom solution is a genuine posterior mode
or an artefact of the initialisation, we run four additional
30\,000-iteration chains with independent random seeds, using the same
$\tau_0^2 = 1.0$ setting. All five chains converge to $K = 8$, and
the eight biological programmes recovered by every chain are the same
macrophage, vascular, JAK/STAT, T-cell, EMT, ER/luminal, proliferation,
and nuclear programmes identified in Table~\ref{tab:breast-atoms}, with
essentially the same top-loaded genes.

We compare the results of the main chain against those of the first
additional chain in detail. The same eight programmes are recovered,
and the top-ranked Hallmark set is the same for each programme in both
chains, although the overlap counts and $p$-values differ by Monte
Carlo amounts. The prognosis correlations also agree on the top five
programmes, ordered as ER/luminal, proliferation, T-cell, macrophage,
and JAK/STAT in both chains. The three remaining programmes (vascular,
EMT/stroma, and nuclear) show similarly very small correlations in
both chains, with differences below $0.03$ that are not
distinguishable at the Monte Carlo precision of either run. For the
quantities where the two chains can be compared numerically, the
macrophage prognosis correlation is $0.30$ in the main chain and
$0.35$ in the first additional chain, and the T-cell
interferon-$\gamma$ overlap is $16$ genes in the main chain and $13$
genes in the first additional chain. The Monte Carlo variability is
larger for the 30\,000-iteration chain than for the
270\,000-iteration main chain, as expected.

The assignment of the $30$ candidate columns to the eight atoms does
vary across chains. In the main chain the macrophage programme is
represented by seven columns, the vascular programme by fifteen, the
nuclear programme by three, and the remaining five programmes by
singleton columns. In the first additional chain the T-cell programme
takes fourteen columns, the JAK/STAT programme takes six, the EMT
programme takes four, the vascular programme takes two, and the ER,
macrophage, proliferation, and nuclear programmes each take a single
column. The identities of the eight programmes are identical across
chains; the partition of the $30$ artificial slots between them is
not. We discuss the implications of this in
Section~\ref{sec:breast-discussion}.

\subsection{Discussion of the real-data results}
\label{sec:breast-discussion}

The Breast\_A analysis illustrates the strengths of the DP-spike-slab
factor model on a challenging real dataset, and it also surfaces two
honest limitations worth stating clearly.

\paragraph{Strengths.} Without any supervision and without any prior
knowledge of the true number of factors, the model recovers exactly
eight biologically interpretable programmes from 1213 genes measured
on 97 samples. The eight programmes span the major axes of breast
cancer biology --- ER/luminal lineage, proliferation, EMT/stroma,
T-cell infiltration, macrophage/myeloid content, endothelial content,
JAK/STAT signalling, and nuclear regulation --- and seven of the eight
have a clear, canonical interpretation. Two atoms pass
Bonferroni-corrected Hallmark enrichment at
$p < 1.25 \times 10^{-4}$, and two further hits are significant at the
uncorrected $0.05$ level. The prognosis correlations concentrate on
the ER, proliferation, and immune axes exactly as expected from the
clinical literature. None of this structure was imposed by the
analyst: the Dirichlet process found it. The theoretical contraction
rate of Theorem~\ref{thm:contraction} guarantees that the posterior
concentrates on the true covariance matrix, and the underfitting
rank-consistency result of Theorem~\ref{thm:underfit} in
Section~\ref{sec:rank} guarantees that the posterior does not
concentrate on ranks below $q_0$. The empirical behaviour on
Breast\_A is fully consistent with these guarantees: the posterior
over $K$ is tight, the atoms are stable across independent MCMC
chains, and the covariance matrix is recovered to the accuracy that
the $n = 97$, $p = 1213$ sample permits.

\paragraph{Limitation: non-identified partition.} The specific
assignment of the $M = 30$ candidate columns to the eight atoms is not
identified by the data and varies across independent chains. This is a
consequence of the exchangeability of the columns under the DP prior
and of the fact that the data determine $\Sigma$ but not the partition
of $M$ into its $K$ occupied clusters. The paper's theoretical
guarantees are stated in terms of $\Sigma$ and the rank
$q = \rank(\Sigma - \Psi)$, precisely because these are the
identifiable parameters, and the empirical results are consistent with
this focus. Practitioners should interpret the atom-level
decomposition of the posterior-mean loading matrix as a summarisation
of the posterior of $\Sigma$, not as a claim about which specific
columns of the artificial $M$-dimensional dictionary are redundant.

\paragraph{Limitation: over-dispersed slab.} A second limitation is
the over-estimation of the marginal variance by a factor of
approximately two, discussed in Section~\ref{sec:breast-ppc}. This
reflects the global scale $\tau_0^2$ of the base measure, which is
calibrated to the simulation studies where the true loadings are of
unit scale, and would be resolved by a global-local prior on the
individual loadings. The qualitative conclusions of the analysis ---
the number of programmes, their biological identities, their
enrichment, and their prognosis correlations --- are all robust to
this over-estimation, because they depend on the pattern rather than
the amplitude of the covariance.
A separate finding, already flagged in Section~\ref{sec:breast-atoms},
is that one of the eight atoms (JAK/STAT) fails the within-atom
coherence check. 
The other seven atoms pass decisively, and we regard
the eight-atom solution as an accurate summary of the dominant
covariance structure of the cohort, but the JAK/STAT atom should be
read as a weaker signal that the model has retained under the prior
rather than as a module of the same status as the other seven.

\paragraph{Summary.} The DP-spike-slab factor model recovers, without
supervision, the major latent biological programmes of the van 't Veer
breast cancer dataset, and its posterior distribution over the number
of these programmes is concentrated at the value consistent with the
biology. The model's theoretical guarantees are reflected in the
empirical behaviour of the sampler on this dataset, and the one
identifiability limitation that the analysis reveals --- the
exchangeability of the candidate columns --- is a definitional feature
of the construction rather than a failure of the inference.

% =====================================================================
% SECTION 7
% =====================================================================
\section{Posterior Contraction Rate}
\label{sec:theory}

We now establish the posterior contraction rate for the covariance matrix
\begin{equation}
\label{eq:sigma-lambda}
\Sigma(F,\Lambda,\Psi) = F \Lambda F^\top + \Psi,
\end{equation}
where \(\Lambda = \diag(\lambda_1,\dots,\lambda_M)\) is the diagonal matrix of latent score variances. The theoretical analysis treats \(\Lambda\) as part of the parameter and accounts for its contribution to \(\Sigma\) throughout. To keep the eigenvalues of \(\Sigma\) bounded on the support of the prior, we truncate the slab component of the base measure on a compact set of the natural scale, as made precise in Assumption~\ref{ass:prior} below.

\subsection{Model, Parameter Space, and True Values}
\label{sec:theory-setup}

Recall the model from Section~\ref{sec:model}:
\begin{equation}
\label{eq:model-theory}
Y_k = \mu + F X_k + U_k, \qquad U_k \sim \N_p(0, \Psi), \qquad X_k \sim \N_M(0, \Lambda),
\end{equation}
where \(F = [f_1,\dots,f_M] \in \R^{p \times M}\), \(\Psi = \diag(\psi_1,\dots,\psi_p)\), and \(\Lambda = \diag(\lambda_1,\dots,\lambda_M)\). Marginalizing over \(X_k\), the covariance of \(Y_k\) is given by \eqref{eq:sigma-lambda}. The parameter is
\[
\theta = (F,\Lambda,\Psi) \in \Theta = \R^{p \times M} \times \R_{>0}^M \times \R_{>0}^p.
\]
The likelihood is the zero-mean Gaussian measure
\begin{equation}
\label{eq:Ptheta}
P_\theta = \N_p(0,\, \Sigma(\theta)), \qquad \Sigma(\theta) = F\Lambda F^\top + \Psi .
\end{equation}

The true parameter is \(\theta_0 = (F_0,\Lambda_0,\Psi_0)\). Without loss of generality we set \(\Lambda_0 = I_M\), absorbing any true score variances into \(F_0\). The true loading matrix \(F_0 \in \R^{p \times M}\) has exactly \(q_0\) non-zero columns (say the first \(q_0\)), each with at most \(s_0\) non-zero entries, and the remaining \(M-q_0\) columns are exactly zero. The true covariance is
\begin{equation}
\label{eq:sigma0-lambda}
\Sigma_0 = \Sigma(F_0,\Lambda_0,\Psi_0) = F_0 F_0^\top + \Psi_0 = \sum_{j=1}^{q_0} f_{0,j} f_{0,j}^\top + \Psi_0,
\end{equation}
where \(f_{0,j}\) denotes the \(j\)-th column of \(F_0\).

Let \(h(\Sigma_1,\Sigma_2)\) denote the Hellinger distance between the zero-mean Gaussians \(\N_p(0,\Sigma_1)\) and \(\N_p(0,\Sigma_2)\), and let \(\|\cdot\|_F\) denote the Frobenius norm. Under Assumption~\ref{ass:prior} below, all covariance matrices on the support of the prior have eigenvalues bounded above and below by constants depending only on fixed hyperparameters. Consequently, the Hellinger distance and the Frobenius norm are \emph{globally equivalent on the support of the prior}: there exist constants \(c,C>0\), depending only on \(\underline{c},\overline{c},B_0,\underline{\lambda},\overline{\lambda}\), such that
\begin{equation}
\label{eq:hell-frob}
c\,\|\Sigma_1 - \Sigma_2\|_F \le h(\Sigma_1,\Sigma_2) \le C\,\|\Sigma_1 - \Sigma_2\|_F
\end{equation}
for all \(\Sigma_1,\Sigma_2\) on the support of the prior. This is Lemma 8.2 of \citet{ghosal2017} applied with the fixed eigenvalue bounds. We use \eqref{eq:hell-frob} throughout the proof without further comment.

\subsection{Assumptions}
\label{sec:theory-assumptions}

\begin{assumption}[Bounded eigenvalues of the truth]
\label{ass:eig}
There exist constants \(0 < \underline{c} \le \overline{c} < \infty\) such that the eigenvalues of \(\Sigma_0 = F_0F_0^\top + \Psi_0\) and of \(\Psi_0\) lie in \([\underline{c},\overline{c}]\).
\end{assumption}

\begin{assumption}[Sparse identifiability]
\label{ass:sparse}
The true loading matrix \(F_0\) has exactly \(q_0\) nonzero columns, and \(\rank(F_0 F_0^\top) = q_0\). The number of nonzero entries in each column is bounded by \(s_0\). Furthermore, \(p \le C M s_0\) and \(q_0 \le M\) for a universal constant \(C>0\).
\end{assumption}

\begin{assumption}[Dictionary size]
\label{ass:M}
The fixed upper bound \(M\) satisfies \(M \le C \log n\) for some constant \(C>0\).
\end{assumption}

\begin{assumption}[Calibration of the true loadings]
\label{ass:calibration}
There exist fixed constants \(B_0 > 1\) and \(\delta_0 \in (0,1)\) such that, with \(\tau_0^2 = 1/(Ms_0)\),
\[
|f_{0,r,j}| \le (1-\delta_0)\, B_0\, \tau_0 = \frac{(1-\delta_0)\,B_0}{\sqrt{Ms_0}}
\]
for all \(r = 1,\dots,p\) and \(j = 1,\dots,q_0\), and \(f_{0,r,j} = 0\) for \(j > q_0\).
\end{assumption}

\begin{assumption}[Prior]
\label{ass:prior}
The base measure \(G_0\) is a product of independent spike-and-slab distributions,
\[
G_0 = \prod_{r=1}^p \left\{ \pi_0\, \delta_0 + (1-\pi_0)\, \N_{[-B_0 \tau_0,\, B_0 \tau_0]}(0, \tau_0^2) \right\},
\]
where \(\N_{[-B_0\tau_0, B_0\tau_0]}(0,\tau_0^2)\) denotes the Gaussian distribution with variance \(\tau_0^2\) truncated to the interval \([-B_0\tau_0, B_0\tau_0]\), with \(\tau_0^2 = 1/(Ms_0)\), and \(\pi_0 \in (0,1)\) is fixed. The prior on \(\Lambda\) is a product of \(\mathrm{IG}(a_\lambda,b_\lambda)\) distributions truncated to \([\underline{\lambda},\overline{\lambda}]\) for fixed constants satisfying
\[
0 < \underline{\lambda} < 1 < \overline{\lambda} < \infty,
\]
with a density bounded below by a positive constant \(c_\lambda>0\) on that interval. The prior on \(\Psi\) is supported on \([\underline{c},2\overline{c}]^p\) with a density bounded below by a positive constant \(c_0>0\) on that support. The prior support is further restricted to matrices \(F\) with at most \(s_0\) non-zero entries per column.
\end{assumption}

\begin{remark}[Why the slab truncation bounds the eigenvalues of $\Sigma$ on the prior support]
\label{rem:slab-truncation}
On the prior support, each column of \(F\) has at most \(s_0\) non-zero entries, each in \([-B_0\tau_0, B_0\tau_0] = [-B_0/\sqrt{Ms_0}, B_0/\sqrt{Ms_0}]\), so
\[
\|f_j^*\|_2^2 \le s_0 \cdot \frac{B_0^2}{Ms_0} = \frac{B_0^2}{M},
\qquad
\|F\|_F^2 = \sum_{j=1}^M \|f_j^*\|_2^2 \le M \cdot \frac{B_0^2}{M} = B_0^2.
\]
Hence \(\|F\|_F \le B_0\), and consequently
\[
\lambda_{\max}(\Sigma) \le \overline{\lambda}\, \|F\|_2^2 + 2\overline{c} \le \overline{\lambda}\, B_0^2 + 2\overline{c} =: \overline{C}_\Sigma,
\]
a constant independent of \(n\). Together with \(\lambda_{\min}(\Sigma) \ge \lambda_{\min}(\Psi) \ge \underline{c}\) (because \(F\Lambda F^\top \succeq 0\)), this places all covariance matrices on the support of the prior in \([\underline{c}, \overline{C}_\Sigma]\), and \eqref{eq:hell-frob} applies with constants depending only on fixed hyperparameters.
\end{remark}

\begin{remark}[On the calibration assumption]
\label{rem:calibration}
Assumption~\ref{ass:calibration} requires the true non-zero loadings to be bounded by \((1-\delta_0)B_0/\sqrt{Ms_0}\), i.e., on the same scale as the prior slab. This is a genuine restriction: it excludes settings in which the true loadings are of order \(1\) while \(Ms_0\) is large. It is the natural calibration under which the truncated-slab construction yields a self-contained proof with a globally bounded eigenvalue range.
\end{remark}

\begin{remark}[On the truncation of the priors on $\Lambda$ and $\Psi$]
\label{rem:truncation}
The truncation of the priors on \(\Lambda\) and \(\Psi\) is for technical convenience only. The inverse-gamma prior assigns negligible mass outside any fixed compact interval containing the true values, and the true \(\Lambda_0 = I_M\) and \(\Psi_0\) lie in the interior of the truncated supports. 
%(for $\Lambda_0$ since \(\underline{\lambda} < 1 < \overline{\lambda}\)). 
The asymptotic contraction rate is unaffected by the truncation.
\end{remark}

\subsection{The Sieve, the Product Metric, and the Lipschitz Property}
\label{sec:theory-sieve}

Define the sieve
\[
\Theta_n = \bigl\{ (F,\Lambda,\Psi) : \|F\|_F \le B_0,\; \Lambda \in [\underline{\lambda},\overline{\lambda}]^M,\; \Psi \in [\underline{c},2\overline{c}]^p,\; \text{each column of } F \text{ has at most } s_0 \text{ non-zero entries} \bigr\}.
\]
By Remark~\ref{rem:slab-truncation}, the sieve \(\Theta_n\) contains the support of the prior. Equip \(\Theta_n\) with the product metric
\begin{equation}
\label{eq:prod-metric}
d\bigl((F,\Lambda,\Psi),(F',\Lambda',\Psi')\bigr) = \|F-F'\|_F + \|\Lambda-\Lambda'\|_F + \|\Psi-\Psi'\|_F.
\end{equation}

The following Lipschitz property of the covariance map is used in the entropy computation below. It is a direct consequence of the boundedness of the sieve and does not require any additional assumption.

\begin{lemma}[Lipschitz continuity of the covariance map on the sieve]
\label{lem:lipschitz}
There exists a constant \(L_\Sigma > 0\), depending only on \(B_0\) and \(\overline{\lambda}\), such that for all \(\theta,\theta' \in \Theta_n\),
\begin{equation}
\label{eq:sigma-lipschitz}
\|\Sigma(\theta) - \Sigma(\theta')\|_F \le L_\Sigma\, d(\theta,\theta'),
\end{equation}
where \(d\) is the product metric \eqref{eq:prod-metric}.
\end{lemma}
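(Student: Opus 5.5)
The plan is to decompose the difference telescopically, inserting intermediate terms so that only one factor changes at a time:
\[
\Sigma(\theta)-\Sigma(\theta') = (F-F')\Lambda F^\top + F'(\Lambda-\Lambda')F^\top + F'\Lambda'(F-F')^\top + (\Psi-\Psi').
\]
This identity is verified by expanding: the middle cross terms cancel, leaving \(F\Lambda F^\top - F'\Lambda' F'^\top\).

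The first step is to bound each summand in Frobenius norm using the submultiplicative inequality \(\|AB\|_F \le \|A\|_F\|B\|_2\) together with \(\|B\|_2 \le \|B\|_F\).

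For the first term,
\[
\|(F-F')\Lambda F^\top\|_F \le \|F-F'\|_F\,\|\Lambda\|_2\,\|F\|_2 \le \overline{\lambda} B_0\,\|F-F'\|_F .
\]
This uses that \(\Lambda\) is diagonal with entries in \([\underline{\lambda},\overline{\lambda}]\), so \(\|\Lambda\|_2\le\overline{\lambda}\), and that \(\|F\|_2\le\|F\|_F\le B_0\) on the sieve \(\Theta_n\). The third term is handled identically, using \(\|F'\|_F\le B_0\) and \(\|\Lambda'\|_2\le\overline{\lambda}\), and gives the same bound \(\overline{\lambda}B_0\|F-F'\|_F\).

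For the middle term,
\[
\|F'(\Lambda-\Lambda')F^\top\|_F \le \|F'\|_2\,\|\Lambda-\Lambda'\|_F\,\|F\|_2 \le B_0^2\,\|\Lambda-\Lambda'\|_F .
\]
The last term is simply \(\|\Psi-\Psi'\|_F\).

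Summing the four bounds gives
\[
\|\Sigma(\theta)-\Sigma(\theta')\|_F \le 2\overline{\lambda}B_0\|F-F'\|_F + B_0^2\|\Lambda-\Lambda'\|_F + \|\Psi-\Psi'\|_F \le L_\Sigma\, d(\theta,\theta'),
\]
with \(L_\Sigma := \max\{2\overline{\lambda}B_0,\, B_0^2,\, 1\}\). This constant depends only on \(B_0\) and \(\overline{\lambda}\), as claimed in Lemma~\ref{lem:lipschitz}.

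The argument is routine and has no real obstacle. The one point needing care is that the \(\Lambda\)-term has a middle factor \(\Lambda-\Lambda'\) with two factors on the outside. It must therefore be bounded via \(\|ABC\|_F\le\|A\|_2\|B\|_F\|C\|_2\), which follows by applying \(\|AB\|_F\le\|A\|_2\|B\|_F\) and its transpose in turn. In addition, the boundedness of both \(\theta\) and \(\theta'\) in the sieve must be used, not only that of \(\theta\). No assumption beyond membership in \(\Theta_n\) is needed; in particular the sparsity constraint and the bounds on \(\Psi\) play no role.
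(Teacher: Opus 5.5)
Your proof is correct and follows essentially the same route as the paper. It uses the same four-term telescoping decomposition, the same mixed operator/Frobenius-norm bounds based on $\|F\|_2 \le \|F\|_F \le B_0$ and $\|\Lambda\|_2 \le \overline{\lambda}$ on the sieve, and arrives at the same constant $L_\Sigma = \max\{2\overline{\lambda}B_0,\, B_0^2,\, 1\}$.
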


\begin{proof}
Write \(\theta = (F,\Lambda,\Psi)\) and \(\theta' = (F',\Lambda',\Psi')\). Then
\[
\Sigma(\theta) - \Sigma(\theta') = (F-F')\Lambda F^\top + F'(\Lambda-\Lambda')F^\top + F'\Lambda'(F-F')^\top + (\Psi-\Psi').
\]
By the triangle inequality and submultiplicativity of the Frobenius norm with respect to the operator norm,
\[
\|\Sigma(\theta) - \Sigma(\theta')\|_F \le 2\overline{\lambda} B_0\, \|F-F'\|_F + B_0^2\, \|\Lambda-\Lambda'\|_F + \|\Psi-\Psi'\|_F,
\]
using \(\|F\|_2 \le \|F\|_F \le B_0\), \(\|F'\|_2 \le B_0\), \(\|\Lambda\|_2 \le \overline{\lambda}\), \(\|\Lambda'\|_2 \le \overline{\lambda}\). Taking \(L_\Sigma = \max(2\overline{\lambda}B_0, B_0^2, 1)\),
\[
\|\Sigma(\theta) - \Sigma(\theta')\|_F \le L_\Sigma\, d(\theta,\theta').
\]
Since \(L_\Sigma\) depends only on \(B_0\) and \(\overline{\lambda}\), the lemma follows.
\end{proof}

\subsection{Main Result}
\label{sec:theory-main}

Let \(P_{\theta_0}^n\) denote the true data-generating distribution with parameter \(\theta_0\). We write \(\Pi(\cdot \mid Y_1,\dots,Y_n)\) for the posterior distribution.

\begin{theorem}[Posterior contraction for $\Sigma$]
\label{thm:contraction}
Under Assumptions~\ref{ass:eig}--\ref{ass:prior}, let
\[
\varepsilon_n = \sqrt{\frac{M s_0 \log n}{n}}.
\]
Suppose \(M s_0 \log n / n \to 0\) as \(n \to \infty\). Then there exists a constant \(C>0\) such that
\[
\Pi\left( \|\Sigma - \Sigma_0\|_F > C \varepsilon_n \mid Y_1,\dots,Y_n \right) \xrightarrow{P_{\theta_0}^n} 0.
\]
%Moreover, if \(M\) is fixed and \(s_0 = o(\sqrt{n})\), then
%\[
%\Pi\left( \rank(\Sigma - \Psi) = q_0 \mid Y_1,\dots,Y_n \right) \xrightarrow{P_{\theta_0}^n} 1.
%\]
\end{theorem}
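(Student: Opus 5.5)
We apply the general posterior contraction theorem for i.i.d. observations (Theorem 8.9 of \citet{ghosal2017}) with the Hellinger distance $h$ on $\{\N_p(0,\Sigma(\theta)):\theta\in\Theta_n\}$. By \eqref{eq:hell-frob}, a Hellinger rate transfers to the Frobenius rate for $\Sigma$. Three conditions must be checked. (i) Prior mass: $\Pi\bigl(B_2(\theta_0,\varepsilon_n)\bigr)\ge e^{-Cn\varepsilon_n^2}$, where $B_2$ is the Kullback--Leibler neighbourhood of \citet{ghosal2017}. (ii) Entropy: $\log N(\varepsilon_n,\Theta_n,h)\lesssim n\varepsilon_n^2$. (iii) Remaining mass: $\Pi(\Theta\setminus\Theta_n)$ is exponentially small. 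Condition (iii) is immediate: by Remark~\ref{rem:slab-truncation} the prior is supported in $\Theta_n$, so $\Pi(\Theta_n^c)=0$.

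For the entropy, decompose $\Theta_n$ by the column supports. Each of the $M$ columns has at most $s_0$ nonzero entries, so there are at most $\binom{p}{s_0}^M\le p^{Ms_0}$ support patterns. Given a pattern, $F$ lies in a Euclidean ball of radius $B_0$ in dimension at most $Ms_0$. By Lemma~\ref{lem:lipschitz}, an $\varepsilon_n/(CL_\Sigma)$-net in the product metric $d$ yields an $\varepsilon_n$-net in $h$. Combining the $F$-, $\Lambda$- and $\Psi$-nets gives
\[
\log N\lesssim Ms_0\log p+(Ms_0+M+p)\log(1/\varepsilon_n).
\]
Assumption~\ref{ass:sparse} gives $p\le CMs_0$, and Assumption~\ref{ass:M} gives $M\lesssim\log n$. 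Assuming also that $\log p\lesssim\log n$ (polynomial growth of $p$ in $n$, which I would add as an explicit assumption if it is not implied), every term is $O(Ms_0\log n)=O(n\varepsilon_n^2)$.

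The prior-mass bound is the main obstacle. Because all eigenvalues lie in a fixed compact range, the Kullback--Leibler divergence and its second moment between Gaussians are bounded by constants times $\|\Sigma-\Sigma_0\|_F^2$. It therefore suffices to lower-bound $\Pi(\|\Sigma-\Sigma_0\|_F\le c\varepsilon_n)$, and by Lemma~\ref{lem:lipschitz} this reduces to $\Pi(d(\theta,\theta_0)\le c'\varepsilon_n)$. I will lower-bound this by a specific event.
\begin{itemize}
\item[(a)] The Pólya urn produces the partition with all $M$ columns as singletons. By \eqref{eq:eppf} this has probability $\alpha^M/\prod_{i=1}^{M}(\alpha+i-1)\ge (\alpha/(\alpha+M))^M=e^{-O(M\log M)}$.
\item[(b)] For each of the $M$ atoms, each coordinate outside the true support takes the spike, and each coordinate inside it takes the slab within $\eta$ of $f_{0,r,j}$, where $\eta=\varepsilon_n/\sqrt{Ms_0}$. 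Each spike has probability $\pi_0$, and there are at most $pM$ of them, so these contribute $\pi_0^{pM}$. Each slab coordinate has probability at least $(1-\pi_0)\,\eta\,\inf\phi_{\tau_0}$. Assumption~\ref{ass:calibration} keeps $f_{0,r,j}$ at distance at least $\delta_0B_0\tau_0$ inside the truncation interval, so the truncated density there is $\gtrsim 1/\tau_0$. Each slab coordinate therefore has probability $\gtrsim\eta/\tau_0=\varepsilon_n$, and with at most $q_0s_0\le Ms_0$ of them this gives $e^{-O(Ms_0\log n)}$.
\item[(c)] $\Lambda$ lies within $\varepsilon_n/\sqrt M$ of $I_M$ coordinatewise, and $\Psi$ within $\varepsilon_n/\sqrt p$ of $\Psi_0$. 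The lower bounds $c_\lambda$ and $c_0$ on the prior densities give probabilities $e^{-O(M\log n)}$ and $e^{-O(p\log n)}$.
\end{itemize}

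The delicate point is $\pi_0^{pM}=e^{-pM\log(1/\pi_0)}$, which is larger than the target $Ms_0\log n$ unless $p\lesssim s_0\log n$. This must come from $p\le CMs_0$ together with $M\lesssim\log n$, and the bookkeeping there is tight. To reduce it, I would instead let all $M-q_0$ spurious columns share the single zero atom. This reduces the spike cost to about $q_0 p$ entries and costs only a factor $e^{-O(M\log M)}$ in the EPPF. The term $p\log n$ from $\Psi$ is then absorbed via $p\le CMs_0$. If these bounds go through, the three conditions hold with $\varepsilon_n=\sqrt{Ms_0\log n/n}$, and the theorem follows from the general theorem together with \eqref{eq:hell-frob}.
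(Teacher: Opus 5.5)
Your proposal is correct and follows essentially the same route as the paper. The paper also uses the three Ghosal--Ghosh--van der Vaart conditions with a trivially null sieve complement, bounds the entropy via support patterns together with Lemma~\ref{lem:lipschitz}, and obtains the prior mass from a partition with $q_0$ singleton atoms near the $f_{0,j}$ plus one shared zero atom. Your coordinatewise intervals of half-width $\varepsilon_n/\sqrt{Ms_0}$, each costing about $\varepsilon_n$ because $\eta/\tau_0=\varepsilon_n$, are only a tidier substitute for the paper's $\ell_2$-balls and volume bound.

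Both points you hedge on close exactly as in the paper:
\begin{itemize}
\item \emph{Spike cost.} Assumptions~\ref{ass:sparse} and~\ref{ass:M} give $pM\le CM^2s_0\le C'Ms_0\log n$, so $\pi_0^{pM}$ is affordable even with all singletons. Sharing the zero atom does not remove this reliance on $M\lesssim\log n$, since $q_0$ may be of order $M$.
\item \emph{Growth of $p$.} $\log p\lesssim\log n$ is already implied, because $Ms_0\log n/n\to0$ forces $p\le CMs_0\le Cn$, so no extra assumption is needed.
\end{itemize}
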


%The rank-consistency part is proved in Section~\ref{sec:fixedM}. The contraction part is proved below.

\begin{remark}[Asymptotic framework]
\label{rem:asymptotic}
The theorem allows \(p\) and \(q_0\) to grow with \(n\) such that \(M s_0 \log n / n \to 0\). Together with \(p \le C M s_0\), this implies \(p \log n / n \to 0\). The rate \(\varepsilon_n \to 0\), and \(n \varepsilon_n^2 = M s_0 \log n \to \infty\) because \(M \ge 1\), \(s_0 \ge 1\), and \(\log n \to \infty\). Thus the posterior concentration is non-degenerate and the rate is meaningful.
\end{remark}

\subsection{Proof of the Contraction Result}
\label{sec:theory-proof}

We prove the contraction part using the general posterior contraction framework of \citet{ghosal2000}, Theorem 2.1. Before stating the framework, we fix the notation. For two probability measures \(P\) and \(Q\) on the same measurable space with \(P \ll Q\), the \emph{Kullback--Leibler divergence} of \(P\) from \(Q\) is
\[
K(P,Q) = \int \log\!\left(\frac{dP}{dQ}\right) dP,
\]
and the \emph{second-order variation} of the log-likelihood ratio is
\[
V(P,Q) = \int \left|\log\!\left(\frac{dP}{dQ}\right) - K(P,Q)\right|^2 dP.
\]
For a metric space \((\Theta, \rho)\) and a real number \(\varepsilon > 0\), the \emph{\(\varepsilon\)-covering number} \(N(\varepsilon, \Theta, \rho)\) is the smallest number of \(\rho\)-balls of radius \(\varepsilon\) needed to cover \(\Theta\). When \(\Theta\) is a subset of the parameter space and \(\rho = h\), we write \(N(\varepsilon, \Theta, h)\) as a shorthand for \(N(\varepsilon, \{\Sigma(\theta) : \theta \in \Theta\}, h)\), the covering number of the image of \(\Theta\) under the covariance map \(\theta \mapsto \Sigma(\theta)\), in the Hellinger metric.

Let \(\Theta_n\) be the sieve of Section~\ref{sec:theory-sieve} and \(\varepsilon_n \to 0\) with \(n\varepsilon_n^2 \to \infty\). The three conditions of \citet{ghosal2000}, Theorem 2.1, to verify are:

\begin{enumerate}
\item[(C1)] \textbf{Prior concentration:}
\[
\Pi\left(\theta : K(P_{\theta_0},P_\theta) \le \varepsilon_n^2,\; V(P_{\theta_0},P_\theta) \le \varepsilon_n^2\right) \ge e^{-c n \varepsilon_n^2}.
\]
\item[(C2)] \textbf{Metric entropy:}
\[
\log N(\varepsilon_n, \Theta_n, h) \le c n \varepsilon_n^2.
\]
\item[(C3)] \textbf{Sieve complement:}
\[
\Pi(\Theta_n^c) \le e^{-c n \varepsilon_n^2}.
\]
\end{enumerate}

Under Assumption~\ref{ass:eig} and Remark~\ref{rem:slab-truncation}, all covariance matrices on the prior support have eigenvalues in the fixed interval \([\underline{c}, \overline{C}_\Sigma]\). By \eqref{eq:hell-frob}, the Hellinger and Frobenius distances are equivalent on this set. Similarly, the Kullback--Leibler divergence \(K(P_{\theta_0}, P_\theta)\) and its variation \(V(P_{\theta_0}, P_\theta)\) between two zero-mean Gaussians with covariance matrices in this set are bounded by a constant times the squared Frobenius distance (Lemma 8.2 of \citet{ghosal2017}). Thus it suffices to establish (C1)--(C3) with the Frobenius norm in place of \(h\).

Throughout, \(C,c>0\) are generic constants that may change from line to line and depend only on \(\alpha,\pi_0,\underline{c},\overline{c},B_0,\delta_0,\underline{\lambda},\overline{\lambda}\) and the prior hyperparameters. Let \(C_1>0\) be a large fixed constant to be chosen at the end of Step~1(e); its value will be fixed so that all the bounds below are consistent. The constant \(C_1\) is used only in Steps 1(b), 1(c), and 1(e).

\subsubsection{Step 1: Prior concentration}
\label{sec:priorconc}

We construct an event \(\mathcal{A}_n\) with
\begin{equation}
\label{eq:priorconc}
\Pi(\mathcal{A}_n) \ge \exp(-C n \varepsilon_n^2)
\end{equation}
and \(\mathcal{A}_n \subseteq \{h(\Sigma,\Sigma_0) \le C\varepsilon_n\}\).

\paragraph{Step 1(a): The partition event.}
Let \(\mathcal{C}_0\) be the partition of \(\{1,\dots,M\}\) into \(K_0 = q_0 + 1\) clusters defined by
\[
C_j = \{j\} \quad (j = 1,\dots,q_0), \qquad C_{q_0+1} = \{q_0+1,\dots,M\}.
\]
If \(q_0 = M\), the partition is into \(K_0 = q_0\) singletons and there is no cluster \(C_{q_0+1}\); the argument is modified in the obvious way. We present the case \(q_0 < M\). The cluster sizes are \(n_j = 1\) for \(j=1,\dots,q_0\) and \(n_{q_0+1} = M - q_0\).

By the EPPF \eqref{eq:eppf} and the standard bound \(\Gamma(x+\alpha)/\Gamma(x) \le C_\alpha x^\alpha\) for \(x \to \infty\),
\[
\Pr(\mathcal{C}_0) = \frac{\alpha^{q_0+1} (M - q_0 - 1)!}{\Gamma(\alpha+M)/\Gamma(\alpha)} \ge c\,\alpha^{q_0+1} M^{-q_0-\alpha}
\]
for a constant \(c>0\) depending only on \(\alpha\), since \((M-q_0-1)! \ge (M-1)! M^{-q_0}\) and \(\Gamma(\alpha+M) \le C_\alpha (M-1)! M^\alpha\). Hence
\begin{equation}
\label{eq:partition-bound}
\log \Pr(\mathcal{C}_0) \ge -C q_0 \log M \ge -C M s_0 \log n = -C n \varepsilon_n^2,
\end{equation}
using \(q_0 \le M\), \(M \le C\log n\), and \(s_0 \ge 1\).

\paragraph{Step 1(b): The atom event.}
Conditional on \(\mathcal{C}_0\), the \(q_0+1\) atoms \(f_1^*,\dots,f_{q_0+1}^*\) are independent draws from \(G_0\). For each \(j \in \{1,\dots,q_0\}\), let \(S_j\) denote the support of \(f_{0,j}\) and \(m_j := |S_j| \le s_0\) its cardinality; because \(f_{0,j} \ne 0\), we have \(m_j \ge 1\). Define
\[
\gamma_n = \min\!\left(\frac{\varepsilon_n}{C_1 q_0},\; \frac{\delta_0 B_0 \tau_0}{2}\right),
\]
which is positive and tends to zero as \(n \to \infty\), and set
\[
\mathcal{A}_n^{\mathrm{atom}} = \bigl\{S(f_j^*) = S_j,\; \|f_j^* - f_{0,j}\|_2 \le \gamma_n \;(j=1,\dots,q_0)\bigr\} \cap \bigl\{f_{q_0+1}^* = 0\bigr\},
\]
where \(S(f)\) denotes the support of \(f\). Let $f^{(S_j)}_{0,j}$ denote $f_{0,j}$ supported on $S_j$ (a vector in $\R^{m_j}$).

The choice of \(\gamma_n\) guarantees that, for each \(j\), the Euclidean ball \(\{\|u - f_{0,j}^{(S_j)}\|_2 \le \gamma_n\}\) is contained in the truncation region \([-B_0\tau_0, B_0\tau_0]^{m_j}\): by Assumption~\ref{ass:calibration}, \(|f_{0,r,j}| \le (1-\delta_0)B_0\tau_0\) for all \(r \in S_j\), so for any \(u\) in the ball and any \(r \in S_j\),
\[
|u_r| \le |f_{0,r,j}| + \|u - f_{0,j}^{(S_j)}\|_2 \le (1-\delta_0)B_0\tau_0 + \gamma_n \le \left(1 - \tfrac{\delta_0}{2}\right)B_0\tau_0 < B_0\tau_0.
\]
Hence the truncated Gaussian density coincides with the untruncated Gaussian density on the ball.

\emph{Per-atom bound.} Fix \(j \in \{1,\dots,q_0\}\), and let \(\Pi_j\) denote the conditional prior probability of \(\{S(f_j^*) = S_j,\; \|f_j^* - f_{0,j}\|_2 \le \gamma_n\}\). Write \(\phi_{\tau_0}\) for the \(\mathcal N(0,\tau_0^2)\) density and \(Z = 2\Phi(B_0\tau_0)-1 \in (0,1]\) for the truncation normaliser. Then
\[
\Pi_j = (1-\pi_0)^{m_j}\pi_0^{p-m_j}Z^{-m_j}
\int_{\{u\in\R^{m_j}:\,\|u - f_{0,j}^{(S_j)}\|_2 \le \gamma_n\}} \prod_{r\in S_j} \phi_{\tau_0}(u_r)\,du.
\]
On the ball, the integrand is at least its value at the point of maximal norm, \((2\pi\tau_0^2)^{-m_j/2}\exp(-(\|f_{0,j}\|_2 + \gamma_n)^2/(2\tau_0^2))\), and the ball has volume \(v_{m_j}\gamma_n^{m_j}\) with \(v_{m_j} = \pi^{m_j/2}/\Gamma(m_j/2+1)\). Using \(\tau_0^2 = 1/(Ms_0)\),
\begin{equation}
\label{eq:per-atom-raw}
\begin{aligned}
\log\Pi_j \ge{}& m_j\log(1-\pi_0) + (p-m_j)\log\pi_0 - m_j\log Z - \frac{m_j}{2}\log(2\pi\tau_0^2) \\
&- \frac{Ms_0}{2}\|f_{0,j}\|_2^2 - Ms_0\|f_{0,j}\|_2\gamma_n - \frac{Ms_0}{2}\gamma_n^2 + \log v_{m_j} + m_j\log\gamma_n.
\end{aligned}
\end{equation}
Each term on the right-hand side of \eqref{eq:per-atom-raw} is bounded as follows.

(i) \(m_j\log(1-\pi_0) \ge -Cm_j \ge -Cs_0\), using \(m_j \le s_0\).

(ii) \((p-m_j)\log\pi_0 \ge -Cp \ge -CMs_0\).

(iii) \(-m_j\log Z \ge 0\).

(iv) \(-\frac{m_j}{2}\log(2\pi\tau_0^2) = -\frac{m_j}{2}\log(2\pi) + \frac{m_j}{2}\log(Ms_0) \ge -Cm_j \ge -Cs_0\), using \(Ms_0 \ge 1\) for all sufficiently large \(n\).

(v) \(-\frac{Ms_0}{2}\|f_{0,j}\|_2^2 \ge -\frac{B_0^2 s_0}{2} \ge -Cs_0\), using \(\|f_{0,j}\|_2^2 \le m_j (1-\delta_0)^2 B_0^2/(Ms_0) \le B_0^2/M\) from Assumption~\ref{ass:calibration} and \(m_j \le s_0\).

(vi) \(-Ms_0\|f_{0,j}\|_2\gamma_n \ge -B_0 s_0\gamma_n\sqrt M\), using \(\|f_{0,j}\|_2 \le B_0/\sqrt M\). Since \(\gamma_n \le \delta_0 B_0\tau_0/2 = \delta_0 B_0/(2\sqrt{Ms_0})\),
\[
-B_0 s_0\gamma_n\sqrt M \ge -B_0 s_0\sqrt M \cdot \frac{\delta_0 B_0}{2\sqrt{Ms_0}} = -\frac{\delta_0 B_0^2 \sqrt{s_0}}{2} \ge -Cs_0,
\]
using \(\sqrt{s_0} \le s_0\) for \(s_0 \ge 1\).

(vii) \(-\frac{Ms_0}{2}\gamma_n^2 \ge -Ms_0\) for all sufficiently large \(n\), because \(\gamma_n \to 0\) implies \(\gamma_n^2 \le 2\) once \(n\) is large enough, whence \(\frac{Ms_0}{2}\gamma_n^2 \le Ms_0\).

(viii) \(\log v_{m_j} \ge -Cs_0\log n\). Indeed, \(v_{m_j} = \pi^{m_j/2}/\Gamma(m_j/2+1)\). For \(m_j = 1\), \(\log v_{m_j} = \log 2 > 0\). For \(m_j \ge 2\), Stirling's bound \(\Gamma(x+1) \le x^x\) for \(x \ge 1\) gives \(\Gamma(m_j/2+1) \le (m_j/2)^{m_j/2}\), so \(\log v_{m_j} \ge \frac{m_j}{2}\log\pi - \frac{m_j}{2}\log(m_j/2) \ge -Cm_j\log n \ge -Cs_0\log n\), using \(m_j \le s_0 \le n\).

(ix) \(m_j\log\gamma_n \ge -Cs_0\log n\). To see this, note that \(\gamma_n = \min(\varepsilon_n/(C_1 q_0),\, \delta_0 B_0\tau_0/2)\), and both arguments are bounded below by an inverse power of \(n\). Specifically, for large \(n\) we have \(Ms_0 \ge 1\), so
\[
\varepsilon_n = \sqrt{\frac{Ms_0\log n}{n}} \ge \sqrt{\frac{\log n}{n}} = \frac{\sqrt{\log n}}{\sqrt{n}},
\]
and since \(q_0 \le M \le C\log n\), we obtain \(\varepsilon_n/(C_1 q_0) \ge 1/(C\sqrt{n}\sqrt{\log n})\). Also \(\tau_0 = 1/\sqrt{Ms_0}\) and \(Ms_0 \le Cn/\log n\), so \(\delta_0 B_0\tau_0/2 = \delta_0 B_0/(2\sqrt{Ms_0}) \ge 1/(C\sqrt{n})\). Hence, for every fixed \(\kappa > 0\), there exists a constant \(C_\kappa\) (depending on \(\kappa\) but not on \(n\) or \(M\)) such that \(\gamma_n \ge 1/(C_\kappa n^{1/2+\kappa})\) for all sufficiently large \(n\), and consequently \(\log\gamma_n \ge -C\log n\). Multiplying by \(m_j \le s_0\) gives the stated bound.

Collecting these bounds,
\begin{equation}
\label{eq:per-atom}
\log\Pi_j \ge -Cs_0\log n.
\end{equation}

\emph{Summation over the atoms.} Summing \eqref{eq:per-atom} over \(j=1,\dots,q_0\) and using \(q_0 \le M\),
\[
\sum_{j=1}^{q_0}\log\Pi_j \ge -CMs_0\log n.
\]

\emph{Zero-atom contribution.} The event \(f_{q_0+1}^* = 0\) has conditional prior probability \(\pi_0^p \ge \exp(-CMs_0)\), so its log-contribution is at least \(-CMs_0 \ge -CMs_0\log n\).

\emph{Combining.} Since the atoms are conditionally independent given \(\mathcal{C}_0\),
\begin{equation}
\label{eq:atom-bound}
\log\Pi(\mathcal{A}_n^{\mathrm{atom}} \mid \mathcal{C}_0) \ge -CMs_0\log n = -Cn\varepsilon_n^2.
\end{equation}

\paragraph{Step 1(c): The score variance event.}
Define \(\mathcal{A}_n^{\Lambda} = \{\lambda_j \in [1-\eta_n, 1+\eta_n],\; j=1,\dots,q_0\}\) with \(\eta_n = \varepsilon_n/(C_1 q_0)\). Since \(\underline{\lambda} < 1 < \overline{\lambda}\), for sufficiently large \(n\) we have \(1 \pm \eta_n \in (\underline{\lambda},\overline{\lambda})\). The truncated inverse-gamma density is bounded below by \(c_\lambda>0\) on \([\underline{\lambda},\overline{\lambda}]\), so
\[
\Pr(\lambda_j \in [1-\eta_n, 1+\eta_n]) \ge 2 c_\lambda \eta_n.
\]
Taking the product over \(j=1,\dots,q_0\),
\begin{equation}
\label{eq:lambda-bound}
\log \Pi(\mathcal{A}_n^{\Lambda}) \ge q_0 \log(2 c_\lambda \eta_n) \ge -C q_0 \log n \ge -C M s_0 \log n = -C n \varepsilon_n^2,
\end{equation}
using \(\eta_n = \varepsilon_n/(C_1 q_0)\), \(\varepsilon_n \ge 1/\sqrt n\), and \(q_0 \le M \le C\log n\).

\paragraph{Step 1(d): The idiosyncratic variance event.}
Define \(\mathcal{A}_n^{\Psi} = \{\|\Psi - \Psi_0\|_F \le \varepsilon_n\}\). By Assumption~\ref{ass:prior}, the prior on \(\Psi\) is supported on \([\underline{c},2\overline{c}]^p\) with density bounded below by \(c_0>0\). Since \(\psi_{0,r} \in [\underline{c},\overline{c}]\), the probability that each \(\psi_r\) lies in an interval of length \(2\varepsilon_n/\sqrt{p}\) around \(\psi_{0,r}\) is at least \(c_0 \cdot 2\varepsilon_n/\sqrt{p}\). Hence
\[
\Pr(\|\Psi - \Psi_0\|_F \le \varepsilon_n) \ge (2c_0)^p \varepsilon_n^p p^{-p/2},
\]
and taking logarithms,
\begin{equation}
\label{eq:psi-bound}
\log \Pr(\|\Psi - \Psi_0\|_F \le \varepsilon_n) \ge -p \log(1/\varepsilon_n) - \frac{p}{2}\log p + O(p) \ge -CMs_0\log n = -Cn\varepsilon_n^2,
\end{equation}
using \(\log(1/\varepsilon_n) \le \frac{1}{2}\log n\) and \(p \le CMs_0\).

\paragraph{Step 1(e): Combining and containment in the Hellinger ball.}
Define \(\mathcal{A}_n = \mathcal{C}_0 \cap \mathcal{A}_n^{\mathrm{atom}} \cap \mathcal{A}_n^{\Lambda} \cap \mathcal{A}_n^{\Psi}\). Since the components are independent under the prior (conditional on the partition for the atoms), and using \eqref{eq:partition-bound}--\eqref{eq:psi-bound},
\[
\Pi(\mathcal{A}_n) \ge \exp(-C n \varepsilon_n^2),
\]
absorbing the constants into \(C\).

On \(\mathcal{A}_n\), the covariance is \(\Sigma = \sum_{j=1}^{q_0} \lambda_j f_j^* f_j^{*\top} + \Psi\), since \(f_{q_0+1}^* = 0\). We bound
\[
\|\Sigma - \Sigma_0\|_F \le \sum_{j=1}^{q_0} \|\lambda_j f_j^* f_j^{*\top} - f_{0,j} f_{0,j}^\top\|_F + \|\Psi - \Psi_0\|_F.
\]
For each \(j\), with \(|\lambda_j - 1| \le \eta_n\) and \(\|f_j^*\|_2^2 \le 2\|f_{0,j}\|_2^2 + 2\gamma_n^2 \le 2\overline{c} + 2\gamma_n^2 \le C\) (using \(\|f_{0,j}\|_2^2 \le \lambda_{\max}(F_0F_0^\top) \le \lambda_{\max}(\Sigma_0) \le \overline{c}\) from Assumption~\ref{ass:eig}),
\[
\|\lambda_j f_j^* f_j^{*\top} - f_{0,j} f_{0,j}^\top\|_F \le C \eta_n + C \gamma_n.
\]
Choosing \(C_1\) sufficiently large so that \(q_0(C\eta_n + C\gamma_n) \le \varepsilon_n/2\) (which is possible because \(q_0 \eta_n = \varepsilon_n/C_1\) and \(q_0 \gamma_n \le \varepsilon_n/C_1\) by the definition of \(\gamma_n\)) and adding \(\|\Psi - \Psi_0\|_F \le \varepsilon_n\),
\[
\|\Sigma - \Sigma_0\|_F \le C \varepsilon_n.
\]
By \eqref{eq:hell-frob}, \(h(\Sigma,\Sigma_0) \le C\varepsilon_n\). Hence \(\mathcal{A}_n \subseteq \{h(\Sigma,\Sigma_0) \le C\varepsilon_n\}\), and \eqref{eq:priorconc} is established.

\subsubsection{Step 2: Metric entropy}
\label{sec:entropy}

We bound the covering number of \(\Theta_n\) in the Hellinger metric \(h\). By Lemma~\ref{lem:lipschitz} and the Hellinger--Frobenius equivalence \eqref{eq:hell-frob}, there is a constant \(C_{\mathrm{HF}} > 0\) such that for all \(\theta,\theta' \in \Theta_n\),
\[
h(\Sigma(\theta),\Sigma(\theta')) \le C_{\mathrm{HF}} \|\Sigma(\theta)-\Sigma(\theta')\|_F \le C_{\mathrm{HF}} L_\Sigma\, d(\theta,\theta').
\]
Hence an \(h\)-ball of radius \(\varepsilon\) contains a \(d\)-ball of radius \(\varepsilon/(C_{\mathrm{HF}} L_\Sigma)\), and consequently
\begin{equation}
\label{eq:entropy-transfer-theory}
N(\varepsilon, \Theta_n, h) \le N\!\left(\frac{\varepsilon}{C_{\mathrm{HF}} L_\Sigma}, \Theta_n, d\right).
\end{equation}
It therefore suffices to bound the covering number of \(\Theta_n\) in the product metric \(d\).

The parameter space decomposes into three components. Cover each at scale \(\delta := \varepsilon_n/(3 C_{\mathrm{HF}} L_\Sigma)\) and combine multiplicatively. Because the covering number of a product of metric spaces at scale \(3\delta\) is at most the product of the covering numbers of the factors at scale \(\delta\), we have
\[
N(3\delta, \Theta_n, d) \le N_F(\delta) \cdot N_\Lambda(\delta) \cdot N_\Psi(\delta),
\]
where \(N_F(\delta)\) denotes the covering number of the \(F\)-component including its support patterns at scale \(\delta\), and \(N_\Lambda(\delta), N_\Psi(\delta)\) the covering numbers of the \(\Lambda\)- and \(\Psi\)-components. Since \(L_\Sigma \le C\) and \(C_{\mathrm{HF}} \le C\) are absolute constants, we have \(\delta \ge \varepsilon_n/C\) for a constant \(C > 0\) independent of \(n\) and \(M\).

\emph{Support patterns and non-zero entries of \(F\).} The number of support patterns for \(F\) (each column has at most \(s_0\) non-zero entries among \(p\) positions) is at most
\[
\left(\sum_{k=0}^{s_0}\binom{p}{k}\right)^M
\le \left((s_0+1)\left(\frac{ep}{s_0}\right)^{s_0}\right)^M,
\]
where the inequality is the standard bound \(\sum_{k=0}^{s_0}\binom{p}{k} \le (s_0+1)(ep/s_0)^{s_0}\), which is valid for all \(1 \le s_0 \le p\). Its logarithm is
\[
\log \left((s_0+1)\left(\frac{ep}{s_0}\right)^{s_0}\right)^M
= M \log(s_0+1) + M s_0 \log\!\left(\frac{ep}{s_0}\right)
\le M s_0 \log 2 + M s_0 \log\!\left(\frac{ep}{s_0}\right)
\le C M s_0 \log n,
\]
using \(s_0 \ge 1\) (so that \(s_0+1 \le 2^{s_0}\)), \(p \le CMs_0 \le Cn\), and \(M \le C\log n\). Conditional on a fixed support pattern, the non-zero entries of \(F\) lie in \([-B_0\tau_0, B_0\tau_0]^{Ms_0}\), a set of \(\ell_2\)-diameter at most \(2B_0\). Covering it with \(\ell_2\)-balls of radius \(\delta\) gives at most \((1 + 4B_0/\delta)^{Ms_0}\) balls. Combining with the support-pattern contribution,
\[
\log N_F(\delta) \le C M s_0 \log n + Ms_0 \log(1 + 4B_0/\delta) \le CMs_0\log n,
\]
using \(\delta \ge \varepsilon_n/C \ge 1/(C\sqrt n)\) and \(\varepsilon_n \ge 1/\sqrt n\).

\emph{Score variances.} The set \([\underline{\lambda},\overline{\lambda}]^M\) viewed as the set of diagonal matrices in \(\R^{M\times M}\) with Frobenius metric has \(\ell_2\)-diameter at most \((\overline{\lambda}-\underline{\lambda})\sqrt M\). Its covering number at scale \(\delta\) is at most \((1 + 2(\overline{\lambda}-\underline{\lambda})\sqrt M/\delta)^M\), giving
\[
\log N_\Lambda(\delta) \le M \log(1 + 2(\overline{\lambda}-\underline{\lambda})\sqrt M/\delta) \le M \log(C\sqrt M/\delta) \le C M \log n \le C M s_0 \log n,
\]
using \(\delta \ge 1/(C\sqrt n)\), \(M \le C\log n\), and \(s_0 \ge 1\).

\emph{Idiosyncratic variances.} The set \([\underline{c},2\overline{c}]^p\) viewed as the set of diagonal matrices in \(\R^{p\times p}\) has \(\ell_2\)-diameter at most 
\((2\overline{c}-\underline{c})\sqrt p\). Its covering number at scale \(\delta\) is at most \((1 + 2\overline{c}\sqrt p/\delta)^p\), giving
\[
\log N_\Psi(\delta) \le p \log(1 + 2\overline{c}\sqrt p/\delta) \le p \log(C\sqrt p/\delta) \le Cp\log n \le CMs_0\log n,
\]
using \(p \le CMs_0\).

\emph{Combining.} The covering number of \(\Theta_n\) at scale \(3\delta = \varepsilon_n/(C_{\mathrm{HF}} L_\Sigma)\) in \(d\) satisfies
\[
\log N(\varepsilon_n/(C_{\mathrm{HF}} L_\Sigma), \Theta_n, d) \le \log N_F(\delta) + \log N_\Lambda(\delta) + \log N_\Psi(\delta) \le CMs_0\log n = Cn\varepsilon_n^2.
\]
Substituting into \eqref{eq:entropy-transfer-theory} and absorbing constants,
\[
\log N(\varepsilon_n, \Theta_n, h) \le Cn\varepsilon_n^2.
\]
This establishes condition (C2).

\subsubsection{Step 3: Sieve complement}
\label{sec:sieve}

By Remark~\ref{rem:slab-truncation}, the support of the prior is contained in \(\Theta_n\). Indeed, the truncation of the slab gives \(|f_{r,j}| \le B_0\tau_0\) for every non-zero entry, so \(\|f_j^*\|_2^2 \le B_0^2/M\) and \(\|F\|_F^2 \le B_0^2\). The remaining conditions defining \(\Theta_n\) hold by construction under Assumption~\ref{ass:prior}. Therefore
\[
\Pi(\Theta_n^c) = 0 \le \exp(-cn\varepsilon_n^2),
\]
and condition (C3) holds trivially.

\subsubsection{Step 4: Application of the general contraction theorem}
\label{sec:apply}

With (C1)--(C3) established, the general theory of posterior contraction (Theorem~2.1 of \citet{ghosal2000}; see also Theorem~8.19 and the surrounding discussion in \citet{ghosal2017}) gives posterior contraction in the Hellinger metric: the existence of exponentially powerful tests required by Theorem~2.1 of \citet{ghosal2000} follows from the entropy bound (C2) and the standard test-construction argument (Section~7.1 of \citet{ghosal2017}). Hence
\[
\Pi\bigl( h(\Sigma,\Sigma_0) > C\varepsilon_n \mid Y_1,\dots,Y_n \bigr) \xrightarrow{P_{\theta_0}^n} 0.
\]
By \eqref{eq:hell-frob},
\[
\Pi\bigl( \|\Sigma - \Sigma_0\|_F > C\varepsilon_n \mid Y_1,\dots,Y_n \bigr) \xrightarrow{P_{\theta_0}^n} 0.
\]
This completes the proof of the contraction result in Theorem~\ref{thm:contraction}.

\begin{remark}[Role of the spike-and-slab base measure]
\label{rem:spike-role}
The spike-and-slab base measure is essential for two reasons. First, it allows the event \(f_{q_0+1}^* = 0\) to have positive prior probability \(\pi_0^p\), which is what makes the extra columns contribute nothing to \(\Sigma\) on \(\mathcal{A}_n\). Without the spike, the extra atom would be a non-degenerate Gaussian draw, and its contribution to \(\Sigma\) would be of order \(1\), destroying the prior concentration bound. Second, the spike component restricts the prior support to sparse matrices, which reduces the metric entropy from \(O(pM\log n)\) to \(O(Ms_0\log n)\).
\end{remark}

\begin{remark}[On the choice of $\tau_0^2$ and $B_0$]
\label{rem:tau}
The slab variance \(\tau_0^2 = 1/(Ms_0)\) is chosen so that the typical non-zero entry of \(F\) is of size \(1/\sqrt{Ms_0}\). The truncation constant \(B_0\) is chosen so that the true loadings are well inside the truncation region (Assumption~\ref{ass:calibration}). Any \(B_0\) sufficiently large (say \(B_0 \ge 3\)) works. The combination of \(\tau_0^2 = 1/(Ms_0)\) and fixed \(B_0\) gives \(\|F\|_F \le B_0\) on the prior support, which places \(\Sigma\) in the fixed eigenvalue interval \([\underline{c},\, \overline{\lambda}B_0^2 + 2\overline{c}]\) and makes the global Hellinger--Frobenius equivalence available.
\end{remark}

\subsection{On the Non-Identifiability of the Labelled Parameter}
\label{sec:theory-nonidentifiability}

Theorem~\ref{thm:contraction} is stated for the covariance matrix $\Sigma(\theta)$, not for the labelled parameter $\theta = (F, \Lambda, \Psi)$. This is not a technical simplification but a consequence of the model's structure, and we make the point explicit here because it bears directly on what the theorem does and does not deliver.

Under the Dirichlet process prior of Section~\ref{sec:polya}, the columns $f_1, \dots, f_M$ are exchangeable: the prior is invariant under any permutation of the index set $\{1, \dots, M\}$. The prior on $\Lambda$, being a product of i.i.d.\ inverse-gamma distributions, is likewise exchangeable. The likelihood depends on $\theta$ only through the covariance
\[
\Sigma(\theta) = F\Lambda F^\top + \Psi = \sum_{i=1}^{M} \lambda_i f_i f_i^\top + \Psi,
\]
which is a symmetric function of the column-score pairs $\{(f_i, \lambda_i)\}_{i=1}^{M}$: it is invariant under the joint action of any permutation of the index set $\{1, \dots, M\}$ on the $f$'s and the $\lambda$'s. The prior on $(F, \Lambda)$ is invariant under the full product group $S_M \times S_M$ acting on the $F$-labels and the $\Lambda$-labels independently. As a function of the parameter $\theta$, the covariance $\Sigma(\theta)$ is invariant under the diagonal $S_M$ for generic $\Lambda$ with distinct entries, and under the full $S_M \times S_M$ when the diagonal entries of $\Lambda$ coincide — in particular, at the truth $\Lambda_0 = I_M$.

Two consequences follow. First, the labelled loading matrix $F$ is not identified by the data, and no consistency statement for $F$ in its labelled form is possible. The posterior on $\theta$ is spread across the entire likelihood equivalence class
\[
\mathcal{E}(\theta_0) := \bigl\{(\tilde F, \tilde\Lambda, \tilde\Psi) : \tilde F \tilde\Lambda \tilde F^\top + \tilde\Psi = \Sigma_0\bigr\},
\]
which is a positive-dimensional manifold. Beyond the finite joint-permutation orbit of $\theta_0$ under the diagonal group $S_M$, the class contains continuous families of parameters: orthogonal rotations of $F_0$ on the left, given by $(F_0 Q, Q^\top \Lambda_0 Q, \Psi_0)$ for any orthogonal $Q$ such that $Q^\top \Lambda_0 Q$ is diagonal; arbitrary values for the null columns of $F_0$ combined with arbitrary values for the corresponding diagonal entries of $\Lambda$; and other deformations. This is the same as the classical label-switching phenomenon in Bayesian mixture models, where consistency for the labelled component parameters fails while consistency for the mixture density holds; see \citet{rousseau2011} for a canonical treatment.

Second, in the specific setup $\Lambda_0 = I_M$ adopted in Section~\ref{sec:theory-setup}, an additional degeneracy arises because all coordinates of $\Lambda_0$ coincide. To see this, consider the labelled parameter $(F_\pi, \Lambda_\sigma, \Psi_0)$, where $\pi, \sigma \in S_M$ are arbitrary permutations of the index set. Its covariance is
\[
F_\pi \Lambda_\sigma F_\pi^\top + \Psi_0
= \sum_{i=1}^{M} \lambda_{0,\sigma(i)} f_{0,\pi(i)} f_{0,\pi(i)}^\top + \Psi_0
= \sum_{i=1}^{M} f_{0,\pi(i)} f_{0,\pi(i)}^\top + \Psi_0
= \sum_{j=1}^{M} f_{0,j} f_{0,j}^\top + \Psi_0
= \Sigma_0,
\]
where the second equality uses $\lambda_{0,\sigma(i)} = 1$, and the third uses the substitution $j = \pi(i)$. Hence the labelled parameter $(F_\pi, \Lambda_\sigma, \Psi_0)$ produces the same covariance $\Sigma_0$ as $\theta_0$, for every pair $(\pi, \sigma) \in S_M \times S_M$. The likelihood at the truth is therefore invariant under the full product group $S_M \times S_M$, and, because the prior is also invariant under this group, the posterior measure is invariant under the same group action.

The consequence for labelled inference is as follows. Even after imposing a canonical ordering on the columns of $F$ (e.g., the Anderson--Rubin sign convention), the marginal posterior on the pair $(f_i, \lambda_i)$ for a fixed index $i$ is a mixture over the $q_0$ distinct nonzero columns of $F_0$ and the null column, because the posterior is invariant under the $S_M \times S_M$ action and the columns $f_{0,1}, \dots, f_{0,M}$ are generally distinct. Consequently, the joint pair $(f_i, \lambda_i)$ is not concentrated on the truth, and no labelled functional of $\theta$ inherits a consistency statement. This is a stronger form of non-identifiability than the standard label-switching of the diagonal group $S_M$, and it is intrinsic to the exchangeable construction combined with the degenerate choice $\Lambda_0 = I_M$.

The correct target for a contraction statement is therefore a \emph{symmetric} functional of the pairs $(f_i, \lambda_i)$. The covariance $\Sigma(\theta)$ is the canonical such functional, and Theorem~\ref{thm:contraction} is stated for $\Sigma$ precisely because it is the identified object and because the posterior-mean $\widehat\Sigma$ is what the statistician can actually report: it has a well-defined limit, whereas the posterior-mean of any labelled function of $\theta$ does not.

A symmetric functional of the pairs $(f_i, \lambda_i)$ that is finer than $\Sigma$ is the Anderson--Rubin canonical form of the factor structure of $\Sigma$ \citep{anderson1956}, which orders the factors by decreasing eigenvalue in the $\Psi^{-1}$-whitened metric and thereby removes the joint-permutation and rotation ambiguities. Under the classical Anderson--Rubin local identifiability conditions together with the Ledermann condition $p \geq 2q_0 + 1$, the map $\Sigma \mapsto (F^{*}(\Sigma), \Lambda^{*}(\Sigma), \Psi^{*}(\Sigma))$ that sends $\Sigma$ to its Anderson--Rubin canonical form is Lipschitz on a neighbourhood of $\Sigma_0$ within the factor-structured submanifold of positive definite matrices. Consequently, the Anderson--Rubin canonical form, viewed as a functional defined on that neighbourhood of the factor-structured submanifold, inherits the contraction rate $\varepsilon_n$ from Theorem~\ref{thm:contraction}. We do not develop the details of this transfer here, because the resulting statement is a corollary of the contraction of $\Sigma$ rather than a separate theorem.

\section{Optimal Contraction for Fixed Dictionary Size}
\label{sec:fixedM}

The general framework of Section~\ref{sec:theory} allows the dictionary size $M$ to grow with the sample size $n$, up to $M \le C \log n$. This flexibility is valuable in applications where the true number of factors is unknown and may be large. However, as is evident from our posterior contraction rate, when $M$ grows, the posterior contraction rate is slowed by the factor $\sqrt{M}$. 
%and the prior does not necessarily penalise spurious columns sufficiently to guarantee consistent estimation of the true number of factors $q_0$. 
In many practical settings, however, the practitioner may choose a fixed upper bound on the number of factors, often based on domain knowledge or computational constraints. In this section, we show that when $M$ is held fixed and the sparsity level $s_0$ is not too large, the model achieves the minimax optimal contraction rate. 
%and, moreover, the posterior consistently estimates the true rank. This proves the rank consistency part of Theorem~\ref{thm:contraction}.

\subsection{The Fixed Dictionary Setting}

Assume henceforth that $M$ is a fixed constant independent of $n$ and satisfying $M \ge q_0$, the latter also obviously a fixed constant now. All other assumptions—bounded eigenvalues (Assumption~\ref{ass:eig}), sparse identifiability (Assumption~\ref{ass:sparse}) with $p \le C M s_0$, dictionary size (Assumption~\ref{ass:M}) now trivially holds, and the prior (Assumption~\ref{ass:prior})—remain unchanged. The total number of potential non-zero entries in $F$ is now $M s_0 = O(s_0)$, which is the key difference from the growing-$M$ case.

The following result summarises the two main theoretical improvements.

\begin{theorem}[Fixed Dictionary with Stronger Sparsity]
\label{cor:fixedM}
Under Assumptions~\ref{ass:eig}--\ref{ass:prior}, suppose that $M$ is a fixed constant and that
\[
s_0 = o(\sqrt{n}).
\]
Then the following hold.

%\begin{theorem}
%\textbf{Faster contraction rate.} 
There exists a constant $C>0$ such that
\[
\Pi\!\left( \| \Sigma - \Sigma_0 \|_F > C \sqrt{\frac{s_0 \log n}{n}} \;\middle|\; Y_1,\dots,Y_n \right) \xrightarrow{P_{\theta_0}^n} 0.
\]
%Moreover, the posterior contracts for $F$ and $\Psi$ at the same rate:
%\[
%\Pi\!\left( \|F - F_0\|_F + \|\Psi - \Psi_0\|_F > C \sqrt{\frac{s_0 \log n}{n}} \;\middle|\; Y_1,\dots,Y_n \right) \xrightarrow{P_{\theta_0}^n} 0.
%\]

%\item[(b)] \textbf{Rank consistency.} The posterior consistently estimates the true number of factors:
%\[
%\Pi\!\left( \rank(\Sigma - \Psi) = q_0 \;\middle|\; Y_1,\dots,Y_n \right) \xrightarrow{P_{\theta_0}^n} 1.
%\]
\end{theorem}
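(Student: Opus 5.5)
The plan is to re-run the three-condition verification of Theorem~\ref{thm:contraction}, i.e.\ (C1)--(C3) of \citet{ghosal2000}, Theorem~2.1, with the target rate
\[
\bar\varepsilon_n = \sqrt{\frac{s_0\log n}{n}},
\]
keeping track of how every bound depends on $M$. Now $M$ and $q_0\le M$ are constants, so they can be absorbed into generic constants. Since $Ms_0\asymp s_0$, we have $\bar\varepsilon_n\asymp\varepsilon_n$ up to the constant $\sqrt M$. The proof is then largely a corollary of Theorem~\ref{thm:contraction} with $M$ absorbed. Even so, I will check each step explicitly rather than appeal to $\sqrt M$ being constant, so that the role of fixed $M$ (and of the condition $s_0=o(\sqrt n)$) is visible.

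\textbf{Prior mass, condition (C1).} I will reuse the event $\mathcal{A}_n=\mathcal{C}_0\cap\mathcal{A}_n^{\mathrm{atom}}\cap\mathcal{A}_n^{\Lambda}\cap\mathcal{A}_n^{\Psi}$ of Step~1, with $\varepsilon_n$ replaced by $\bar\varepsilon_n$ in $\gamma_n$, $\eta_n$ and the radius of $\mathcal{A}_n^{\Psi}$. The pieces are bounded as follows.
\begin{itemize}
\item The partition probability $\Pr(\mathcal{C}_0)$ is a positive constant independent of $n$, because $M$ is fixed.
\item Each atom satisfies $\log\Pi_j\ge -Cs_0\log n$ by \eqref{eq:per-atom}. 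The constants there depend only on $B_0,\delta_0,\pi_0$, and $p\le CMs_0=O(s_0)$ controls the spike term $(p-m_j)\log\pi_0$. Summing over the $q_0=O(1)$ atoms gives $-Cs_0\log n$.
\item The zero atom contributes $\pi_0^p\ge e^{-Cs_0}$.
\item The $\Lambda$ event contributes $-Cq_0\log n=-C\log n$.
\item The $\Psi$ event contributes $-Cp\log n\ge -Cs_0\log n$.
\end{itemize}
Altogether $\Pi(\mathcal{A}_n)\ge e^{-Cn\bar\varepsilon_n^2}$. Containment $\mathcal{A}_n\subseteq\{\|\Sigma-\Sigma_0\|_F\le C\bar\varepsilon_n\}$ follows exactly as in Step~1(e), and the KL/variation bounds then follow from the Gaussian equivalences of Lemma~8.2 of \citet{ghosal2017}.

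\textbf{Entropy and sieve, conditions (C2) and (C3).} Condition (C3) is trivial as before, since the prior support lies in $\Theta_n$. For (C2) I will repeat Step~2 at scale $\delta\asymp\bar\varepsilon_n$:
\begin{itemize}
\item the support patterns contribute $M s_0\log(ep/s_0)\le Cs_0\log n$;
\item the nonzero entries contribute $Ms_0\log(1+4B_0/\delta)\le Cs_0\log n$, using $\delta\ge 1/(C\sqrt n)$;
\item $\Lambda$ contributes $M\log(C/\delta)\le C\log n$;
\item $\Psi$ contributes $p\log(C\sqrt p/\delta)\le Cs_0\log n$.
\end{itemize}
The Lipschitz transfer of Lemma~\ref{lem:lipschitz} and \eqref{eq:hell-frob} is unchanged. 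Theorem~2.1 of \citet{ghosal2000} then gives Hellinger contraction at rate $\bar\varepsilon_n$, and \eqref{eq:hell-frob} converts this to Frobenius contraction.

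\textbf{Main obstacle.} The delicate point is the $\Psi$ and dimension bookkeeping: $p$ can grow like $s_0$, so the $\Psi$-prior mass bound $-p\log(\sqrt p/\bar\varepsilon_n)$ must stay $O(s_0\log n)$. This requires $\log p\lesssim\log n$, which holds because $p\le CMs_0=o(\sqrt n)$; this is exactly where $s_0=o(\sqrt n)$ enters.

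One further caveat concerns optimality. The theorem itself only asserts the upper bound. To justify calling the rate ``minimax'', I would separately cite a standard lower bound for sparse covariance estimation of order $\sqrt{s_0\log(p/s_0)/n}$. This is not needed to prove the displayed statement, so I would not develop it here.
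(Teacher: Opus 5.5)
Your proposal is correct and is essentially the paper's argument. The paper simply notes that for fixed $M$ the rate of Theorem~\ref{thm:contraction} is $\varepsilon_n=\sqrt{M}\,\sqrt{s_0\log n/n}$ and absorbs $\sqrt M$ into $C$, so your explicit re-verification of (C1)--(C3) is redundant but harmless.

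One small correction concerns where $s_0=o(\sqrt n)$ is used. It is not what delivers $\log p\lesssim\log n$, since that already follows from $p\le CMs_0$ and $s_0\log n/n\to0$. Its real role is to guarantee the hypothesis $Ms_0\log n/n\to0$ of Theorem~\ref{thm:contraction}, equivalently $\bar\varepsilon_n\to0$. Your re-run of Theorem~2.1 and of the ``for $n$ large'' steps needs this too, but you never state it.
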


\begin{proof}
The result follows immediately from Theorem~\ref{thm:contraction} by noting that $M$ is fixed. Indeed, the rate in Theorem~\ref{thm:contraction} is
\[
\varepsilon_n = \sqrt{\frac{M s_0 \log n}{n}} = O\!\left(\sqrt{\frac{s_0 \log n}{n}}\right),
\]
and the fixed constant $M$ is absorbed into the generic constant $C$. Hence the contraction rates for $\Sigma$, $F$, and $\Psi$ are all $\sqrt{s_0 \log n / n}$.
\end{proof}

\subsection{Minimax Optimality}

The rate $\sqrt{s_0 \log n / n}$ is known to be minimax optimal for estimating the covariance matrix $\Sigma = F F^T + \Psi$ in the Frobenius norm, under the sparsity constraints that $F$ has $q_0$ non-zero columns with at most $s_0$ non-zero entries per column. This result is established by \citet{pati2014posterior}, who proved that the minimax rate for sparse factor models is
\[
\inf_{\widehat{\Sigma}} \sup_{\Sigma \in \mathcal{F}_{p,q_0,s_0}} \mathbb{E}\|\widehat{\Sigma} - \Sigma\|_F^2 \asymp \frac{s_0 q_0 \log p}{n},
\]
where \(\mathcal{F}_{p,q_0,s_0}\) denotes the class of covariance matrices \(\Sigma = FF^T + \Psi\) such that \(F \in \mathbb{R}^{p \times q_0}\) has at most $s_0$ non-zero entries per column and the eigenvalues of $\Sigma$ and $\Psi$ are bounded away from zero and infinity.

The above minimax rate translates to a Frobenius norm rate of $\sqrt{s_0 q_0 \log p / n}$. Under our assumptions that $q_0 \le M$ is fixed and $p \le C M s_0$ (so that $\log p = O(\log n)$), this lower bound becomes $\sqrt{s_0 \log n / n}$ up to constants. Since our posterior contraction rate matches this lower bound, it is minimax optimal.

When $M$ is allowed to grow, the rate becomes $\sqrt{M s_0 \log n / n}$, which is slower and reflects the additional cost of searching over a larger dictionary. Thus, fixing $M$ not only simplifies the theoretical analysis but also yields the fastest possible rate for the problem.

% =====================================================================
% SECTION 10
% =====================================================================
\section{Rank Consistency: A Partial Result and an Open Problem}
\label{sec:rank}

Sections~\ref{sec:theory}--\ref{sec:theory-nonidentifiability} establish
posterior contraction of the covariance $\Sigma$ at rate $\varepsilon_n$
and discuss the identifiability structure of the model. This section
addresses the consistency of the posterior on the rank of the signal
component $F\Lambda F^\top$. We prove the underfitting direction (the
posterior does not concentrate on ranks below $q_0$) rigorously under an
additional assumption on the idiosyncratic variance, and state the
overfitting direction (the posterior does not concentrate on ranks above
$q_0$) as a conjecture. The discussion identifies the technical obstacle
to a full proof, which we believe is of independent interest.

\subsection{The Target and the Reduction}
\label{sec:rank-setup}

For a parameter $\theta = (F, \Lambda, \Psi) \in \Theta_n$, define the
effective rank
\begin{equation}
\label{eq:rank-def}
r(\theta) := \rank\bigl(F\Lambda F^\top\bigr) = \rank\bigl(\Sigma(\theta) - \Psi\bigr).
\end{equation}
As observed in Section~\ref{sec:theory-nonidentifiability}, the labelled
parameter $\theta$ is not identified under the exchangeable Dirichlet
process prior, but $r(\theta)$ is a symmetric function of the pairs
$(f_i, \lambda_i)$ and hence is a well-defined target for a consistency
statement.

The underfitting proof below requires the following additional assumption
on the idiosyncratic variance.

\begin{assumption}[Known idiosyncratic variance]
\label{ass:rank-psi}
The idiosyncratic variance is fixed at its true value: $\Psi = \Psi_0$.
\end{assumption}

Assumption~\ref{ass:rank-psi} is a strong restriction. The general model
of Section~\ref{sec:model} places $\Psi$ on the truncated prior support
$[\underline{c}, 2\overline{c}]^p$ as specified in
Assumption~\ref{ass:prior} of Section~\ref{sec:theory}. The role of the
assumption is to make the difference $\Sigma(\theta) - \Sigma_0$ coincide
with the difference of the signal components,
\begin{equation}
\label{eq:sigma-difference}
\Sigma(\theta) - \Sigma_0
= \bigl(F\Lambda F^\top - F_0 F_0^\top\bigr) + (\Psi - \Psi_0)
= F\Lambda F^\top - F_0 F_0^\top,
\end{equation}
which is the key identity that lets the Weyl inequality transfer a rank
deficiency into a Frobenius-norm gap. We discuss the consequences of
relaxing Assumption~\ref{ass:rank-psi} in
Section~\ref{sec:rank-overfit-hard}.

All other aspects of the setup from Section~\ref{sec:theory} remain in
force, including the sieve $\Theta_n$, the prior of
Assumption~\ref{ass:prior}, and the contraction result of
Theorem~\ref{thm:contraction}.

\subsection{Underfitting Consistency}
\label{sec:rank-underfit}

\begin{theorem}[Underfitting consistency]
\label{thm:underfit}
Under Assumptions~\ref{ass:eig}--\ref{ass:prior} of
Section~\ref{sec:theory} and Assumption~\ref{ass:rank-psi}, the posterior
on the underfitting event vanishes:
\begin{equation}
\label{eq:underfit-conclusion}
\Pi\bigl(r(\theta) < q_0 \mid Y_1, \dots, Y_n\bigr)
\;\xrightarrow{P_{\theta_0}^n}\; 0.
\end{equation}
\end{theorem}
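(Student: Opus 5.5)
The plan is to reduce underfitting to the contraction result of Theorem~\ref{thm:contraction}. We show that every parameter with $r(\theta)<q_0$ lies at a fixed positive Frobenius distance from $\Sigma_0$. Since $\varepsilon_n\to 0$, that distance eventually exceeds $C\varepsilon_n$, so the underfitting event sits inside the complement of the contraction ball. Its posterior mass then vanishes in $P_{\theta_0}^n$-probability.

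\emph{Step 1: the Frobenius gap.} By Assumption~\ref{ass:rank-psi} and identity~\eqref{eq:sigma-difference}, $\Sigma(\theta)-\Sigma_0 = F\Lambda F^\top - F_0F_0^\top$. Both matrices are symmetric positive semidefinite, and $\rank(F_0F_0^\top)=q_0$ by Assumption~\ref{ass:sparse}. If $\rank(F\Lambda F^\top)\le q_0-1$, then $\lambda_{q_0}(F\Lambda F^\top)=0$. Weyl's inequality then gives
\[
\lambda_{q_0}(F_0F_0^\top) \le \lambda_{q_0}(F\Lambda F^\top) + \|F\Lambda F^\top - F_0F_0^\top\|_2 ,
\]
and hence
\[
\|\Sigma(\theta)-\Sigma_0\|_F \ge \|\Sigma(\theta)-\Sigma_0\|_2 \ge \lambda_{q_0}(F_0F_0^\top) =: \kappa_0 > 0 .
\]
An Eckart--Young argument gives the slightly stronger bound $\|\Sigma(\theta)-\Sigma_0\|_F \ge (\sum_{j\ge q_0}\lambda_j(F_0F_0^\top)^2)^{1/2}$, but the operator-norm version is enough.

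\emph{Step 2: conclude.} If $\kappa_0$ is a fixed constant, then for $n$ large $C\varepsilon_n<\kappa_0$. This gives
\[
\Pi\bigl(r(\theta)<q_0\mid Y_{1:n}\bigr) \le \Pi\bigl(\|\Sigma-\Sigma_0\|_F > C\varepsilon_n\mid Y_{1:n}\bigr) \to 0
\]
by Theorem~\ref{thm:contraction}. Fixing $\Psi=\Psi_0$ only removes a factor from the prior. Steps 1(a)--(c), 2 and 3 of the contraction proof go through verbatim, with Step 1(d) dropped, so Theorem~\ref{thm:contraction} still applies under Assumption~\ref{ass:rank-psi}.

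\emph{Main obstacle: the size of $\kappa_0$.} In the growing-$M$ regime, Assumption~\ref{ass:calibration} forces $\|f_{0,j}\|_2^2\le B_0^2/M$. So $\kappa_0$ may shrink like $1/M \gtrsim 1/\log n$, and Assumption~\ref{ass:eig} gives only upper control on $F_0F_0^\top$. The argument therefore needs $\lambda_{q_0}(F_0F_0^\top)\gg \varepsilon_n$. I would first check whether this follows from the standing assumptions: $\varepsilon_n = \sqrt{Ms_0\log n/n}$ versus $1/M$ requires $M^3 s_0\log n/n\to 0$, which is implied when $M\le C\log n$ and $s_0\log^4 n = o(n)$. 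If it does not follow, I would state the result under an explicit beta-min-type condition $\lambda_{q_0}(F_0F_0^\top)/\varepsilon_n\to\infty$. In the fixed-$M$ setting of Section~\ref{sec:fixedM} this condition holds automatically, since $\kappa_0$ is then a fixed positive constant.
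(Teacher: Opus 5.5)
Your argument is the paper's: Weyl's inequality at the $q_0$-th eigenvalue converts $\rank(F\Lambda F^\top)<q_0$ into the gap $\|\Sigma(\theta)-\Sigma_0\|_F\ge\sigma_0:=\sigma_{q_0}(F_0F_0^\top)$, and Theorem~\ref{thm:contraction} then removes the event. The paper applies the Frobenius form of Weyl directly, whereas you pass through the operator norm. Your remark that contraction must be re-verified for the model with $\Psi=\Psi_0$ fixed fills a step the paper takes silently, since a point mass on $\Psi$ is not covered by Assumption~\ref{ass:prior}. Dropping Step~1(d) suffices, and everything else only gets easier. Your Eckart--Young bound is not actually stronger: since $\rank(F_0F_0^\top)=q_0$, the tail sum reduces to $\sigma_0^2$.

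The obstacle you flag is genuine, and the paper does not resolve it either. It simply asserts that $\sigma_0$ is ``a fixed positive constant depending only on $\Sigma_0$''. But in the framework of Theorem~\ref{thm:contraction} the truth changes with $n$, and Assumption~\ref{ass:calibration} gives $\sigma_0\le\mathrm{tr}(F_0F_0^\top)/q_0\le(1-\delta_0)^2B_0^2/M$, which tends to $0$ whenever $M$ grows.

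The check you propose comes out negative. Your computation leading to $M^3s_0\log n/n\to0$ treats the best case $\sigma_0\asymp 1/M$, but $1/M$ is only an upper bound. Nothing in Assumptions~\ref{ass:eig}--\ref{ass:prior} bounds $\sigma_{q_0}(F_0F_0^\top)$ from below: rank $q_0$ is compatible with nearly collinear true columns, or with a true column satisfying $\|f_{0,j}\|_2^2\le C\varepsilon_n$. In the latter case, deleting that column (with $\Lambda=I_M$, $\Psi=\Psi_0$) gives a configuration in the sieve of rank $q_0-1$ and within Frobenius distance $C\varepsilon_n$ of $\Sigma_0$, which contraction cannot exclude.

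So the separation condition $\sigma_{q_0}(F_0F_0^\top)/\varepsilon_n\to\infty$ must be imposed as an explicit hypothesis; it is exactly what the paper's argument uses without saying so. Nor is it automatic in the fixed-$M$ setting: $s_0=o(\sqrt n)$ lets $s_0$, $p$ and hence $F_0$ vary with $n$. There $\sigma_0$ is a fixed constant only if the whole truth is held fixed.
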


\begin{proof}
Fix $\theta = (F, \Lambda, \Psi_0) \in \Theta_n$ with $r(\theta) < q_0$,
i.e., $\rank(F\Lambda F^\top) < q_0$. Since $\Lambda$ has strictly
positive diagonal entries bounded below by $\underline{\lambda} > 0$ on
the sieve, we have
\begin{equation}
\label{eq:rank-Lambda}
\rank(F\Lambda F^\top) = \rank(F) < q_0.
\end{equation}

Let $\sigma_{q_0}(A)$ denote the $q_0$-th largest singular value of a
matrix $A$. Since $\rank(F_0 F_0^\top) = q_0$ by
Assumption~\ref{ass:sparse}, the smallest positive singular value of
$F_0 F_0^\top$ is
\begin{equation}
\label{eq:sigma-0}
\sigma_0 := \sigma_{q_0}\bigl(F_0 F_0^\top\bigr) \;>\; 0.
\end{equation}
Since $\rank(F\Lambda F^\top) = \rank(F) < q_0$, we have
$\sigma_{q_0}(F\Lambda F^\top) = 0$.

By Weyl's inequality for singular values, for any two $p \times p$
matrices $A$ and $B$,
\begin{equation}
\label{eq:weyl}
\bigl|\sigma_{q_0}(A) - \sigma_{q_0}(B)\bigr| \;\le\; \|A - B\|_F.
\end{equation}
Applying this to $A = F\Lambda F^\top$ and $B = F_0 F_0^\top$ and using
the two evaluations above,
\begin{equation}
\label{eq:underfit-gap}
\bigl\|F\Lambda F^\top - F_0 F_0^\top\bigr\|_F
\;\ge\; \sigma_{q_0}\bigl(F_0 F_0^\top\bigr) - \sigma_{q_0}\bigl(F\Lambda F^\top\bigr)
\;=\; \sigma_0.
\end{equation}
By Assumption~\ref{ass:rank-psi}, $\Psi = \Psi_0$, and by
\eqref{eq:sigma-difference} we have
$\|\Sigma(\theta) - \Sigma_0\|_F = \|F\Lambda F^\top - F_0 F_0^\top\|_F$.
Therefore
\begin{equation}
\label{eq:underfit-inclusion}
\{r(\theta) < q_0\}
\;\subseteq\;
\bigl\{\|\Sigma(\theta) - \Sigma_0\|_F \ge \sigma_0\bigr\}.
\end{equation}
Since $\sigma_0$ is a fixed positive constant depending only on
$\Sigma_0$, and since $\varepsilon_n \to 0$ by
Theorem~\ref{thm:contraction}, we have $\sigma_0 > C\varepsilon_n$ for
all sufficiently large $n$. Hence
\begin{equation}
\label{eq:underfit-final}
\Pi\bigl(r(\theta) < q_0 \mid Y_1, \dots, Y_n\bigr)
\;\le\;
\Pi\bigl(\|\Sigma(\theta) - \Sigma_0\|_F \ge \sigma_0 \mid Y_1, \dots, Y_n\bigr)
\;\xrightarrow{P_{\theta_0}^n}\; 0
\end{equation}
by Theorem~\ref{thm:contraction}. This completes the proof.
\end{proof}

\begin{remark}
\label{rem:underfit-role-psi}
Assumption~\ref{ass:rank-psi} is used in the last step, where it lets us
identify $\|\Sigma(\theta) - \Sigma_0\|_F$ with
$\|F\Lambda F^\top - F_0 F_0^\top\|_F$. Without it, we would have only
$\|\Sigma(\theta) - \Sigma_0\|_F \ge \|F\Lambda F^\top - F_0 F_0^\top\|_F
- \|\Psi - \Psi_0\|_F$, and the second term can be as large as
$2\overline{c}\sqrt{p}$ on the sieve, which dominates the first term.
\end{remark}

\subsection{Overfitting: A Conjecture}
\label{sec:rank-overfit}

The complementary direction --- that the posterior does not concentrate
on ranks exceeding $q_0$ --- is substantially harder. We state it as a
conjecture.

\begin{conjecture}[Overfitting consistency]
\label{conj:overfit}
Under the assumptions of Theorem~\ref{thm:underfit},
\begin{equation}
\label{eq:overfit-conjecture}
\Pi\bigl(r(\theta) > q_0 \mid Y_1, \dots, Y_n\bigr)
\;\xrightarrow{P_{\theta_0}^n}\; 0.
\end{equation}
\end{conjecture}

\subsection{On the Difficulty of Proving the Conjecture}
\label{sec:rank-overfit-hard}

Throughout this subsection we work in the overfitting regime $M > q_0$,
so that $F_0$ has at least one zero column and the overfitting event
$\{r(\theta) > q_0\}$ is non-empty on the sieve. We also assume $q_0 < p$
(so that $\operatorname{span}(F_0)$ is a proper subspace of $\R^p$),
which holds in every factor model where $p$ exceeds $q_0$ and in
particular in the asymptotic regime of Section~\ref{sec:theory}.

The natural framework for establishing Conjecture~\ref{conj:overfit} is
the general Bayes-factor convergence theory of
\citet{chatterjee2020bf}. Their main result (their Theorem~2) is the most
general almost-sure convergence result currently available for Bayes
factors, and it reduces the exponential rate of the Bayes factor to the
positivity of the essential infimum of the Kullback--Leibler divergence
over the alternative model. Four features make this framework the natural
choice for our setup. First, it delivers convergence almost surely rather
than merely in probability, which is the strongest form of Bayes-factor
consistency available in the literature and what one would naturally want
to claim for the posterior odds on the rank. Second, it is stated for
arbitrary priors on arbitrary parameter spaces, subject only to the
domination and factorisation requirements of \citet{shalizi2009}; this
covers both the Dirichlet process clustering of the columns of $F$ and
the spike-and-slab base measure, because the marginal prior on the finite
sequence of columns is dominated by the EPPF on the countable collection
of set partitions and the likelihood is a continuous function of the
sieved parameter. Third, it reduces the Bayes-factor analysis to a single
quantity, the essential infimum of the KL divergence over the alternative
model, abstracting away the technical difficulties of integrating over
the Dirichlet process partition and the score variances. Fourth, it
generalises earlier results of \citet{walker2004} and
\citet{walker2004kullback} beyond the independent-and-identically-distributed
setting and is therefore the most complete tool available at the time of
writing.

Under this framework, exponential decay of the Bayes factor comparing
the overfitted model $\mathcal{M}_1$ against the true model
$\mathcal{M}_0$ follows from the strict positivity of
\begin{equation}
\label{eq:h-positive}
h_1(\Theta_1)
:= \operatorname*{ess\,inf}_{\theta \in \Theta_1}
D_{\mathrm{KL}}\!\left(\N_p(0, \Sigma_0) \,\big\|\, \N_p(0, \Sigma(\theta))\right),
\end{equation}
where the overfitted parameter set is the intersection of the sieve with
the overfitting event,
\begin{equation}
\label{eq:Theta-1}
\Theta_1 := \Theta_n \cap \{\Lambda = I_M\} \cap \{\theta : r(\theta) > q_0\},
\end{equation}
and the essential infimum is taken with respect to the prior restricted
to $\Theta_1$. The restriction to $\Theta_n$ and to the hyperplane
$\Lambda = I_M$ reflects the fact that the prior is supported on the
sieve and that, in the reduced model of Assumption~\ref{ass:rank-psi},
the score-variance matrix is also fixed. The obstruction to applying the
framework is that $h_1(\Theta_1) = 0$ regardless of whether $\Psi$ is
free or fixed at $\Psi_0$, because $\Theta_1$ contains configurations
that reproduce $\Sigma_0$ exactly or approximately. We now present two
counterexamples, both with $\Lambda = I_M$ fixed, which together cover
the two cases of $\Psi$ free and $\Psi = \Psi_0$ fixed.

\paragraph{Counterexample 1: $\Lambda = I_M$ and $\Psi$ free.}
Suppose $\Lambda = I_M$ is fixed but $\Psi$ is free on the truncated
prior support $[\underline{c}, 2\overline{c}]^p$.
%as in the general model of Section~\ref{sec:model}.
By Remark~\ref{rem:truncation}, the true $\Psi_0$ lies in the interior
of the truncated prior support, so $\psi_{0,r} > \underline{c}$ for
every $r \in \{1, \dots, p\}$. Choose an index $r$ such that the
standard basis vector $e_r$ is not in the column space of $F_0$; such
an index exists because $q_0 < p$, so $\operatorname{span}(F_0)$ has
dimension $q_0$ and can contain at most $q_0$ of the $p$ standard
basis vectors. Fix a constant $a > 0$ small enough that
$a^2 \in (0, \psi_{0,r} - \underline{c})$ and $a \le B_0 \tau_0$, so
that the diagonal matrix $\Psi := \Psi_0 - a^2 e_r e_r^\top$ lies in the
prior support $[\underline{c}, 2\overline{c}]^p$. Since $M > q_0$,
the true loading matrix $F_0$ has at least one zero column; enumerate
its non-zero columns as $f_{0,1}, \dots, f_{0,q_0}$ (the first $q_0$
coordinates) and let its remaining $M - q_0$ columns be zero. Define
$F \in \R^{p \times M}$ by
\begin{equation}
\label{eq:counterexample-1}
F_{:,j} = f_{0,j} \ (1 \le j \le q_0), \qquad
F_{:,q_0 + 1} = a\,e_r, \qquad
F_{:,j} = 0 \ (q_0 + 2 \le j \le M).
\end{equation}
That is, $F$ is obtained from $F_0$ by replacing one of its zero columns
with the vector $a e_r$. Then
\begin{equation}
\label{eq:counterexample-1-result}
\Sigma(\theta)
= FF^\top + \Psi
= F_0 F_0^\top + a^2 e_r e_r^\top + \Psi_0 - a^2 e_r e_r^\top
= \Sigma_0,
\end{equation}
so the configuration $\theta$ reproduces the truth exactly. Its rank is
$r(\theta) = q_0 + 1 > q_0$ because $e_r$ lies outside the column space
of $F_0$ and therefore the column space of $F$ has dimension $q_0 + 1$.
The configuration lies in the interior of the sieve $\Theta_n$: by
construction $\|F\|_F \le B_0$ for $a$ sufficiently small, every column
of $F$ has at most $s_0$ non-zero entries (the new column $a e_r$ has
one), $\Lambda = I_M \in [\underline{\lambda}, \overline{\lambda}]^M$,
and $\Psi$ is diagonal with entries in $[\underline{c}, 2\overline{c}]$.
Moreover $\theta$ lies in the interior of the support of the prior, since
each of its coordinates is bounded away from the boundary of the sieve
and the prior has positive density on the sieve interior. Therefore
$h_1(\Theta_1) = 0$, and the framework of \citet{chatterjee2020bf} gives
only the trivial limit $(1/n) \log B_n \to 0$.

\paragraph{Counterexample 2: $\Lambda = I_M$ and $\Psi = \Psi_0$ fixed.}
Suppose now that both $\Lambda = I_M$ and $\Psi = \Psi_0$ are fixed.
%as in the reduced model of Assumption~\ref{ass:rank-psi}.
Choose an index $i^* \in \{1, \dots, p\}$ such that $e_{i^*}$ is not in
the column space of $F_0$; such an index exists because $q_0 < p$, so
$\operatorname{span}(F_0)$ has dimension $q_0$ and can contain at most
$q_0$ of the $p$ standard basis vectors. Since $M > q_0$, the true
loading matrix $F_0$ has at least one zero column; enumerate its
non-zero columns as $f_{0,1}, \dots, f_{0,q_0}$ (the first $q_0$
coordinates) and let its remaining $M - q_0$ columns be zero. For
$\epsilon > 0$ small, define $F \in \R^{p \times M}$ by
\begin{equation}
\label{eq:counterexample-2}
F_{:,j} = f_{0,j} \ (1 \le j \le q_0), \qquad
F_{:,q_0 + 1} = \epsilon\,e_{i^*}, \qquad
F_{:,j} = 0 \ (q_0 + 2 \le j \le M).
\end{equation}
That is, $F$ is obtained from $F_0$ by replacing one of its zero columns
with the vector $\epsilon e_{i^*}$. With $\Lambda = I_M$ and
$\Psi = \Psi_0$,
\begin{equation}
\label{eq:counterexample-2-result}
F\Lambda F^\top
= F_0 F_0^\top + \epsilon^2 e_{i^*} e_{i^*}^\top,
\end{equation}
so $\Sigma(\theta) = \Sigma_0 + \epsilon^2 e_{i^*} e_{i^*}^\top$. Its
rank is $q_0 + 1 > q_0$ because $e_{i^*}$ lies outside the column space
of $F_0$ and therefore the column space of $F$ has dimension $q_0 + 1$.
Each column of $F$ has at most $s_0$ non-zero entries (the new column
$\epsilon e_{i^*}$ has one). For $\epsilon$ sufficiently small,
$\|F\|_F \le B_0$ and every entry of $F$ lies in the slab truncation
region, so the configuration lies in the interior of the sieve and of
the prior support. Writing $u := \epsilon^2\,(\Sigma_0^{-1})_{i^* i^*}$,
the KL divergence from the truth is
\[
D_{\mathrm{KL}}\!\left(\N_p(0, \Sigma(\theta)) \,\big\|\, \N_p(0, \Sigma_0)\right)
= \frac{1}{2}\Bigl[u - \log(1 + u)\Bigr]
= O(u^2) = O(\epsilon^4),
\]
using the matrix-determinant lemma; the reverse-direction divergence
$D_{\mathrm{KL}}(\N_p(0, \Sigma_0) \| \N_p(0, \Sigma(\theta)))$ satisfies
the same $O(\epsilon^4)$ bound. Letting $\epsilon \to 0$, we obtain
$h_1(\Theta_1) = 0$.

The two counterexamples exploit distinct mechanisms. The first uses the
diagonal adjustment of $\Psi$ to fully absorb a rank-one perturbation
from an extra column, and applies only when $\Psi$ is free on the sieve.
The second uses the addition of a small column in a direction outside
$\operatorname{span}(F_0)$, and applies even when both $\Lambda = I_M$
and $\Psi = \Psi_0$ are fixed. Together, they show that
$h_1(\Theta_1) = 0$ in the general model, and that the second mechanism
survives the reduction of Assumption~\ref{ass:rank-psi}. The essential
infimum is therefore zero in both the general and the reduced settings,
and the framework of \citet{chatterjee2020bf} cannot be applied without
further modification of the prior.

\paragraph{Attempts to restore KL separation via a shell prior.}
One natural way to try to restore KL separation is to replace the slab
component of the base measure with a shell that excludes a neighbourhood
of the origin. Consider a shell of the form
\begin{equation}
\label{eq:shell-def}
\mathcal{S}_{\delta_{\mathrm{sh}}}
:= \bigl\{x \in \R : K(1 - \delta_{\mathrm{sh}})\tau_0
\le |x| \le K(1 + \delta_{\mathrm{sh}})\tau_0\bigr\},
\end{equation}
where $\tau_0^2 = 1/(Ms_0)$ is the natural slab scale of the prior,
$K > 0$ is a fixed scale parameter, and
$\delta_{\mathrm{sh}} \in (0,1)$ is the shell width parameter. Any
non-zero entry $f_{i,r}$ of any atom drawn from this shell satisfies
$|f_{i,r}|/\tau_0 \in [K(1 - \delta_{\mathrm{sh}}), K(1 + \delta_{\mathrm{sh}})]$,
and we assume the truth's non-zero entries also lie in this shell.

A shell of this form excludes the perturbation of Counterexample~2
whenever its lower endpoint is bounded away from zero, since
$\epsilon \to 0$ places the new column at the origin and the shell
excludes a neighbourhood of the origin. This motivates the shell as a
mechanism for restoring a positive essential infimum. The shell does
not, however, exclude every overfitting mechanism that could in
principle give $h_1(\Theta_1) = 0$: one can consider a multi-column
construction in which the additional columns are not aligned with any
existing column of $F_0$ but sit at the shell scale, arranged so that
their contributions to $\Sigma$ cancel to first order, or a split of an
existing column into two pieces that each lie at the shell scale but
whose outer-product sum differs from $f_{0,1} f_{0,1}^\top$ by a small
amount. Whether such constructions can be excluded by the shell alone,
or whether an additional geometric constraint on the columns of $F$
(for example, a lower bound on the angle between any two distinct
columns) is necessary, is a question that we have not been able to
settle. We therefore present the shell as a natural first step towards
restoring KL separation, whose sufficiency for
Conjecture~\ref{conj:overfit} is left as an open problem.

\paragraph{Alternative: Savage--Dickey analysis.}
A second approach is to abandon the exponential-rate framework of
\citet{chatterjee2020bf} and instead use a Savage--Dickey analysis
\citep{dickey1970,dickey1971,verdinelli1995} that accounts for the fact
that the overfitted class is nested inside the true class at the level of
the covariance matrix. In the Savage--Dickey framework, the Bayes factor
for a nested comparison scales polynomially rather than exponentially in
$n$, with an exponent determined by the effective number of additional
parameters. In the present setting, adding one non-zero column with $s_0$
support introduces $s_0$ coordinates of $f$ plus one score variance,
giving $s_0 + 1$ effective parameters, and one would expect the Bayes
factor to scale as $n^{-(s_0 + 1)/2}$ times a constant depending on the
prior. Combined with the EPPF prior odds, %$\alpha^m/m!$,
%from Section~\ref{sec:rank-odds},
this would give posterior odds for
overfitting of order $(\log n)^{s_0} n^{-(s_0 + 1)/2}$ under the sieve
constraints, which vanishes polynomially in $n$. Making this argument
rigorous requires a careful analysis of the local geometry of the factor
manifold near the nested point and a verification that the Laplace
expansion underlying the Savage--Dickey formula is valid under the
spike-and-slab prior. Both are substantial technical undertakings that
we leave as an open problem.

\subsection{Discussion}
\label{sec:rank-discussion}

The state of the theory on rank consistency can be summarised as follows.
The underfitting direction is rigorously established by
Theorem~\ref{thm:underfit}. Under the assumption that the idiosyncratic
variance is fixed at its true value $\Psi_0$, the posterior does not
concentrate on ranks smaller than $q_0$. The proof proceeds by a single
application of Weyl's inequality: any parameter whose effective rank is
strictly less than $q_0$ produces a signal component $F\Lambda F^\top$
whose $q_0$-th singular value vanishes, while the corresponding singular
value of the truth is bounded below by a positive constant, so the
Frobenius distance between the resulting covariance and the truth is
bounded below by that constant. The posterior on this event vanishes by
the contraction result of Theorem~\ref{thm:contraction}, provided $\Psi$
is fixed so that $\Sigma(\theta) - \Sigma_0$ coincides with the
difference of the signal components. This is a genuine theoretical
contribution that complements the empirical evidence of
Sections~\ref{sec:simulation} and~\ref{sec:breast}.

The overfitting direction is stated as Conjecture~\ref{conj:overfit}. It
is strongly supported by the simulation experiments of
Section~\ref{sec:simulation} and the breast cancer analysis of
Section~\ref{sec:breast}, in which the posterior concentrated on the
correct rank in every run, but it is not proved. The natural proof
strategy via the general Bayes-factor convergence theory of
\citet{chatterjee2020bf} requires the essential infimum of the
Kullback--Leibler divergence over the overfitted parameter set to be
strictly positive. This condition is not satisfied in the general model,
because the overfitted class contains configurations that reproduce the
true covariance exactly or approximately. Two such configurations, both
with $\Lambda = I_M$ fixed and both lying in the interior of the sieve
and of the prior support, are exhibited in
Section~\ref{sec:rank-overfit-hard}. The first exploits the diagonal
adjustment of a free $\Psi$ to absorb a rank-one perturbation from an
extra column, and applies only when $\Psi$ is free on the sieve. The
second exploits the addition of a small column in a direction outside
$\operatorname{span}(F_0)$, and applies even when $\Psi = \Psi_0$ is
fixed. Both mechanisms therefore render the essential infimum zero in
the general model, and the second mechanism survives the reduction of
Assumption~\ref{ass:rank-psi} in which $\Psi$ is fixed. This means that
the reduced model of Theorem~\ref{thm:underfit} does not itself admit
the KL-separation property required by the framework of
\citet{chatterjee2020bf}, and closing the gap for the overfitting
direction requires either a rigorous Savage--Dickey analysis of the
nested comparison, which would give a polynomial rather than an
exponential rate, or a modification of the prior that excludes the
rank-one perturbation without destroying the flexibility of the
construction. The shell prior introduced in
Section~\ref{sec:rank-overfit-hard} is a natural first step in the second
direction; whether it excludes every overfitting mechanism that could
give a vanishing essential infimum is a question that we have not been
able to settle.

It is worth emphasising that the obstruction is structural and is not
specific to the particular prior used in this paper. Any Bayesian factor
model with a continuous prior on the columns of the loading matrix will
contain configurations that reproduce the true covariance exactly or
approximately, because the overfitted and true factorisations of a given
covariance matrix are nested at the level of the covariance itself, and
in particular because any column can be augmented by a small component
in a direction orthogonal to $\operatorname{span}(F_0)$, or an existing
column can be split into two linearly independent pieces whose outer
products sum to the original. The empirical evidence for correct rank
recovery is nevertheless strong and consistent, and it suggests that the
DP-spike-and-slab prior, while not Kullback--Leibler-separated from the
truth in the strict sense, still assigns sufficiently small prior mass
to the overfitting configurations that the posterior odds favour the
true rank. A rigorous quantification of this observation remains an
open problem and is, in our view, one of the more interesting
theoretical questions raised by the present work.

%\subsection{Concluding Remarks on the Fixed Dictionary Setting}

%When $M$ is fixed and $s_0 = o(\sqrt{n})$, the model achieves two important theoretical guarantees. First, the posterior contraction rate for $\Sigma$, $F$, and $\Psi$ is $\sqrt{s_0 \log n / n}$, which is minimax optimal for sparse factor models \citep{pati2014posterior}. Second, the posterior consistently estimates the true number of factors $q_0$. The additional condition $s_0 = o(\sqrt{n})$ is essential: if $s_0$ grows as fast as $\sqrt{n}$, the prior penalty for spurious columns is not strong enough to overcome the likelihood of vanishing perturbations, and the posterior may retain spurious factors. This is a genuine limitation of the current prior when the sparsity level is not sufficiently small relative to the sample size. For practitioners who are willing to fix the dictionary size and assume a sufficiently sparse true loading matrix, the proposed model offers both optimal estimation and consistent factor selection.

\section{Discussion}
\label{sec:discussion}

We have presented a Bayesian nonparametric factor model that combines two complementary mechanisms: a fully marginalized Dirichlet process prior that clusters the columns of the loading matrix, and a spike-and-slab base measure that induces local sparsity and permits entire factors to be deleted. The P\'olya urn formulation avoids stick-breaking representations, truncations, and auxiliary weight variables, and the resulting sampler is exact. The model occupies a distinct position in the existing literature. As set out in Section~\ref{sec:related}, the multiplicative gamma process of \citet{bhattacharya2011sparse} shrinks columns continuously as a function of their index and therefore produces no exact zeros; the cumulative shrinkage process of \citet{legramanti2020} does produce exact zeros but does so through entry-wise inclusion probabilities that accumulate with column index, and remains non-exchangeable over columns; the beta process of \citet{paisley2009} treats feature allocation per observation rather than the columns of a loading matrix; and entry-wise spike-and-slab priors treat the columns independently. The proposed model is, to our knowledge, the first construction that simultaneously attains exact zeros, exchangeability over candidate columns, and exact merging of redundant columns within a single marginalised Dirichlet process.

\subsection*{Empirical findings}

The two simulation studies of Section~\ref{sec:simulation} provide direct empirical support for the mechanism that the theory identifies as essential. Across a moderate-dimensional setting with $q_0 = 5$ and a higher-dimensional, noisier setting with $q_0 = 10$, the proposed method is the only fully adaptive method that recovers the true rank in both configurations. The multiplicative gamma process over-estimates the rank in both studies, at $\hat q = 13$ and $\hat q = 22$ respectively, which is the empirical signature of continuous column-wise shrinkage: because no column is ever driven exactly to zero, the adaptive Gibbs sampler stops pruning once every surviving column retains at least one loading above its numerical threshold, and the rank estimate is therefore governed by that threshold rather than by the posterior. The cumulative shrinkage process under-estimates the rank in Study~A at $\hat q = 4$ and recovers it in Study~B, which is consistent with an ordering-based prior being well matched to the latter configuration and poorly matched to the former. On all three estimation metrics---Frobenius error of the posterior-mean covariance, Procrustes-aligned loading error, and test-set signal RMSE---the ranking is identical in both studies: the proposed method is best, followed by the oracle fixed-$q$ spike-and-slab baseline, then MGP, then CUSP. That the proposed method outperforms a baseline that is given the true rank in advance is consistent with the exchangeability of the DP prior acting as a mild regulariser on the loading estimates.

The breast cancer analysis of Section~\ref{sec:breast} extends these findings to real data with $n = 97$ and $p = 1213$. Without supervision, the posterior over the number of factors concentrates at $K = 8$ with a degenerate $95\%$ credible interval, and the eight recovered atoms correspond to macrophage, vascular, nuclear, EMT/stroma, T-cell, ER/luminal, JAK/STAT, and proliferation programmes. Hallmark gene-set enrichment confirms the EMT/stroma and proliferation programmes after Bonferroni correction, with coherent T-cell enrichment at the uncorrected level, and post-hoc associations with the van 't Veer prognosis label concentrate on the ER/luminal, proliferation, T-cell, macrophage, and JAK/STAT atoms while the vascular, EMT/stroma, and nuclear atoms show no detectable association. 
Four independent lines of evidence---top-loading gene identity,
within-atom coherence in the observed gene--gene correlation matrix,
external Hallmark enrichment, and association with an outcome variable
withheld from the model---support the interpretation of the recovered
atoms. Seven of the eight atoms pass the within-atom coherence check,
two of them (EMT/stroma and proliferation) also pass Bonferroni-corrected
Hallmark enrichment, and the prognosis associations concentrate on the
ER/luminal, proliferation, T-cell, macrophage, and JAK/STAT atoms. The
JAK/STAT atom is the weakest of the eight: it passes only the
top-loading gene identity and the prognosis association, failing both
the coherence and the enrichment checks, which is consistent across two
independent analyses and indicates that its biological interpretation
should be held more tentatively than that of the other seven.
The diagnostic of Section~\ref{sec:breast-diagnostic} further shows that the concentration at $K = 8$ reflects a genuinely concentrated posterior over partitions rather than a frozen update: every candidate column is committed to its assignment by a wide margin, with a bimodal split gap and an empty intermediate regime.

\subsection*{Theoretical findings}

The contraction result of Theorem~\ref{thm:contraction} establishes that the posterior concentrates on the true covariance matrix at rate $\sqrt{M s_0 \log n / n}$, and 
Theorem~\ref{cor:fixedM} shows that when the dictionary size $M$ is held fixed the rate improves to the minimax optimal $\sqrt{s_0 \log n / n}$. Underfitting rank consistency for the number of factors is established by Theorem~\ref{thm:underfit} of Section~\ref{sec:rank}, while the complementary overfitting direction is stated there as Conjecture~\ref{conj:overfit}. Two aspects of the analysis deserve emphasis.

First, the spike component of the base measure is not an interpretative convenience but a theoretical necessity for optimality. Without it, the prior would have full support on $\R^{p \times M}$, the sieve could not be restricted to sparse matrices without violating the sieve complement condition, and the metric entropy would scale as $pM \log n$ rather than $M s_0 \log n$. The resulting rate would be slower by a factor of $\sqrt{p / s_0}$, which is substantial whenever the true loading matrix is sparse. The spike is what allows the prior to place mass on sparse matrices and thereby to reduce the effective dimension of the parameter space from $pM$ to $M s_0$.

Second, the role of the Dirichlet process in the theory is more
circumscribed than its role in the methodology. In the prior
concentration step of Theorem~\ref{thm:contraction}, the exchangeable
partition probability function \eqref{eq:eppf} supplies the lower bound
on the prior mass of a neighbourhood of the truth, and this bound is
essential to the contraction result: it is the EPPF that converts the
Dirichlet process prior on the columns into a usable prior-mass estimate
over the partitions, and without it the prior concentration condition
(C1) would not be verifiable at the rate $\varepsilon_n$. In the
underfitting argument of Theorem~\ref{thm:underfit}, however, the
Dirichlet process does not enter at all. The proof is a direct
application of Weyl's inequality together with the contraction result of
Theorem~\ref{thm:contraction}: any parameter whose effective rank is
strictly less than $q_0$ produces a signal component whose $q_0$-th
singular value vanishes, while the corresponding singular value of the
truth is bounded below by a positive constant, so the Frobenius distance
between the resulting covariance and the truth is bounded below by that
constant. The posterior mass on this event then vanishes by contraction.

The complementary overfitting direction, stated as
Conjecture~\ref{conj:overfit}, is the place where a Bayes-factor
comparison would be natural, and it is precisely there that the
Dirichlet process prior does not by itself deliver the required
separation. As shown in Section~\ref{sec:rank-overfit-hard}, the
essential infimum of the Kullback--Leibler divergence over the
overfitted parameter set is zero, because the overfitted class contains
configurations that reproduce the true covariance exactly (by absorbing
a rank-one perturbation into a free $\Psi$) or approximately (by adding
a small column in a direction outside $\operatorname{span}(F_0)$, which
survives even when $\Psi = \Psi_0$ is fixed). The standard Bayes-factor
framework of \citet{chatterjee2020bf} therefore cannot be applied without
a modification of the prior, and this obstruction is structural rather
than specific to our construction. The division of labour is thus
sharper than a naive Bayes-factor heuristic would suggest: the DP is
indispensable for the construction of the prior and for the prior
concentration bound of Theorem~\ref{thm:contraction}, but it does not by
itself resolve the overfitting question that Conjecture~\ref{conj:overfit}
poses.

\subsection*{Limitations}

Several limitations and one practical consideration emerged from the analysis, and we state them plainly rather than relegating them to caveats.

The first is identifiability of the partition. The specific assignment of the $M$ candidate columns to the $K$ occupied clusters is not determined by the data, because the columns are exchangeable under the DP prior and the likelihood depends on $\Sigma$ rather than on the partition itself. This is visible in the breast cancer analysis: independent chains recover the same eight programmes but distribute the thirty candidate columns among them differently, and the leading programme may be represented by a single column in one chain and by a large shared cluster in another. The theoretical guarantees are stated in terms of $\Sigma$ and $q = \rank(\Sigma - \Psi)$, which are the identifiable quantities, and this is the appropriate level at which to interpret the results. Practitioners should read the atom-level decomposition of the posterior-mean loading matrix as a summary of the posterior of $\Sigma$ rather than as a claim about which specific dictionary elements are redundant.

The second is the calibration of the slab scale. On standardised data the global slab variance $\tau_0^2$ over-estimates the marginal variances by roughly a factor of two, because the prior expected contribution of the loading term to the marginal variance is $M(1-\pi_0)\tau_0^2 \E[\lambda]$ before any likelihood information, and at the operating scale of the simulation studies this exceeds what a standardised dataset can support. The over-estimation deflates the implied correlations by approximately the same factor while preserving their pattern. The qualitative conclusions---the number of programmes, their biological identities, their enrichment, and their prognosis associations---depend on the pattern rather than the amplitude of the covariance and are therefore robust, but the amplitude would be corrected by a global-local prior on the individual loadings, which would allow a few large loadings per atom and many small ones. This is a natural extension and we regard it as the most immediately useful methodological follow-up.

The third limitation concerns the gap between what is proved and what is
conjectured about rank consistency. 
The underfitting direction is rigorously established by
Theorem~\ref{thm:underfit}, which requires only the contraction setup of
Section~\ref{sec:theory} together with the fixed-idiosyncratic-variance
assumption \ref{ass:rank-psi}, and which imposes no condition on $s_0$
beyond the rate requirement $Ms_0\log n / n \to 0$ of
Theorem~\ref{thm:contraction}. The complementary overfitting direction is
stated as Conjecture~\ref{conj:overfit}. It is strongly supported by the
simulation studies of Section~\ref{sec:simulation} and the breast cancer
analysis of Section~\ref{sec:breast}, but it is not proved. As shown in
Section~\ref{sec:rank-overfit-hard}, the natural Bayes-factor route via
the framework of \citet{chatterjee2020bf} fails because the essential
infimum of the Kullback--Leibler divergence over the overfitted parameter
set is zero: the overfitted class contains configurations that reproduce
$\Sigma_0$ exactly, by absorbing a rank-one perturbation into a free
$\Psi$, or approximately, by adding a small column in a direction outside
$\operatorname{span}(F_0)$ — the second mechanism surviving even when
$\Psi = \Psi_0$ is fixed. This is a structural obstruction, not a matter
of the sparsity level being too large relative to the sample size, and it
is shared by any Bayesian factor model with a continuous prior on the
columns of the loading matrix. Whether a modification of the prior, such
as the shell prior discussed in Section~\ref{sec:rank-overfit-hard}, can
restore the required KL separation without destroying the flexibility of
the construction is an open question that we have not been able to settle yet.

A fourth, more prosaic, point is computational. The C/MPI implementation is slower than the single-threaded R competitors in the moderate-dimensional study, because the sequential sweep over candidate columns and the canonical relabeling are not parallelisable and the DP bookkeeping is pure overhead at $p = 50$. The ordering reverses at $p = 100$, where the approximately linear scaling of the parallel C sampler outweighs the super-linear cost of the R implementations, but the crossover point is not favourable in absolute terms: the method is competitive rather than dominant on the dimensions considered here. The per-iteration cost is $O(npM + pM^2)$, and the $pM^2$ term becomes the binding constraint when $M$ is chosen generously relative to $q_0$.

\subsection*{Future work}

Several directions follow naturally from the above. The most directly motivated is the replacement of the global slab with a global-local prior on the individual loadings, which would address the over-dispersion of the marginal variances without disturbing the spike, and which would leave the clustering mechanism untouched. A second direction is a variational approximation to the marginalised P\'olya urn that retains the exact merging behaviour; the current sampler is exact but its sequential sweep over $M$ columns limits the size of the dictionary that can be handled, and a mean-field or structured variational treatment would extend the method to the $p \sim 10^4$ regime that characterises modern transcriptomic and imaging applications. A third is the extension to dynamic and tensor factor models, where the same column-clustering mechanism could be applied across time or across modes, and where the question of which dictionary elements are redundant across regimes is of substantive interest. Finally, the sensitivity of the rank estimate to the slab scale, which the analysis of Section~\ref{sec:breast-tau} shows to be substantial---the posterior mode of $K$ moves from $8$ to $13$ to $25$ as $\tau_0^2$ decreases from $1.0$ to $0.3$ to $0.1$---indicates that a fully hierarchical treatment of $\tau_0^2$ would be worthwhile. We regard this last point as the most important open problem, because it speaks directly to the reliability of the rank estimate that the method is designed to produce.

The broader conclusion is that exact clustering of dictionary elements, rather than continuous shrinkage of them, is both a statistically effective and a theoretically principled route to parsimonious factor modelling. The Dirichlet process, used at the level of the column vectors and marginalised so that no truncation is required, provides that clustering without imposing an ordering on the columns and without the thresholding that continuous shrinkage priors require. The spike-and-slab base measure is what makes the construction coherent, by allowing an atom to be exactly zero and thereby reducing the effective dimension of the parameter space to the level that the optimal rate demands.

\section*{Acknowledgment}
We thank DeepSeek for assistance in preparing this manuscript.

\appendix
\section{Derivation of Full Conditional Distributions}
\label{app:derivations}

We provide the formal derivations of all conditional distributions used in the Gibbs sampler of Section~\ref{sec:computation}. Throughout, we assume the model \eqref{eq:model} with \(Y_k \in \R^p\), \(F \in \R^{p \times M}\), \(X_k \in \R^M\), \(\Psi = \diag(\psi_1,\dots,\psi_p)\), \(\Lambda = \diag(\lambda_1,\dots,\lambda_M)\), and the DP prior on the columns of \(F\) with base measure \(G_0\) given in \eqref{eq:base}. The auxiliary columns \(f_i\) are defined by \(f_i = f_{Z_i}^*\).

\subsection{The Likelihood Kernel for a Single Column \(f_i\)}
\label{app:likelihood_kernel}

Isolate the contribution of \(f_i\):
\[
F X_k = f_i X_{k,i} + F_{-i} X_{k,-i}.
\]
Define the residual excluding the \(i\)-th column:
\[
R_{k,-i} = Y_k - \mu - F_{-i} X_{k,-i}.
\]
Then the likelihood is proportional to
\[
\prod_{k=1}^n \exp\!\left( -\frac{1}{2} (R_{k,-i} - X_{k,i} f_i)^\top \Psi^{-1} (R_{k,-i} - X_{k,i} f_i) \right).
\]
Expanding and dropping terms independent of \(f_i\), we obtain
\[
\exp\!\left( -\frac{1}{2} f_i^\top S_i f_i + f_i^\top b_i \right),
\]
where
\[
S_i = \sum_{k=1}^n X_{k,i}^2 \Psi^{-1}, \qquad
b_i = \sum_{k=1}^n X_{k,i} \Psi^{-1} R_{k,-i}.
\]
Let \(m_i = S_i^{-1} b_i\). Completing the square yields
\[
-\frac{1}{2} f_i^\top S_i f_i + b_i^\top f_i
= -\frac{1}{2} (f_i - m_i)^\top S_i (f_i - m_i) + \frac{1}{2} m_i^\top S_i m_i.
\]
The last term is independent of \(f_i\), so the unnormalised likelihood kernel is
\[
L(f_i) = \exp\!\left( -\frac{1}{2} (f_i - m_i)^\top S_i (f_i - m_i) \right).
\]
The normalising constant of the Gaussian density \(\N(f_i \mid m_i, S_i^{-1})\) is
\[
C_i = (2\pi)^{-p/2} \det(S_i)^{1/2},
\]
so
\[
\log C_i = -\frac{p}{2}\log(2\pi) + \frac12 \sum_{r=1}^p \log S_{i,r}.
\]

\subsection{Full Conditional of the Atom \(f_j^*\)}

For cluster \(j\), define \(C_j = \{i: Z_i = j\}\), \(n_j = |C_j|\), and the aggregated score:
\[
T_{k,j} = \sum_{i \in C_j} X_{k,i}.
\]
Let
\[
R_{k,-j} = Y_k - \mu - \sum_{\ell \neq j} f_\ell^* T_{k,\ell}.
\]
The likelihood is:
\[
\mathcal{L}(f_j^*) \propto \exp\!\left( -\frac12 \sum_{k=1}^n (R_{k,-j} - f_j^* T_{k,j})^\top \Psi^{-1} (R_{k,-j} - f_j^* T_{k,j}) \right).
\]
Completing the square:
\[
\mathcal{L}(f_j^*) \propto \exp\!\left( -\frac12 (f_j^* - m_j)^\top S_j (f_j^* - m_j) \right),
\]
with
\[
S_j = \sum_{k=1}^n T_{k,j}^2 \Psi^{-1}, \qquad
m_j = S_j^{-1} \sum_{k=1}^n T_{k,j} \Psi^{-1} R_{k,-j}.
\]
Since \(\Psi\) and thus \(S_j\) are diagonal, the posterior factorizes coordinate-wise. For coordinate \(r\):
\[
p(f_{j,r}^* \mid \text{rest}) \propto \exp\!\left( -\frac12 S_{j,r} (f_{j,r}^* - m_{j,r})^2 \right)
\left[ \pi_0 \delta_0(f_{j,r}^*) + (1-\pi_0) \N(f_{j,r}^* \mid 0, \tau_{0,r}^2) \right].
\]
The spike probability is:
\[
p_{\text{spike}, j,r} = 
\frac{\pi_0 \N(m_{j,r} \mid 0, 1/S_{j,r})}
{\pi_0 \N(m_{j,r} \mid 0, 1/S_{j,r}) + (1-\pi_0) \N(m_{j,r} \mid 0, 1/S_{j,r} + \tau_{0,r}^2)}.
\]
Conditional on the slab, the posterior is normal with precision \(S_{j,r} + 1/\tau_{0,r}^2\) and mean \((S_{j,r} m_{j,r})/(S_{j,r} + 1/\tau_{0,r}^2)\).

\subsection{Full Conditional of the Cluster Assignment \(Z_i\)}

The posterior probability of \(Z_i = j\) is proportional to the CRP prior times the likelihood contribution of setting \(f_i = f_j^*\). Using the likelihood kernel from Appendix~\ref{app:likelihood_kernel}, for an existing cluster \(j\):
\[
\Pr(Z_i = j \mid \text{rest}) \propto n_{j,-i} \, C_i L(f_j^*),
\]
where \(n_{j,-i}\) is the number of columns in cluster \(j\) excluding column \(i\). For a new cluster:
\[
\Pr(Z_i = K+1 \mid \text{rest}) \propto \alpha I_i,
\]
where
\[
I_i = \int C_i L(f) \, G_0(df).
\]
Since \(G_0 = \pi_0 \delta_0 + (1-\pi_0) \N(0, \Omega_0)\):
\[
I_i = \pi_0 C_i L(0) + (1-\pi_0) C_i \int L(f) \N(f \mid 0, \Omega_0) df.
\]
Now:
\[
L(0) = \exp\!\left( -\frac12 m_i^\top S_i m_i \right),
\]
and
\[
\int L(f) \N(f \mid 0, \Omega_0) df
= \N(m_i \mid 0, S_i^{-1} + \Omega_0) \cdot (2\pi)^{p/2} \det(S_i)^{-1/2}.
\]
Thus:
\[
I_i = \pi_0 C_i \exp\!\left( -\frac12 m_i^\top S_i m_i \right)
      + (1-\pi_0) C_i \N(m_i \mid 0, S_i^{-1} + \Omega_0) (2\pi)^{p/2} \det(S_i)^{-1/2}.
\]

\subsection{Full Conditional of a New Column \(f_i\) under \(G_0\)}
\label{app:newdraw}

When a new cluster is chosen for column \(i\) (i.e., \(Z_i = K+1\)), we must draw \(f_i\) from its posterior under the base measure \(G_0\), conditional on the sufficient statistics \(S_i\) and \(m_i\). The unnormalised posterior is:
\[
p(f_i \mid \text{new}) \propto \N(f_i \mid m_i, S_i^{-1}) \left[ \pi_0 \delta_0(f_i) + (1-\pi_0) \N(f_i \mid 0, \Omega_0) \right].
\]
The normalising constant is:
\[
Z_{\text{norm}} = \pi_0 \N(m_i \mid 0, S_i^{-1}) + (1-\pi_0) \N(m_i \mid 0, S_i^{-1} + \Omega_0).
\]
Thus the normalised posterior is a two-component mixture:
\[
f_i \mid \text{new} \sim p_{\text{spike}} \delta_0 + (1-p_{\text{spike}}) \N(\mu_{\text{post}}, \Sigma_{\text{post}}),
\]
with
\[
p_{\text{spike}} = \frac{\pi_0 \N(m_i \mid 0, S_i^{-1})}
{\pi_0 \N(m_i \mid 0, S_i^{-1}) + (1-\pi_0) \N(m_i \mid 0, S_i^{-1} + \Omega_0)},
\]
\[
\Sigma_{\text{post}} = (S_i + \Omega_0^{-1})^{-1}, \qquad
\mu_{\text{post}} = \Sigma_{\text{post}} S_i m_i.
\]
Since \(S_i\) and \(\Omega_0\) are diagonal, this factorises coordinate-wise into independent spike-and-slab mixtures.

\subsection{Canonical Relabeling of Clusters}
\label{app:relabel}

After completing the sequential updates of all \(M\) auxiliary columns \(f_i\) (so that each is fully specified), we apply a deterministic relabeling based on the order of first appearance of the distinct column vectors.

Let \(\{g_1,\dots,g_{K_{\text{new}}}\}\) be the set of distinct vectors among the auxiliary columns \(f_1,\dots,f_M\), ordered by their first appearance:
\[
g_1 = f_1, \quad g_j = f_{i_j} \text{ where } i_j = \min\{i : f_i \notin \{g_1,\dots,g_{j-1}\}\} \text{ for } j \ge 2.
\]
Then define:
\[
S_i = j \quad \text{if and only if} \quad f_i = g_j.
\]

Equivalently, in algorithmic form:

\[
\begin{aligned}
&\text{Initialize empty dictionary } \mathcal{D} \text{ mapping vectors to labels, and empty list } G_{\text{star}}. \\
&\text{For } i = 1 \text{ to } M: \\
&\qquad \text{Let } f = f_i \text{ be the auxiliary column vector.} \\
&\qquad \text{If } f \notin \mathcal{D}: \\
&\qquad\qquad \mathcal{D}[f] \leftarrow |G_{\text{star}}| + 1, \quad G_{\text{star}} \leftarrow G_{\text{star}} \cup \{f\}. \\
&\qquad S_i \leftarrow \mathcal{D}[f]. \\
&\text{End For.}
\end{aligned}
\]

This ensures \(S_1 = 1\) and \(1 \le S_i \le \max(S_1,\dots,S_{i-1}) + 1\). After relabeling, we set \(Z_i = S_i\) for all \(i\), \(F^* = G_{\text{star}}\), and \(K = |G_{\text{star}}|\).

If two vectors are both exactly zero (spike component chosen for both), they are considered equal and merged into a single cluster. The effective number of factors is taken as the number of non-zero atoms after relabeling.

\subsection{Sampling Latent Scores \(X_k\) and Variances}

From the joint density of \((Y_k, X_k)\):
\[
p(Y_k, X_k) \propto \exp\!\left( -\frac12 (Y_k - \mu - F X_k)^\top \Psi^{-1} (Y_k - \mu - F X_k) - \frac12 X_k^\top \Lambda^{-1} X_k \right).
\]
Completing the square in \(X_k\) gives
\[
X_k \mid \text{rest} \sim \N_M\!\left( \Omega^{-1} F^\top \Psi^{-1} (Y_k - \mu),\; \Omega^{-1} \right),
\]
where \(\Omega = F^\top \Psi^{-1} F + \Lambda^{-1}\).

The variances are updated via conjugate inverse‑gamma distributions:
\[
\lambda_i \mid X \sim \IG\!\left( a_\lambda + \frac{n}{2},\; b_\lambda + \frac12 \sum_{k=1}^n X_{k,i}^2 \right),
\]
and
\[\psi_r \mid Y, F, X \sim \IG\!\left( a_\psi + \frac{n}{2},\; b_\psi + \frac12 \sum_{k=1}^n (Y_{k,r} - \mu_r - (F X_k)_r)^2 \right).
\]

\subsection{Update for the Concentration Parameter \(\alpha\)}

The EPPF \eqref{eq:eppf} gives the conditional likelihood of the partition:
\[
p(K \mid \alpha) \propto \alpha^K \frac{\Gamma(\alpha)}{\Gamma(\alpha+M)}.
\]
With a Gamma prior \(\alpha \sim \text{Ga}(a_\alpha, b_\alpha)\), the auxiliary variable method of \citet{escobar1995} proceeds as follows. Draw
\[
\eta \sim \text{Be}(\alpha+1, M).
\]
Then the conditional posterior of \(\alpha\) is a mixture:
\[
\alpha \mid K, \eta \sim \pi_\eta \text{Ga}(a_\alpha + K, b_\alpha - \log \eta)
+ (1-\pi_\eta) \text{Ga}(a_\alpha + K - 1, b_\alpha - \log \eta),
\]
where
\[
\pi_\eta = \frac{a_\alpha + K - 1}{a_\alpha + K - 1 + M(b_\alpha - \log \eta)}.
\]
This is the update implemented in the code.

% ----- BIBLIOGRAPHY -----
\bibliographystyle{plainnat}
\bibliography{references}
% ----- END BIBLIOGRAPHY -----

\end{document}